\documentclass[10pt,a4paper]{article}
\usepackage[utf8]{inputenc}
\usepackage{amsmath, amsthm, amssymb}
\usepackage{enumerate}
\usepackage[svgnames]{xcolor}
\usepackage[toc,page]{appendix}
\usepackage[a4paper]{geometry}
\usepackage{tikz-cd}
\bibliographystyle{JHEP}  

\setlength{\parskip}{1ex plus 0.5ex minus 0.2ex}
\pagestyle{plain}

\newtheorem{thm}{Theorem}[section]
\newtheorem{lem}[thm]{Lemma}
\newtheorem{tvrz}[thm]{Proposition}
\newtheorem{lemma}[thm]{Lemma}

\newtheorem{theorem}[thm]{Theorem}
\theoremstyle{definition}
\newtheorem{definice}[thm]{Definition}
\theoremstyle{remark}
\newtheorem{rem}[thm]{Remark}
\theoremstyle{definition}
\newtheorem{example}[thm]{Example}

\def\R{\mathbb{R}}
\def\C{\mathcal{C}}

\def\T{\mathcal{T}}
\def\<{\langle}
\def\>{\rangle}
\def\dal{\langle \! \langle}
\def\dar{\rangle \! \rangle}
\def\~{\widetilde}
\def\^{\wedge}
\def\ddt{\left. \frac{d}{dt}\right|_{t=0} \hspace{-0.5cm}}

\def\g{\mathfrak{g}}
\def\h{\mathfrak{h}}

\def\n{\mathfrak{n}}

\def\aso{\mathfrak{so}}
\def\ak{\mathfrak{k}}

\def\io{\mathit{i}}
\def\F{\mathcal{F}}
\def\G{\mathcal{G}}
\def\K{\mathcal{K}}
\def\B{\mathcal{B}}
\def\H{\mathcal{H}}
\def\A{\mathcal{A}}
\def\D{\mathcal{D}}
\def\M{\mathcal{M}}

\def\tr{\triangleright}

\def\tC{\tilde{C}}
\def\sC{\mathcal{CS}}

\def\gTM{\mathbb{T}M}

\def\gTP{\mathbb{T}P}

\def\gm{\mathbf{G}}

\def\RS{\mathcal{R}}
\def\fR{\mathbf{R}}

\def\fPsi{\mathbf{\Psi}}

\def\cD{\nabla}

\def\hcD{\widehat{\nabla}}
\def\hcDL{\widehat{\nabla}^{LC}}

\def\tcD{\widetilde{\nabla}}

\def\cDL{\nabla^{LC}}
\def\cDN{\nabla^{0}}

\def\cif{C^{\infty}(M)}

\newcommand{\bm}[4]{\begin{pmatrix} #1 & #2 \\ #3 & #4 \end{pmatrix}}

\newcommand{\vf}[1]{ \mathfrak{X}^{#1}(M)}
\newcommand{\df}[1]{ \Omega^{#1}(M)}
\newcommand{\vfP}[1]{ \mathfrak{X}^{#1}(P)}
\newcommand{\dfP}[1]{ \Omega^{#1}(P)}

\newcommand{\Li}[1]{ \mathcal{L}_{#1}}

\DeclareMathOperator{\BlockDiag}{BlockDiag}

\DeclareMathOperator{\vol}{vol}
\DeclareMathOperator{\End}{End}
\DeclareMathOperator{\gSO}{SO}
\DeclareMathOperator{\gE}{E}
\DeclareMathOperator{\rank}{rank}
\DeclareMathOperator{\Hom}{Hom}
\DeclareMathOperator{\LC}{LC}

\DeclareMathOperator{\Aut}{Aut}

\DeclareMathOperator{\Ric}{Ric}

\DeclareMathOperator{\Lie}{Lie}

\DeclareMathOperator{\Tr}{Tr}

\DeclareMathOperator{\Div}{div}
\DeclareMathOperator{\hRic}{\widehat{R}ic}

\begin{document}
\begin{flushright}
\today
\end{flushright}
\vspace{0.7cm}
\begin{center}

\baselineskip=13pt {\Large \bf{Kaluza-Klein Reduction of Low-Energy Effective Actions: Geometrical Approach}\\}
 \vskip0.5cm
 {\it Dedicated to our son on the occasion of his first birthday}  
 \vskip0.7cm
 {\large{ Jan Vysoký}}\\
 \vskip0.6cm
\textit{Institute of Mathematics of the Czech Academy of Sciences \\ Žitná 25, Prague 11567, Czech Republic, vysoky@math.cas.cz}\\
\vskip0.3cm
\end{center}

\begin{abstract}
Equations of motion of low-energy string effective actions can be conveniently described in terms of generalized geometry and Levi-Civita connections on Courant algebroids. This approach is used to propose and prove a suitable version of the Kaluza-Klein-like reduction. Necessary geometrical tools are recalled. 
\end{abstract}

{\textit{Keywords}: String low-energy effective actions, Kaluza-Klein reduction, Generalized geometry, Courant algebroids, Levi-Civita connections, Heterotic supergravity}. 
\section{Introduction}
This paper serves as a thorough extension of ideas sketched in the previous paper \cite{Jurco:2015bfs} written together with Branislav Jurčo. In particular, it turned out that instead of a direct relation between low-energy string effective actions, we should aim for relation of the respective equations of motion. Moreover, it proved necessary to use a different and technically better definition of the generalized Riemann tensor inspired by the one used in double field theory \cite{Hohm:2012mf}. Using those polished tools of generalized geometry, we are able to give a full reduction procedure of a low-energy string effective action. 

The original idea \cite{kaluza1921unitatsproblem} of Kaluza to obtain a unification of four-dimensional gravity with  electromagnetism via the reduction of five-dimensional gravity dates back to year 1919 (published in 1921). Five years later it was examined by Klein from the quantum point of view in \cite{klein1926quantentheorie}, thus becoming known as Kaluza-Klein (KK) theory (or reduction). Since then, KK theory played an important role throughout the development of any candidate for the grand unified theory, in particular in string theory. It is beyond the scope of this paper to give a full list of references. Instead, we recommend the historical review in \cite{Duff:1994tn} or a very  comprehensive one in \cite{Bailin:1987jd}.

In this paper, we attempt to view this from the perspective within the framework of generalized geometry as introduced by Hitchin \cite{Hitchin:2004ut} and further developed in \cite{Gualtieri:2003dx, 2005math......8618H,2006CMaPh.265..131H}. In particular, it turned out that one should examine the natural generalizations of Levi-Civita connections with respect to a generalized Riemannian metric. This idea is rather new, coming notably in the work of Garcia-Fernandez in \cite{2013arXiv1304.4294G, Garcia-Fernandez:2016azr, Garcia-Fernandez:2016ofz} and also in our papers with Braňo Jurčo in \cite{Jurco:2015xra, Jurco:2015bfs} and in detailed lecture notes \cite{Jurco:2016emw}. 

Let us now comment on the following. At first glance, the use of Courant algebroid connections to obtain the correspondence of equations of motion may seem unnecessary. Without any doubt, the equivalence of both theories can be obtained by hand. However, we believe that the approach taken here is novel and useful as it provides the guiding principle throughout the entire procedure. All assumptions are made by considering the natural restrictions of the involved structures given by requirements of symmetry. For example, straightforward reducibility impositions on the generalized metric on the larger spacetime provide the correct requirements on the fields of the low-energy string effective action. Moreover, by construction, one a priori knows that the two sets of equations of motion are \emph{somehow} related and all calculations are then basically just a verification. Last but not least, we believe that the framework here is quite universal and it would allow for future straightforward generalizations by considering more complicated examples of involved structures e.g. to include the fermionic sector of supergravity.

The basic idea is the following. As we aim for non-Abelian gauge theories, we consider a principal $G$-bundle $\pi: P \rightarrow M$ with a compact semisimple structure Lie group $G$. Equations of motion of the string low-energy effective action (or equivalently the bosonic part of type IIB supergravity) can be geometrically described in terms of the (possibly $H$-twisted) Dorfman bracket on the generalized tangent bundle $\gTP \equiv TP \oplus T^{\ast}P$ and the generalized metric $\gm$ corresponding to the metric $g$ and the Kalb-Ramond $2$-form $B$. Interestingly, the scalar function $\phi$ corresponding to the dilaton field originates from the freedom in the construction of a metric-compatible and torsion-free Courant algebroid connection on $\gTP$. 

Under certain conditions on the twisting $3$-form $H$, $\gTP$ becomes a so called equivariant Courant algebroid, allowing for its reduction to a Courant algebroid $E'$ above the base manifold $M$.  See \cite{Bursztyn2007726} and \cite{Baraglia:2013wua}. The resulting structure is called the heterotic Courant algebroid, and its relevance is due to the condition on the triviality of the first Pontryagin class. This is exactly the Green-Schwarz anomaly cancellation condition when the principal bundle $P$ is a fibre product of a Yang-Mills bundle and the spinor bundle on $M$. See \cite{Baraglia:2013wua, 2013arXiv1304.4294G}. Hence, the structure of a heterotic Courant algebroids can be used to naturally incorporate the corresponding $\alpha'$ correction. In particular, one can encode a theory resembling the bosonic part of heterotic supergravity in terms of the generalized Riemann tensor on $E'$. 

It is natural to examine generalized metrics $\gm$ on $\gTP$ which reduce naturally onto the generalized metrics $\gm'$ on the reduced heterotic Courant algebroid $E'$ and similarly  Levi-Civita connections $\cD$ with respect to $\gm$ that in some sense reduce to Levi-Civita connections $\cD'$ on $E$ with respect to $\gm'$. One could then expect the low-energy effective actions described by both connections to be related to each other. In this paper we find conditions under which they are completely (classically) equivalent. See Theorem \ref{thm_main}. Note that on the level of differential equations, this is was already observed in \cite{Baraglia:2013wua}. 

The heterotic effective actions, Green-Schwarz mechanism and the related $\alpha'$ correction have been extensively examined recently in the double field theory \cite{Bedoya:2014pma, Hohm:2014xsa, Hohm:2015mka, Hohm:2014eba, Marques:2015vua}. For a general review of double field theory, including discussion of effective action see \cite{Hohm:2013bwa, Aldazabal:2013sca}. Recall also the generalized geometry approach to $\alpha'$ corrections published in \cite{Coimbra:2014qaa}, and generalized connections with applications in DFT in \cite{Jeon:2010rw, Jeon:2011cn}. Using the Riemann-like tensors encoding the low-energy effective actions date back to Siegel \cite{Siegel:1993bj, Siegel:1993th}. In DFT, generalized Riemann tensors were examined extensively in \cite{Hohm:2010xe} and in \cite{Hohm:2011ex} for heterotic case. For more geometrical approach, see \cite{Hohm:2012mf}. Note that an important role of Courant algebroids and generalized geometry in supergravity was conjectured in the talk of Peter Bouwknegt \cite{talkbouwknegt}. There are many recent developments of similar ideas, see in particular the work of Coimbra, Strickland-Constable and Waldram in \cite{Coimbra:2011nw,Coimbra:2012af}. 

This paper is organized as follows:

In Section \ref{sec_KKR}, we establish the terminology, notation and we formulate the main theorem of this paper. This part does not require any knowledge of generalized geometry tools used in the proof of the statement. 

In Section \ref{sec_EOM}, we employ the variational principle to derive the equations of motion from the respective action functionals. Of course, as this is nothing new, an uninterested reader can skip to the end of this part and take note only of Theorem \ref{thm_EOM} to get acknowledged with the notation. 

Section \ref{sec_Courant} serves as a quick introduction to the theory of Courant algebroids and Levi-Civita connections. Our intention is to skip all unnecessary details and proofs, as we have already addressed this in detail in \cite{Jurco:2016emw}.  

The main goal of Section \ref{sec_EOMgeom} is to provide a detailed and explicit calculation leading to the description of equations of motion in terms of generalized Riemann tensor on the heterotic Courant algebroid $E'$. This should repay our debt from the previous paper \cite{Jurco:2015bfs} where we have hidden (mainly because they were much more complicated and crude at the time) all calculations from the reader. One can also skip to the resulting Theorem \ref{thm_eomheterotic}. 

In Section \ref{sec_reduction}, we quickly review the reduction of equivariant Courant algebroids, generalized metrics and Levi-Civita connections. Most importantly, in Subsection \ref{subsec_comparison} we derive a crucial relation of the quantities required to describe the equations of motion.  

Finally, we are able to provide a proof of the main theorem in Section \ref{sec_proof}. At this point, it is just a combination of already proved theorems of previous two sections. 

We conclude this paper by Section \ref{sec_analysis} where we relate our notation using the globally defined objects to the more conventional notation using the local connection and curvature forms. In particular, we find a direct relation to the bosonic content of heterotic supergravity for the particular choice of the principal bundle $P$. 
\section{Kaluza-Klein reduction} \label{sec_KKR}
Let us formulate the main theorem of this paper. Let $\pi: P \rightarrow M$ be a principal $G$-bundle with a semisimple and compact Lie group $G$, and let $\g = \Lie(G)$. Let $c = \<\cdot,\cdot\>_{\g}$ denote the corresponding negative-definite Killing form.

Let $B \in \dfP{2}$ be a $2$-form on $P$ usually called a \textbf{Kalb-Ramond field}. Let $H \in \dfP{3}$ be any $3$-form, and let $H' = H + dB$. 
Assume that $g$ is a metric on $P$. At this moment it can have any signature. Let $\phi \in C^{\infty}(P)$ be a scalar function called a \textbf{dilaton field}, and let $\Lambda \in \R$ be a real number called a \textbf{cosmological constant}.

Let $d\vol_{g}$ denote the volume form corresponding to $g$, and let $\<\cdot,\cdot\>_{g}$ be a fiber-wise scalar product of $p$-forms induced by the Hodge duality operator $\ast_{g}$, that is $\alpha \^ \ast_{g}(\beta) = \<\alpha,\beta\>_{g} \cdot d\vol_{g}$. Let $\RS(g)$ be the scalar curvature of the Levi-Civita connection corresponding to $g$. We can now define the following action functional: 
\begin{equation} \label{eq_S}
S[g,B,\phi] = \int_{P} e^{-2 \phi} \{ \RS(g) - \frac{1}{2} \<H',H'\>_{g} + 4 \< d\phi, d\phi\>_{g} - 2 \Lambda \} \cdot d\vol_{g}.
\end{equation}
We neglect any constants or string parameters $\alpha'$. We will derive the equations of motion for this action in the following section. $S$ can be thus viewed as a field theory for backgrounds $(g,B,\phi)$ living on the principal bundle total space $P$. 

On the other hand, let $B_{0} \in \df{2}$ be a $2$-form on the base manifold, and let $H_{0} \in \df{3}$ be any $3$-form. Let $g_{0}$ be a metric on $M$, and let $\vartheta \in \Omega^{1}(M,\g_{P})$ be a $1$-form on $M$ valued in the adjoint bundle $\g_{P}$. Let $\phi_{0} \in \cif$ be a scalar dilaton field on $M$, and let $\Lambda_{0} \in \R$ be a cosmological constant. Let $\dal \cdot , \cdot \dar$ be a fiber-wise scalar product induced by combination of $\<\cdot,\cdot\>_{g_{0}}$ and $\<\cdot,\cdot\>_{\g}$ on $\Omega^{p}(M,\g_{P})$, and let $D: \Omega^{\bullet}(M,\g_{P}) \rightarrow \Omega^{\bullet+1}(M,\g_{P})$ be an exterior covariant derivative induced by the fixed principal bundle connection $A \in \Omega^{1}(P,\g)$. Define $F' \in \Omega^{2}(M,\g_{P})$ and $H'_{0} \in \df{3}$:
\begin{equation} \label{eq_F'andH'0}
F' = F + D\vartheta + \frac{1}{2} [\vartheta \^ \vartheta]_{\g}, \; \; H'_{0} = H_{0} + dB_{0} - \frac{1}{2} \tC_{3}(\vartheta) - \< F \^ \vartheta \>_{\g}, 
\end{equation}
where $\tC_{3}(\vartheta) = \< D\vartheta \^ \vartheta \>_{\g} + \frac{1}{3} \<[\vartheta \^ \vartheta]_{\g} \^ \vartheta \>_{\g}$ is a "Chern-Simons like" $3$-form on $M$, and $F \in \Omega^{2}(M,\g_{P})$ is the curvature $2$-form of $A$. We propose a following action functional:
\begin{equation} \label{eq_S0}
S_{0}[g_{0},B_{0},\phi_{0},\vartheta] = \int_{M} e^{-2\phi_{0}} \{ \RS(g_{0}) + \frac{1}{2} \dal F', F' \dar - \frac{1}{2} \<H'_{0},H'_{0}\>_{g_{0}} + 4 \<d\phi_{0},d\phi_{0}\>_{g_{0}} - 2 \Lambda_{0} \} \cdot d\vol_{g_{0}}. 
\end{equation}
It is a partial goal of this paper to explain how to obtain exactly this combination of the dynamical fields $(g_{0},B_{0},\phi_{0},\vartheta)$. 

Recall that any principal bundle connection $A \in \Omega^{1}(P,\g)$ provides a decomposition $\mathfrak{X}_{G}(P) \cong \mathfrak{X}(M) \oplus \Gamma(\g_{P})$ of $G$-invariant vector fields, given by $X = V^{h} + j(\Phi)$, where $V^{h}$ denotes the horizontal lift of a vector field $V \in \vf{}$, and $j(\Phi)$ is for each $\Phi \in \Gamma(\g_{P}) = C^{\infty}_{Ad}(P,\g)$ defined as $\{ j(\Phi)\}(p) = \#\{ \Phi(p) \}$ for all $p \in P$, where $x \in \g \mapsto \#{x} \in \vfP{}$ denotes the infinitesimal Lie algebra action of $\g$ on $P$.

Having explained the notation of this paper, we may proceed to the main theorem of this work. Its proof will be provided in the following sections. 
\begin{theorem}[\textbf{Kaluza-Klein reduction of a low-energy effective action}] \label{thm_main}
Assume that $g$, $B$ and $\phi$ are $G$-invariant tensor fields on $P$. Let $A \in \Omega^{1}(P,\g)$ be a fixed principal bundle connection on $P$. In particular $\phi = \phi_{0} \circ \pi$ for $\phi_{0} \in \cif$, and $g,B$ can be written in block form with respect to the decomposition $TM \oplus \g_{P}$. Furthermore, assume that their formal block matrices have the form:
\begin{equation} \label{eq_gBrelevant}
g = \bm{1}{\vartheta^{T}}{0}{1} \bm{g_{0}}{0}{0}{-\frac{1}{2}c} \bm{1}{0}{\vartheta}{1} , \; \; B = \bm{B_{0}}{\frac{1}{2} \vartheta^{T}c}{- \frac{1}{2} c \vartheta}{0},
\end{equation}
where $g_{0}$ is a metric on $M$, $B_{0} \in \df{2}$, and $\vartheta \in \Omega^{1}(M,\g_{P})$. Let $\Lambda = \Lambda_{0} + \frac{1}{6} \dim{\g}$, and fix the $3$-form $H$ to be $H = \pi^{\ast}(H_{0}) + \frac{1}{2} CS_{3}(A)$, where $CS_{3}(A)$ is a Chern-Simons $3$-form corresponding to the connection $A$. 

Then $(g,B,\phi)$ satisfy the equations of motion given by the action functional (\ref{eq_S}), if and only if $(g_{0},B_{0},\phi_{0},\vartheta)$ satisfy the equations of motion given by the action function (\ref{eq_S0}). In other words, the field theories given by $S$ and $S_{0}$ are \textbf{classically equivalent}. 
\end{theorem}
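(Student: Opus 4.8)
The plan is to show that the equations of motion of each action functional are encoded by the vanishing of a generalized Ricci tensor and a generalized scalar curvature of a Levi-Civita connection on an appropriate Courant algebroid, and then to relate the two descriptions by a reduction procedure. The guiding principle is that the $G$-invariance and the block forms (\ref{eq_gBrelevant}) are precisely the data that make the total-space geometry descend to the base, so that the hypotheses of the theorem are not ad hoc but are forced by reducibility. A brute-force verification directly in (\ref{eq_S}) is also possible in principle, but the curvature bookkeeping of the Chern-Simons term makes it far less transparent.

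First I would recast $S[g,B,\phi]$ in generalized-geometry terms: build the generalized metric $\gm$ on $\gTP = TP \oplus T^{\ast}P$ from $g$ and $B$, twist the Dorfman bracket by $H' = H + dB$, and construct a metric-compatible torsion-free connection $\cD$ whose residual freedom is fixed by the dilaton $\phi$. The first task is then to prove that the Euler--Lagrange equations of $S$ are equivalent to the vanishing of the generalized Ricci tensor and generalized scalar curvature of $\cD$ (Theorem \ref{thm_EOM} together with its geometric reformulation, Theorem \ref{thm_eomheterotic}). I would carry out the identical program downstairs: exhibit the generalized metric $\gm'$ on the heterotic Courant algebroid $E'$ over $M$ determined by $(g_{0},B_{0},\vartheta)$, its Levi-Civita connection $\cD'$ fixed by $\phi_{0}$, and show that the equations of motion of $S_{0}$ are equivalent to the vanishing of the corresponding generalized curvatures of $\cD'$. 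The combinations $F'$ and $H'_{0}$ of (\ref{eq_F'andH'0}) should appear automatically out of the heterotic structure rather than being imposed by hand.

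Next I would verify that, under the hypotheses, the twisted Courant algebroid $\gTP$ is equivariant and reduces to $E'$. The twist $H = \pi^{\ast}(H_{0}) + \tfrac{1}{2}CS_{3}(A)$ is exactly what is needed for equivariance, the Chern-Simons term carrying the first Pontryagin (Green--Schwarz) contribution, while the off-diagonal blocks of $g$ and $B$ in (\ref{eq_gBrelevant}), written through $\vartheta$ and the Killing form $c$, are designed so that the reduced generalized metric $\gm$ descends to $\gm'$ and the connection $\cD$ descends to $\cD'$, with $\phi = \phi_{0}\circ\pi$ descending to $\phi_{0}$. This is the reduction step of Section \ref{sec_reduction}.

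The crucial and hardest step is the comparison of curvatures (Subsection \ref{subsec_comparison}): I must show that the generalized Ricci tensor and generalized scalar curvature of $\cD$ on $\gTP$, evaluated on invariant sections, agree with those of $\cD'$ on $E'$ up to the contribution of the internal (fiber) directions. Here the shift $\Lambda = \Lambda_{0} + \tfrac{1}{6}\dim\g$ arises, the internal curvature of the semisimple fiber producing a constant through the Killing form $c$. Once this identity is established, the conclusion is immediate: the generalized curvatures vanish on $P$ if and only if their reduced counterparts vanish on $M$, so the equations of motion of $S$ and $S_{0}$ are equivalent (Section \ref{sec_proof}). I expect the main obstacle to be controlling the cross-terms in this reduction and verifying that the reduced generalized scalar curvature reproduces exactly the combination $\tfrac{1}{2}\dal F',F'\dar - \tfrac{1}{2}\<H'_{0},H'_{0}\>_{g_{0}}$ with the correct signs and normalizations, which is where the precise forms of $F'$, $H'_{0}$, and the cosmological shift are pinned down.
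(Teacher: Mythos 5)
Your proposal follows essentially the same route as the paper's own proof: recast both sets of equations of motion as curvature conditions (Ricci compatibility and a scalar-curvature condition) for Levi-Civita connections on $\gTP$ and on the heterotic Courant algebroid $E'$ (Theorems \ref{thm_eomSG} and \ref{thm_eomheterotic}), reduce the equivariant structure on $\gTP$ to $E'$, and compare Ricci tensors and scalar curvatures on invariant sections, with the internal $\tfrac{1}{6}\dim\g$ contribution accounting for $\Lambda = \Lambda_{0} + \tfrac{1}{6}\dim\g$. The only cosmetic difference is directional: the paper constructs $\cD'$ on $E'$ explicitly and then \emph{lifts} it to a connection $\cD$ on $\gTP$ of the form (\ref{eq_conrelevant}) relevant for reduction (checking it satisfies the hypotheses of Theorem \ref{thm_eomSG}), rather than descending $\cD$ to $\cD'$, but this is the same correspondence read in the opposite direction.
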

\begin{rem}
We should now comment on the term "classical equivalence". The full theory on the spacetime $P$ is of course not completely equivalent to the one on the orbit space $M$. Therefore, by the classical equivalence of $S$ and $S_{0}$ we mean that equations of motion given by the extremalization of the functional $S$ \emph{together} with the equations restricting the background fields $(g,B,\phi)$ to be $G$-invariant and have the form (\ref{eq_gBrelevant}) are satisfied, if and only if the corresponding background fields $(g_{0},B_{0},\phi_{0},\vartheta)$ on $M$ extremalize the action functional $S_{0}$. 
\end{rem}
\section{Equations of motion} \label{sec_EOM}
In this section, we will derive the equations of motion coming from actions (\ref{eq_S}) and (\ref{eq_S0}). We include this section mainly to establish the notation. The variations of the respective actions are well-known results and can be skipped by any reader acquainted
by these classical results. We write this in the form of a sequence of lemmas and propositions, summarizing the statements in this section with Theorem \ref{thm_EOM}. 
\begin{lem}
Let $g$ be a metric on a manifold $M$, and fix a scalar function $\phi \in \cif$. Let 
\begin{equation}
S_{E}[g] = \int_{M} e^{-2\phi} \RS(g) \cdot d \vol_{g}. 
\end{equation}
Let $g' = g + \epsilon h$, where $h$ is any symmetric form on $M$ vanishing on $\partial M$ and $\epsilon \in \R$ is a real number small enough for $g'$ to be a metric on $M$. Then there holds an equation
\begin{equation}
S_{E}[g'] = S_{E}[g] - \epsilon \int_{M} h^{mn} G^{\phi}_{mn} \cdot d\vol_{g} + o(\epsilon^{2}) ,
\end{equation}
where $G^{\phi}$ is a \textbf{modified Einstein tensor} given by
\begin{equation}
\begin{split}
G^{\phi}(X,Y) = & \ e^{-2\phi}\{ \Ric^{LC}(X,Y) + (\cDL_{X}(d\phi))(Y) + (\cDL_{Y}(d\phi))(X) - 4  (X.\Phi) (Y.\Phi) \} \\
& + e^{-2\phi} \{ 4 \< d\phi,d\phi\>_{g} - 2 \Delta_{g}(\phi) - \frac{1}{2} \RS(g) \} \cdot g(X,Y),
\end{split}
\end{equation}
for all $X,Y \in \vf{}$, where $\cDL$ is the Levi-Civita connection corresponding to $g$, $\Ric^{LC}$ is its Ricci tensor and $\Delta_{g} = - \{ d \delta_{g} + \delta_{g} d \}$ is the Laplace-Bertrami operator. 
\end{lem}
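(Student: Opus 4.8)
The plan is to carry out the classical first-variation computation, the sole new ingredient being the careful treatment of the fixed weight $e^{-2\phi}$. Since $\phi$ is held fixed under the variation $g \mapsto g' = g + \epsilon h$, I would factor it out and expand $\delta(e^{-2\phi}\RS\,d\vol_{g}) = e^{-2\phi}\{(\delta\RS)\,d\vol_{g} + \RS\,\delta(d\vol_{g})\}$, using the standard identities $\delta(d\vol_{g}) = \tfrac{1}{2}g^{mn}(\delta g_{mn})\,d\vol_{g}$ and $\delta\RS = -R^{mn}\,\delta g_{mn} + g^{mn}\,\delta R_{mn}$, where $R_{mn}$ are the components of the Ricci tensor $\Ric^{LC}$ of $\cDL$. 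Writing $\delta g_{mn} = \epsilon h_{mn}$, the purely algebraic pieces assemble at once into the familiar combination $\epsilon\,e^{-2\phi}\bigl(-R^{mn} + \tfrac{1}{2}\RS\,g^{mn}\bigr)h_{mn}$.

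The crux is the Palatini term $g^{mn}\,\delta R_{mn}$, which is a total divergence $\nabla_{a}\bigl(g^{mn}\delta\Gamma^{a}_{mn} - g^{an}\delta\Gamma^{m}_{mn}\bigr)$ and, upon inserting $\delta\Gamma^{a}_{mn} = \tfrac{1}{2}g^{ab}(\nabla_{m}h_{bn} + \nabla_{n}h_{bm} - \nabla_{b}h_{mn})$, reduces to $\nabla^{m}\nabla^{n}h_{mn} - g^{ab}\nabla_{a}\nabla_{b}h$ with $h = g^{mn}h_{mn}$. In the ordinary Einstein--Hilbert variation this term is thrown away as a boundary integral; here it carries the non-constant factor $e^{-2\phi}$, so instead I would integrate by parts twice so as to transfer both derivatives onto $e^{-2\phi}$. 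Under the hypothesis that $h$ vanishes on $\partial M$ (discarding the induced boundary contributions, e.g.\ by taking $h$ compactly supported in the interior), this term yields $\int_{M}\bigl[\nabla^{m}\nabla^{n}(e^{-2\phi}) - g^{mn}\,g^{ab}\nabla_{a}\nabla_{b}(e^{-2\phi})\bigr]h_{mn}\,d\vol_{g}$.

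It then remains to differentiate the weight: a direct computation gives $\nabla_{m}\nabla_{n}(e^{-2\phi}) = e^{-2\phi}\bigl(4\,\phi_{m}\phi_{n} - 2(\cDL_{m}(d\phi))_{n}\bigr)$ with $\phi_{m} = \partial_{m}\phi$ and $(\cDL_{m}(d\phi))_{n}$ the Hessian of $\phi$, and tracing yields $g^{ab}\nabla_{a}\nabla_{b}(e^{-2\phi}) = e^{-2\phi}\bigl(4\<d\phi,d\phi\>_{g} - 2\Delta_{g}\phi\bigr)$, since $\Delta_{g}$ acts on the scalar $\phi$ as the trace of its Hessian. Substituting these, I would collect the total coefficient of $\epsilon\,h_{mn}$ and check that it equals precisely $-G^{\phi,mn}$, the index-raised modified Einstein tensor; recognizing $\phi_{m}\phi_{n} = (X.\phi)(Y.\phi)$ and $2(\cDL_{m}(d\phi))_{n} = (\cDL_{X}(d\phi))(Y) + (\cDL_{Y}(d\phi))(X)$ then reproduces the stated formula, and rewriting $G^{\phi,mn}h_{mn} = h^{mn}G^{\phi}_{mn}$ gives the claimed sign. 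The one genuinely delicate point—everything else being routine sign-tracking—is exactly this double integration by parts against the non-constant dilaton weight: it is what converts the otherwise-discarded Palatini boundary term into the bona fide bulk contributions $(\cDL_{X}(d\phi))(Y) + (\cDL_{Y}(d\phi))(X)$, $-4(X.\phi)(Y.\phi)$ and $-2\Delta_{g}\phi\cdot g$ that distinguish $G^{\phi}$ from the ordinary Einstein tensor.
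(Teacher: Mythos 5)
Your proof is correct and is exactly the classical computation the paper has in mind: the paper omits a proof of this lemma (remarking only that the delicate part is the non-constant dilaton), and that delicate part is precisely the double integration by parts against $e^{-2\phi}$ that you perform, which converts the otherwise-discarded Palatini divergence $\nabla^{m}\nabla^{n}h_{mn}-g^{ab}\nabla_{a}\nabla_{b}h$ into the Hessian, $d\phi\otimes d\phi$ and $\Delta_{g}\phi$ contributions distinguishing $G^{\phi}$ from the Einstein tensor. All identities and signs check out, including your reading $\Delta_{g}\phi=\mathrm{tr}_{g}\,\mathrm{Hess}\,\phi$, which is consistent with the paper's convention $\Delta_{g}=-\{d\delta_{g}+\delta_{g}d\}$, your use of the symmetry of the Hessian to produce $(\cDL_{X}(d\phi))(Y)+(\cDL_{Y}(d\phi))(X)$, and the observation that $\Phi$ in the statement is a typo for $\phi$.
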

For $\phi = 0$, one obtains an ordinary Einstein tensor. The most complicated part of the proof of this lemma in fact comes from the presence of the non-constant dilaton field $\phi$. To deal with the remaining terms in the action $S$ or $S_{0}$, one uses the following observation. This is again a standard result and we thus omit its proof:
\begin{lemma}
Let $\alpha \in \df{p}$, and let $g' = g + \epsilon h$ be two metrics as in the previous lemma. Then there holds an equation
\begin{equation}
\int_{M} \<\alpha,\alpha\>_{g'} \cdot d\vol_{g'} = \int_{M} \<\alpha,\alpha\>_{g} \cdot d\vol_{g} - \epsilon \int_{M} \frac{1}{2} h^{mn} t_{mn} \cdot d\vol_{g} + o(\epsilon^{2}), 
\end{equation}
where $t$ is a tensor defined as $t(X,Y) = 2 \< \io_{X}\alpha, \io_{Y}\alpha\>_{g} - \<\alpha,\alpha\>_{g} \cdot g(X,Y)$, for all $X,Y \in \vf{}$. The statement holds also for $\alpha \in \Omega^{p}(M,\g_{P})$ with $\<\cdot,\cdot\>_{g}$ replaced with $\dal \cdot,\cdot \dar$. 
\end{lemma}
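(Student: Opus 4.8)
The plan is to compute everything in local coordinates and track the two places where the metric enters the functional: the $p$ inverse metrics hidden in the fiber-wise product $\<\alpha,\alpha\>_{g}$ and the density $d\vol_{g} = \sqrt{|\det g|}\, dx^{1} \^ \dots \^ dx^{n}$. Writing $\alpha = \frac{1}{p!}\, \alpha_{i_{1} \dots i_{p}}\, dx^{i_{1}} \^ \dots \^ dx^{i_{p}}$, one has $\<\alpha,\alpha\>_{g} = \frac{1}{p!}\, g^{i_{1}j_{1}} \cdots g^{i_{p}j_{p}}\, \alpha_{i_{1} \dots i_{p}} \alpha_{j_{1} \dots j_{p}}$. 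From $g' = g + \epsilon h$ I would first record the two elementary first-order variations: differentiating $g^{ij} g_{jk} = \delta^{i}_{k}$ gives $(g')^{ij} = g^{ij} - \epsilon h^{ij} + o(\epsilon)$ with $h^{ij} = g^{ia}g^{jb}h_{ab}$, and Jacobi's formula $\delta \ln \det g = g^{ij}\delta g_{ij}$ gives $d\vol_{g'} = \big(1 + \tfrac{\epsilon}{2}\, g^{ij} h_{ij}\big)\, d\vol_{g} + o(\epsilon)$. No integration by parts, and hence no use of $h|_{\partial M} = 0$, is needed here, which makes this lemma considerably softer than the previous one.

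The central step is the variation of the scalar $\<\alpha,\alpha\>_{g}$. Since it contains $p$ identical factors of the inverse metric, the product rule produces $p$ terms, and the total antisymmetry of $\alpha$ makes all of them equal; collecting them cancels one factorial and yields
\begin{equation}
\delta \<\alpha,\alpha\>_{g} = -\epsilon\, \frac{1}{(p-1)!}\, h_{ab}\, g^{i_{2}j_{2}} \cdots g^{i_{p}j_{p}}\, \alpha^{a}{}_{i_{2} \dots i_{p}} \alpha^{b}{}_{j_{2} \dots j_{p}} = -\epsilon\, h^{mn}\, \<\io_{\partial_{m}}\alpha, \io_{\partial_{n}}\alpha\>_{g},
\end{equation}
where I have recognized the contracted expression as the all-lower-index form of $\<\io_{\partial_{m}}\alpha, \io_{\partial_{n}}\alpha\>_{g} = \frac{1}{(p-1)!}\, g^{i_{2}j_{2}} \cdots g^{i_{p}j_{p}}\, \alpha_{m i_{2} \dots i_{p}} \alpha_{n j_{2} \dots j_{p}}$ and used $h^{mn} = g^{ma}g^{nb}h_{ab}$.

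Finally I would multiply the expansions of $\<\alpha,\alpha\>_{g'}$ and $d\vol_{g'}$, keep only the terms linear in $\epsilon$, and integrate. The density variation contributes $\tfrac{1}{2}\, g^{ij}h_{ij}\, \<\alpha,\alpha\>_{g} = \tfrac{1}{2}\, h^{mn} g_{mn}\, \<\alpha,\alpha\>_{g}$, which combines with the term above to assemble exactly $-\tfrac{1}{2}\, h^{mn}\big( 2\<\io_{\partial_{m}}\alpha, \io_{\partial_{n}}\alpha\>_{g} - g_{mn}\<\alpha,\alpha\>_{g}\big) = -\tfrac{1}{2}\, h^{mn} t_{mn}$, which is the claimed formula. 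For the $\g_{P}$-valued case the identical manipulation applies: the extra contraction is carried out by the Killing form $c$ on the Lie-algebra indices, which is independent of $g$ and therefore untouched by the variation, so the result holds verbatim with $\<\cdot,\cdot\>_{g}$ replaced by $\dal \cdot,\cdot\dar$. The only genuinely delicate point is the combinatorial bookkeeping in the central step --- verifying that the $p$ inverse-metric variations contribute equally and that the $\tfrac{1}{(p-1)!}$ factor reorganizes into the interior-product form --- but this is a routine consequence of the antisymmetry of $\alpha$.
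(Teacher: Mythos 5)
Your proof is correct. The paper itself omits the proof of this lemma, declaring it a standard result, and your coordinate computation --- the first-order variation $(g')^{ij} = g^{ij} - \epsilon\, g^{ia}g^{jb}h_{ab} + o(\epsilon)$, Jacobi's formula for $d\vol_{g'}$, and the reassembly of the $p$ equal product-rule terms into $-\epsilon\, h^{mn}\<\io_{\partial_m}\alpha,\io_{\partial_n}\alpha\>_{g}$ --- is precisely that standard argument, including the correct observations that no integration by parts (hence no boundary condition on $h$) is needed and that the $g$-independence of the Killing form gives the $\g_{P}$-valued case verbatim.
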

Using these two lemmas, it is easy to prove the following two propositions:
\begin{tvrz} \label{tvrz_EOMg}
Let $g' = g + \epsilon h$, where $h$ is any symmetric form on $P$ vanishing on $\partial P$, and $\epsilon \in \R$ is a real number small enough for $g'$ to be a metric on $P$. Then for fixed $B \in \dfP{2}$ and $\phi \in C^{\infty}(P)$, one obtains 
\begin{equation}
S[g',B,\phi] = S[g,B,\phi] - \epsilon \int_{P} h^{mn} e^{-2\phi} \{ \beta(g)_{mn} - \frac{1}{2} \beta(\phi) g_{mn} \} \cdot d\vol_{g} + o(\epsilon^{2}),
\end{equation}
where $\beta(g)$ is a symmetric tensor field on $P$ defined by 
\begin{equation}
\beta(g)(X,Y) = \Ric^{LC}(X,Y) - \frac{1}{2}\<\io_{X}H',\io_{Y}H'\>_{g} + (\cDL_{X}(d\phi))(Y) + (\cDL_{Y}(d\phi))(X),
\end{equation}
for all $X,Y \in \vfP{}$, and $\beta(\phi)$ is a scalar field on $P$ defined by 
\begin{equation} \label{def_Bphi}
\beta(\phi) = \RS(g) - \frac{1}{2} \<H',H'\>_{g} + 4 \Delta_{g}(\phi) - 4 \<d\phi,d \phi\>_{g} - 2 \Lambda. 
\end{equation}
 Here $h^{mn}$ are components of $h$ with indices raised by $g$. 
\end{tvrz}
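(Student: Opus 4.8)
The plan is to vary the action $S[g,B,\phi]$ given by (\ref{eq_S}) with respect to the metric $g$, holding $B$ and $\phi$ fixed, and to assemble the result from the two previous lemmas applied term by term. Writing $g' = g + \epsilon h$, the action decomposes as a sum of four pieces: the Einstein-Hilbert-like term $e^{-2\phi}\RS(g)$, the kinetic term $-\frac{1}{2}\<H',H'\>_{g}$ for the twisted field strength, the dilaton kinetic term $4\<d\phi,d\phi\>_{g}$, and the cosmological term $-2\Lambda$, all integrated against $e^{-2\phi} d\vol_{g}$. Since $H' = H + dB$ depends only on $B$ (not on $g$), and $d\phi$ depends only on $\phi$, each of these is of the form treated in the preceding lemmas, so the strategy is purely to collect the variations and identify the coefficient of $h^{mn}$.

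First I would apply the first lemma directly to the Einstein term, which contributes the modified Einstein tensor $G^{\phi}$. Next I would apply the second lemma twice: once to $\alpha = H' \in \dfP{3}$ with the prefactor $-\frac{1}{2}$, producing a tensor built from $\<\io_X H',\io_Y H'\>_g$ and $\<H',H'\>_g\, g(X,Y)$; and once to $\alpha = d\phi \in \dfP{1}$ with prefactor $4$, producing $\<\io_X d\phi,\io_Y d\phi\>_g = (X.\phi)(Y.\phi)$ and $\<d\phi,d\phi\>_g\, g(X,Y)$ terms. The cosmological term $-2\Lambda$ varies only through $d\vol_{g}$, whose variation is $\frac{1}{2} h^{mn} g_{mn}\, d\vol_g$, contributing a pure trace term proportional to $g(X,Y)$. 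The slight subtlety is that the second lemma as stated varies both $\<\alpha,\alpha\>_g$ \emph{and} $d\vol_g$ together, whereas the first lemma already carries its own $e^{-2\phi}$ and volume-form variation; I would therefore make sure the $e^{-2\phi}$ weight and the single overall factor of $d\vol_g$ are accounted for exactly once, treating $e^{-2\phi}$ as a $g$-independent scalar multiplier that rides along with each term.

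With all four contributions in hand, I would group the coefficient of $-\epsilon\, h^{mn} e^{-2\phi}$ into two bundles according to whether a term is ``tensorial'' (carrying free indices from $\Ric^{LC}$, the Hessian of $\phi$, or $\io_X H'$) or ``scalar times $g_{mn}$''. The tensorial part should collapse exactly to $\beta(g)_{mn} = \Ric^{LC}_{mn} - \frac{1}{2}\<\io_m H',\io_n H'\>_g + (\cDL(d\phi))_{mn} + (\cDL(d\phi))_{nm}$, matching the definition in the statement. The scalar part, after combining the trace pieces from the second lemma, the $4\<d\phi,d\phi\>_g$ and $-2\Delta_g\phi$ coming out of $G^{\phi}$, and the $-2\Lambda$ from the volume variation, should reassemble into exactly $-\frac{1}{2}\beta(\phi)\, g_{mn}$ with $\beta(\phi)$ as in (\ref{def_Bphi}). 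Here I must be careful to convert the Hessian-trace $\cDL_X(d\phi)$ contributions into the Laplacian $\Delta_g(\phi)$ with the correct sign convention, since $G^{\phi}$ already packages $-2\Delta_g(\phi)$ inside its scalar part.

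The main obstacle I anticipate is purely bookkeeping: reconciling the sign and factor conventions between the two lemmas so that the $\<d\phi,d\phi\>_g$ terms (which appear with coefficient $4$ in both the action and inside $G^{\phi}$) and the Laplacian terms combine without double-counting, and verifying that the numerous trace terms proportional to $g_{mn}$ sum to precisely the combination $\RS(g) - \frac{1}{2}\<H',H'\>_g + 4\Delta_g(\phi) - 4\<d\phi,d\phi\>_g - 2\Lambda$. There is no genuine geometric difficulty, since both variational lemmas are already granted; the entire content is the careful additive assembly and the correct identification of which terms are tensorial and which are proportional to the metric.
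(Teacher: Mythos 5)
Your proposal is correct and follows exactly the route the paper intends (the paper omits the proof, saying only that it follows easily from the two lemmas): apply the first lemma to the $e^{-2\phi}\RS(g)$ term, the second lemma (with the $g$-independent weight $e^{-2\phi}$ riding along) to $\alpha = H'$ and $\alpha = d\phi$, and vary $d\vol_{g}$ for the cosmological term. The assembly works as you describe, with the $(X.\phi)(Y.\phi)$ term in $G^{\phi}$ cancelling against the tensorial part of the dilaton kinetic variation, leaving precisely $\beta(g)_{mn} - \frac{1}{2}\beta(\phi)g_{mn}$.
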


\begin{tvrz} \label{tvrz_EOMg0}
Let $g'_{0} = g_{0} + \epsilon h_{0}$, where $h_{0}$ is any symmetric form on $M$ vanishing on $\partial M$, and $\epsilon \in \R$ is a real number small enough for $g'_{0}$ to be a metric on $M$. Then for fixed $B_{0} \in \df{2}$, $\phi_{0} \in C^{\infty}(M)$ and $\vartheta \in \Omega^{1}(M,\g_{P})$, one obtains 
\begin{equation}
\begin{split}
S_{0}[g'_{0},B_{0},\phi_{0},\vartheta] = & \ S_{0}[g_{0},B_{0},\phi_{0},\vartheta] \\
 & - \epsilon \int_{M} h_{0}^{mn} e^{-2\phi_{0}} \{ \beta'(g_{0})_{mn} - \frac{1}{2} \beta'(\phi_{0}) (g_{0})_{mn} \} \cdot d\vol_{g_{0}} + o(\epsilon^{2}),
\end{split}
\end{equation}
where $\beta'(g_{0})$ is a symmetric tensor field on $M$ defined by 
\begin{equation}
\begin{split}
\beta'(g_{0})(X,Y) = & \ \Ric^{LC}_{0}(X,Y) + \frac{1}{2} \dal \io_{X}F', \io_{Y}F' \dar - \frac{1}{2}\<\io_{X}H'_{0},\io_{Y}H'_{0}\>_{g_{0}} \\
& + (\cD^{0}_{X}(d\phi_{0}))(Y) + (\cD^{0}_{Y}(d\phi_{0}))(X),
\end{split}
\end{equation}
for all $X,Y \in \vf{}$, and $\beta'(\phi_{0})$ is a scalar field on $M$ defined by 
\begin{equation}  \label{def_B'phi0}
\beta'(\phi_{0}) = \RS(g_{0}) + \frac{1}{2} \dal F',F' \dar - \frac{1}{2} \<H'_{0},H'_{0}\>_{g_{0}} + 4 \Delta_{g_{0}}(\phi_{0}) - 4 \<d\phi_{0},d \phi_{0}\>_{g_{0}} - 2\Lambda_{0}. 
\end{equation}
Here $h_{0}^{mn}$ are components of $h_{0}$ with indices raised by $g_{0}$. $\cD^{0}$ denotes the Levi-Civita connection corresponding to $g_{0}$, $\Ric^{LC}_{0}$ is the corresponding Ricci tensor, and its trace using $g_{0}$ is denoted as $\RS(g_{0})$. 
\end{tvrz}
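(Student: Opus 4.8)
The plan is to treat this as the direct analogue of Proposition~\ref{tvrz_EOMg}, transported from $P$ to the base $M$ and augmented by the single extra gauge term $\tfrac12\dal F',F'\dar$. First I would record that, since $B_0$, $\phi_0$ and $\vartheta$ are held fixed while only $g_0$ is varied, each of the forms appearing in~(\ref{eq_S0}) is independent of $g_0$: by~(\ref{eq_F'andH'0}) the form $F'$ is assembled from the fixed connection $A$ (through $F$ and $D$) and the fixed $\vartheta$, the form $H'_0$ from $B_0$, $\vartheta$ and $A$, and $d\phi_0$ from the fixed dilaton. The weight $e^{-2\phi_0}$ is likewise $g_0$-independent. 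This reduces the computation to a term-by-term variation of the five summands of $S_0$.

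Next I would apply the two Lemmas stated above. The Einstein--Hilbert-type term $\int_M e^{-2\phi_0}\RS(g_0)\,d\vol_{g_0}$ is handled verbatim by the modified-Einstein-tensor Lemma, contributing $-\epsilon\int_M h_0^{mn}G^{\phi_0}_{mn}\,d\vol_{g_0}$ with $G^{\phi_0}=e^{-2\phi_0}\{\cdots\}$, so the overall weight already matches the factored $e^{-2\phi_0}$ in the claim. For the three quadratic summands $\tfrac12\dal F',F'\dar$, $-\tfrac12\<H'_0,H'_0\>_{g_0}$ and $4\<d\phi_0,d\phi_0\>_{g_0}$ I would invoke the second Lemma, using its stated extension to $\g_P$-valued forms for the $F'$ term. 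Since that Lemma's variation is purely local (no integration by parts), the fixed factor $e^{-2\phi_0}$ passes through the pointwise density, and a summand with coefficient $\lambda$ and form $\alpha$ contributes $\tfrac{\lambda}{2}\,t^{\alpha}_{mn}$ to the integrand, where $t^{\alpha}(X,Y)=2\<\io_X\alpha,\io_Y\alpha\>_{g_0}-\<\alpha,\alpha\>_{g_0}g_0(X,Y)$ (with $\dal\cdot,\cdot\dar$ in place of $\<\cdot,\cdot\>_{g_0}$ for $\alpha=F'$). Finally the cosmological term varies only through $\delta\,d\vol_{g_0}=\tfrac{\epsilon}{2}\,h_0^{mn}(g_0)_{mn}\,d\vol_{g_0}$, contributing $\Lambda_0(g_0)_{mn}$ to the integrand.

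The decisive step is the bookkeeping that recombines these contributions into $\beta'(g_0)_{mn}-\tfrac12\beta'(\phi_0)(g_0)_{mn}$, and I expect the only genuinely delicate point to be the cancellation of the pure-gradient tensor. The modified Einstein tensor supplies a term $-4\,(X.\phi_0)(Y.\phi_0)$, whereas varying $4\<d\phi_0,d\phi_0\>_{g_0}$ yields $+4\,(X.\phi_0)(Y.\phi_0)$ through $t^{d\phi_0}$ (using $\io_X d\phi_0 = X.\phi_0$); the two cancel exactly, which is precisely why $\beta'(g_0)$ carries no explicit $d\phi_0\otimes d\phi_0$ piece. With that cancellation in hand, the remaining symmetric-tensor parts $\Ric^{LC}_0$, $\tfrac12\dal\io_X F',\io_Y F'\dar$, $-\tfrac12\<\io_X H'_0,\io_Y H'_0\>_{g_0}$ together with the two Hessian terms assemble into $\beta'(g_0)$, while the $g_0$-proportional remainders collect into $-\tfrac12\beta'(\phi_0)$ once the coefficients of $\RS(g_0)$, $\Delta_{g_0}(\phi_0)$, $\<d\phi_0,d\phi_0\>_{g_0}$, $\dal F',F'\dar$, $\<H'_0,H'_0\>_{g_0}$ and $\Lambda_0$ are matched against~(\ref{def_B'phi0}). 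As every individual variation is exactly the one already performed in Proposition~\ref{tvrz_EOMg}, I would present this last step as a short coefficient-matching computation rather than rederiving either Lemma.
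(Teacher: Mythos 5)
Your proposal is correct and takes essentially the same route as the paper: the paper gives no separate proof of this proposition, stating only that it follows from the two preceding lemmas, and that is precisely your term-by-term application of them (with the fixed weight $e^{-2\phi_{0}}$ passed through the pointwise variation of the quadratic terms, the volume-form variation handling the $\Lambda_{0}$ term, and the cancellation of the $d\phi_{0}\otimes d\phi_{0}$ contributions). Your coefficient bookkeeping reproduces exactly the stated $\beta'(g_{0})$ and $\beta'(\phi_{0})$, so nothing is missing.
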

We will now show that equations of motion for scalar dilation fields corresponds to the vanishing of the functions $\beta(\phi)$ and $\beta'(\phi')$ appearing in the previous two propositions. To do so, we will recall the following result of the Riemannian geometry:
\begin{lemma} 
Let $d$ be a de Rham differential, and let $\delta_{g}$ be a corresponding codiferential induced by $g$. Then under the integration sign, those operations are mutually adjoint:
\begin{equation} \label{eq_dcodadjoint}
\int_{M} \< d\alpha, \beta \>_{g} \cdot d\vol_{g} = \int_{M} \< \alpha, \delta_{g} \beta \>_{g} \cdot d\vol_{g},
\end{equation}
for all $\alpha \in \df{p}$ and $\beta \in \df{p+1}$, whenever either of the two forms vanishes on the boundary. Similarly, there holds the formula
\begin{equation} \label{eq_Dcodadjoint}
\int_{M} \dal D\alpha, \beta \dar \cdot d\vol_{g_{0}} = \int_{M} \dal \alpha, D^{\dagger}_{g_{0}}\beta \dar \cdot d\vol_{g_{0}},
\end{equation}
for all $\alpha \in \Omega^{p}(M,\g_{P})$, $\beta \in \Omega^{p+1}(M,\g_{P})$. Here $D$ is the exterior covariant derivative, and $D^{\dagger}_{g_{0}}$ the corresponding covariant codifferential, defined by the formula $D^{\dagger}_{g_{0}}(\alpha) = (-1)^{|\alpha|} \{ \ast_{g_{0}}^{-1} D \ast_{g_{0}}\}(\alpha)$. Again, either of the forms has to vanish on $\partial M$. 
\end{lemma}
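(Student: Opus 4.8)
The plan is to derive both identities from Stokes' theorem applied to a suitable exact top-minus-one form, exactly as one does for the classical Riemannian codifferential. The starting point is the defining property of the Hodge operator recalled in the preamble, $\alpha \^ \ast_{g}(\beta) = \<\alpha,\beta\>_{g} \cdot d\vol_{g}$, valid for forms of equal degree.

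For the first claim, I would take $\alpha \in \df{p}$ and $\beta \in \df{p+1}$ and consider the form $\alpha \^ \ast_{g}\beta$ (of degree $\dim M - 1$). Expanding its differential by the graded Leibniz rule gives $d(\alpha \^ \ast_{g}\beta) = d\alpha \^ \ast_{g}\beta + (-1)^{p} \alpha \^ d(\ast_{g}\beta)$. The first term equals $\<d\alpha,\beta\>_{g}\cdot d\vol_{g}$, since $d\alpha$ and $\beta$ both have degree $p+1$. For the second term I would invoke the (signature-independent) definition $\delta_{g}\beta = (-1)^{|\beta|}\ast_{g}^{-1} d \ast_{g}\beta$ to rewrite $d(\ast_{g}\beta) = (-1)^{p+1}\ast_{g}(\delta_{g}\beta)$, so that $(-1)^{p}\alpha \^ d(\ast_{g}\beta) = (-1)^{2p+1}\<\alpha,\delta_{g}\beta\>_{g}\cdot d\vol_{g} = -\<\alpha,\delta_{g}\beta\>_{g}\cdot d\vol_{g}$. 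Hence $d(\alpha \^ \ast_{g}\beta) = \{\<d\alpha,\beta\>_{g} - \<\alpha,\delta_{g}\beta\>_{g}\}\cdot d\vol_{g}$. Integrating over $M$ and applying Stokes' theorem, the left-hand side reduces to the boundary integral $\int_{\partial M}\alpha \^ \ast_{g}\beta$, which vanishes as soon as either $\alpha$ or $\beta$ is zero on $\partial M$; this yields (\ref{eq_dcodadjoint}).

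For the second claim the computation is formally identical once I replace the ordinary wedge product by the wedge-and-pair operation $\dal \cdot \^ \cdot \dar$ built from $\^$ and the Killing form $c$, the de Rham differential $d$ by the exterior covariant derivative $D$, and $\delta_{g}$ by $D^{\dagger}_{g_{0}}$. The one genuinely new ingredient --- and the step I expect to carry the conceptual weight --- is the graded Leibniz rule for $D$ against this paired product, namely $d\dal\alpha \^ \ast_{g_{0}}\beta\dar = \dal D\alpha \^ \ast_{g_{0}}\beta\dar + (-1)^{|\alpha|}\dal\alpha \^ D(\ast_{g_{0}}\beta)\dar$. This is precisely where one must use that the connection induced on $\g_{P}$ preserves the fibre metric $c$, equivalently the $\mathrm{Ad}$-invariance of the Killing form: it guarantees that the scalar-valued form $\dal\alpha \^ \beta\dar$ obeys the ordinary Leibniz rule with each $\g_{P}$-valued factor differentiated by $D$.

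Granting this covariant Leibniz rule, the definition $D^{\dagger}_{g_{0}}(\alpha) = (-1)^{|\alpha|}\ast_{g_{0}}^{-1} D \ast_{g_{0}}(\alpha)$ produces exactly the same sign cancellation as in the scalar case, and a final application of Stokes' theorem under the boundary hypothesis gives (\ref{eq_Dcodadjoint}). The only subtlety throughout is the sign bookkeeping, which --- as the computation above shows --- collapses cleanly; apart from that, everything reduces to the Hodge identity, the Leibniz rule, and Stokes' theorem.
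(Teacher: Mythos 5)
Your proof is correct: the sign bookkeeping with the convention $\delta_{g}\beta = (-1)^{|\beta|}\ast_{g}^{-1}d\ast_{g}\beta$ (matching the paper's definition of $D^{\dagger}_{g_{0}}$) works out exactly as you claim, and the covariant case indeed reduces to the scalar one once the Leibniz rule $d\<\alpha\^\beta\>_{\g} = \<D\alpha\^\beta\>_{\g} + (-1)^{|\alpha|}\<\alpha\^ D\beta\>_{\g}$ is granted, which the paper itself records and which follows from ad-invariance of the Killing form. The paper states this lemma without proof as a standard result of Riemannian geometry, and your Stokes-theorem argument is precisely the standard proof it has in mind.
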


We can use this lemma to prove the proposition examining the dilatonic equations of motion.
\begin{tvrz} \label{tvrz_EOMphi}
Let $\phi' = \phi + \epsilon \nu$, where $\nu \in C^{\infty}(P)$ is any scalar function vanishing on $\partial P$, and $\epsilon \in \R$. Then for fixed metric $g$ and $B \in \dfP{2}$, one has 
\begin{equation} \label{eq_Svarinphi}
S[g,B,\phi'] = S[g,B,\phi] - 2 \epsilon \int_{P} \{e^{-2\phi} \beta(\phi) \cdot \nu \}\cdot d\vol_{g} + o(\epsilon^{2}),
\end{equation}
where $\beta(\phi)$ is a scalar function defined by (\ref{def_Bphi}). 

Similarly, let $\phi'_{0} = \phi_{0} + \epsilon \nu_{0}$, where $\nu_{0} \in \cif$ is any scalar function vanishing on $\partial M$, and $\epsilon \in \R$. The for fixed metric $g_{0}$, $B_{0} \in \df{2}$ and $\vartheta \in \Omega^{1}(M,\g_{P})$, one has 
\begin{equation}
S_{0}[g_{0},B_{0},\phi'_{0},\vartheta] = S_{0}[g_{0},B_{0},\phi_{0},\vartheta] - 2 \epsilon \int_{M} \{ e^{-2\phi_{0}} \beta'(\phi_{0}) \} \cdot \nu_{0} \cdot d\vol_{g_{0}} + 
o(\epsilon^{2}),
\end{equation}
where $\beta'(\phi')$ is a scalar function defined by (\ref{def_B'phi0}). 
\end{tvrz}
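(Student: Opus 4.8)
The plan is to compute the first-order variation of the action functionals under the shift $\phi \mapsto \phi' = \phi + \epsilon \nu$ directly, isolating the coefficient of $\epsilon$ and showing it equals $-2 e^{-2\phi}\beta(\phi) \cdot \nu$ after an integration by parts. Since $g$ and $B$ (hence $H' = H + dB$) are held fixed, the only $\phi$-dependence in the integrand of (\ref{eq_S}) sits in the overall factor $e^{-2\phi}$ and in the kinetic term $4\<d\phi,d\phi\>_g$; the volume form $d\vol_g$ and the terms $\RS(g)$, $\<H',H'\>_g$, $2\Lambda$ are all inert. First I would expand $e^{-2\phi'} = e^{-2\phi}(1 - 2\epsilon\nu + o(\epsilon))$ and $d\phi' = d\phi + \epsilon\, d\nu$, so that $\<d\phi',d\phi'\>_g = \<d\phi,d\phi\>_g + 2\epsilon\<d\phi,d\nu\>_g + o(\epsilon)$.

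Collecting the $O(\epsilon)$ terms, there are two contributions: one from varying the exponential prefactor, giving $-2\nu\, e^{-2\phi}\{\RS(g) - \tfrac12\<H',H'\>_g + 4\<d\phi,d\phi\>_g - 2\Lambda\}$, and one from varying the kinetic term, giving $e^{-2\phi}\cdot 8\<d\phi,d\nu\>_g$. The crucial manipulation is to handle this second term: I would rewrite $8\, e^{-2\phi}\<d\phi,d\nu\>_g = 8\<d(e^{-2\phi}\phi)\ldots$ — more cleanly, I would use the adjointness relation (\ref{eq_dcodadjoint}) from the preceding lemma to move $d$ off of $\nu$. Writing $8\,e^{-2\phi}\<d\phi,d\nu\>_g = 8\<(e^{-2\phi}d\phi), d\nu\>_g$ (absorbing the scalar $e^{-2\phi}$ into the $p$-form $d\phi$), the adjointness gives $\int_P 8\<(e^{-2\phi}d\phi),d\nu\>_g\, d\vol_g = \int_P 8\<\delta_g(e^{-2\phi}d\phi),\nu\>_g\, d\vol_g$, valid because $\nu$ vanishes on $\partial P$.

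The remaining task is to expand $\delta_g(e^{-2\phi}d\phi)$ and assemble it with the prefactor term into the claimed form. Using the Leibniz-type rule for the codifferential acting on a function times a one-form, together with $\delta_g(d\phi) = -\Delta_g(\phi)$ from the definition $\Delta_g = -(d\delta_g + \delta_g d)$ restricted to functions, I would get $\delta_g(e^{-2\phi}d\phi) = e^{-2\phi}\delta_g(d\phi) - \<d(e^{-2\phi}),d\phi\>_g$-type contraction $= e^{-2\phi}(-\Delta_g\phi) + 2e^{-2\phi}\<d\phi,d\phi\>_g$. Substituting, the kinetic contribution becomes $8\,e^{-2\phi}\{-\Delta_g\phi + 2\<d\phi,d\phi\>_g\}\nu$. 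Adding this to the prefactor contribution $-2\nu\,e^{-2\phi}\{\RS(g) - \tfrac12\<H',H'\>_g + 4\<d\phi,d\phi\>_g - 2\Lambda\}$ and factoring out $-2\,e^{-2\phi}\nu$, the $\<d\phi,d\phi\>_g$ terms combine as $4\<d\phi,d\phi\>_g$ (from $-2\cdot 4 + (-2)\cdot(-8) = -8 + 16$, giving $+16$, i.e. $-2\cdot(-8)$) and the $\Delta_g\phi$ term contributes $+4\Delta_g\phi$, exactly reproducing $\beta(\phi)$ as defined in (\ref{def_Bphi}). I expect the main obstacle to be purely computational bookkeeping: getting the signs and numerical coefficients in the expansion of $\delta_g(e^{-2\phi}d\phi)$ correct, since the interplay between the exponential derivative factor of $-2$ and the codifferential's own sign conventions is where errors typically creep in. The second statement, for $S_0$, follows by an identical argument — the extra terms $\tfrac12\dal F',F'\dar$ and the gauge-field data are all $\phi_0$-independent, so they ride along inside $\beta'(\phi_0)$ exactly as $\RS(g)$ and $\<H',H'\>_g$ do in the $P$ case, and (\ref{eq_dcodadjoint}) applies verbatim on $M$.
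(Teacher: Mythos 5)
Your proposal is correct and takes essentially the same route as the paper's own proof: expand $e^{-2\phi'} = e^{-2\phi}(1-2\epsilon\nu) + o(\epsilon^{2})$ and the kinetic term, shift $d$ off of $\nu$ using the adjointness relation (\ref{eq_dcodadjoint}), and expand $\delta_{g}(e^{-2\phi}d\phi) = e^{-2\phi}\{2\<d\phi,d\phi\>_{g} - \Delta_{g}(\phi)\}$ to reassemble $\beta(\phi)$ --- indeed your coefficient $8$ on the cross term is the correct one (the paper's intermediate display writes $2\epsilon$ and drops $-2\Lambda$, evidently typos, since only the coefficient $8$ and the retained $\Lambda$-term reproduce (\ref{def_Bphi})). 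The only blemish is the garbled arithmetic parenthetical ($-8+16=+8$, i.e.\ $-2\cdot(-4)$, not $+16$), but your substantive expressions give the correct $-4\<d\phi,d\phi\>_{g}$ and $+4\Delta_{g}(\phi)$ terms in $\beta(\phi)$, so the conclusion stands.
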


\begin{proof}
We will prove only the first statement. Clearly $e^{-2\phi'} = e^{-2\phi}(1 - 2 \epsilon \nu) + o(\epsilon^{2})$. It is thus easy to see that we find
\begin{equation}
\begin{split}
S[g,B,\phi'] = S[g,B,\phi] & - 2 \epsilon \int_{P} e^{-2\phi} \{ \RS(g) - \frac{1}{2} \<H',H'\>_{g} + 4 \< d\phi, d\phi\>_{g} \} \cdot \nu \cdot d\vol_{g} \\
& + 2 \epsilon \int_{P} \<d\nu, e^{-2\phi} d\phi \>_{g} \cdot d\vol_{g} 
\end{split}
\end{equation}
We only have to treat to the last term. For this, we use the preceding lemma. Under the integral, one thus obtains the expression $\delta_{g}( e^{-2\phi} d\phi) \cdot \nu = e^{-2\phi} \{ 2 \<d\phi,d\phi\>_{g} - \Delta_{g}(\phi) \} \cdot \nu $. By plugging this back, we obtain precisely the expression (\ref{eq_Svarinphi}). 
\end{proof}
Using the equation (\ref{eq_dcodadjoint}), it is straightforward to find the equations of motion for the Kalb-Ramond fields. We formulate this as a proposition. 

\begin{tvrz} \label{tvrz_KRvariation}
Let $B' = B + \epsilon C$, where $C \in \dfP{2}$ is any $2$-form vanishing on $\partial P$, and $\epsilon \in \R$. Then for a fixed metric $g$ and $\phi \in C^{\infty}(P)$, one has 
\begin{equation}
S[g,B',\phi] = S[g,B,\phi] - 2 \epsilon \int_{M} e^{-2\phi} \< \beta(B), C \>_{g} \cdot d\vol_{g} + o(\epsilon^{2}),
\end{equation}
where $\beta(B) \in \dfP{2}$ is a $2$-form defined as $\beta(B) = \frac{1}{2} e^{2\phi} \delta_{g}( e^{-2\phi} H')$. 

Similarly, let $B'_{0} = B_{0} + \epsilon C_{0}$, where $C_{0} \in \Omega^{2}(P)$ is any $2$-form vanishing on $\partial M$, and $\epsilon \in \R$. Then for a fixed metric $g_{0}$, $\phi_{0} \in \cif$ and $\vartheta \in \Omega^{1}(M,\g_{P})$, one obtains 
\begin{equation}
S_{0}[g_{0},B'_{0},\phi_{0},\vartheta] = S_{0}[g_{0},B_{0},\phi_{0},\vartheta] - 2 \epsilon \int_{M} e^{-2\phi_{0}} \<\beta'(B_{0}),C_{0}\>_{g_{0}} \cdot d\vol_{g_{0}} + o(\epsilon^{2}),
\end{equation}
where $\beta'(B_{0}) \in \df{2}$ is a $2$-form defined as $\beta'(B_{0}) = \frac{1}{2} e^{2\phi_{0}} \delta_{g_{0}}(e^{-2\phi_{0}} H'_{0})$. 
\end{tvrz}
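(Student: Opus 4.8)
The plan is to notice that $B$ enters the action (\ref{eq_S}) only through the combination $H' = H + dB$, and then only via the single term $-\frac{1}{2}\<H',H'\>_{g}$: the curvature term $\RS(g)$, the dilaton kinetic term $\<d\phi,d\phi\>_{g}$, the weight $e^{-2\phi}$ and the volume form $d\vol_{g}$ are all manifestly independent of $B$. Since $H$ is held fixed, the substitution $B \mapsto B' = B + \epsilon C$ produces $H' \mapsto H' + \epsilon\, dC$, so I would first expand $\<H' + \epsilon\, dC,\, H' + \epsilon\, dC\>_{g} = \<H',H'\>_{g} + 2\epsilon \<H',dC\>_{g} + \epsilon^{2}\<dC,dC\>_{g}$. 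Absorbing the weight into the pairing, the first-order part of $S[g,B',\phi] - S[g,B,\phi]$ is then $-\epsilon \int_{P} \<e^{-2\phi}H',\, dC\>_{g}\, d\vol_{g}$.

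Next I would integrate by parts using the adjointness identity (\ref{eq_dcodadjoint}) with $\alpha = C \in \dfP{2}$ and $\beta = e^{-2\phi}H' \in \dfP{3}$; the boundary contribution drops because $C$ vanishes on $\partial P$. This rewrites the integrand as $\<C,\, \delta_{g}(e^{-2\phi}H')\>_{g}$. Finally, reading off $\delta_{g}(e^{-2\phi}H') = 2 e^{-2\phi} \beta(B)$ from the stated definition $\beta(B) = \frac{1}{2} e^{2\phi}\delta_{g}(e^{-2\phi}H')$ and using the symmetry of the fiber-wise inner product, the linear term collapses to $-2\epsilon \int_{P} e^{-2\phi}\<\beta(B),C\>_{g}\, d\vol_{g}$, which is the asserted formula. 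The second statement follows by an identical argument on $M$: only $-\frac{1}{2}\<H'_{0},H'_{0}\>_{g_{0}}$ depends on $B_{0}$ (note that $F'$ is assembled from $F$, $D\vartheta$ and $[\vartheta \^ \vartheta]_{\g}$ and carries no $B_{0}$-dependence), one has $H'_{0} \mapsto H'_{0} + \epsilon\, dC_{0}$, and applying (\ref{eq_dcodadjoint}) with $\delta_{g_{0}}$ recovers $\beta'(B_{0})$.

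There is no real obstacle in this computation; it is a routine first variation. The only point requiring care is the bookkeeping of numerical and exponential factors: the $\frac{1}{2}$ multiplying $\<H',H'\>_{g}$, the $2$ generated by the cross term in the quadratic expansion, and the transfer of the weight $e^{-2\phi}$ through the integration by parts must combine to produce exactly the normalisation in the definition of $\beta(B)$. Keeping $e^{-2\phi}$ inside the pairing (rather than differentiating it separately) is what makes the codifferential $\delta_{g}$ act on the full $3$-form $e^{-2\phi}H'$, and this is the step one should not rush.
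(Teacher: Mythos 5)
Your proposal is correct and is essentially the paper's own proof: the paper likewise observes that the variation $B \mapsto B + \epsilon C$ shifts $H'$ by $\epsilon\, dC$ and then invokes the adjointness identity (\ref{eq_dcodadjoint}) to produce $\delta_{g}(e^{-2\phi}H') = 2e^{-2\phi}\beta(B)$, with the identical argument on $M$ for $S_{0}$. Your write-up just makes explicit the bookkeeping of the factor $\tfrac{1}{2}$, the cross term, and the weight $e^{-2\phi}$ that the paper leaves to the reader.
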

\begin{proof}
The proof follows from the fact that $H' = H + \epsilon \cdot dC$, combined with (\ref{eq_dcodadjoint}). The same argument applies for the equation regarding $S_{0}$. 
\end{proof}
To complete this section, we have to derive the variation of the action $S_{0}$ in the dynamical variable $\vartheta \in \Omega^{1}(M,\g_{P})$. To do so, first observe that such field can be used to define a new principal connection $A' \in \Omega^{1}(P,\g)$ on $P$. Let $j: \Gamma(\g_{P}) \rightarrow \mathfrak{X}_{G}(P)$ be an inclusion of sections of the adjoint bundle into the $\cif$-module of $G$-invariant vector fields on $P$. To each $X \in \vf{}$, define its horizontal lift $X'^{h} \in \mathfrak{X}_{G}(P)$ corresponding to $A'$ as 
\begin{equation} \label{eq_horliftchange}
X'^{h} = X^{h} - j(\vartheta(X)),
\end{equation}
where $X^{h}$ is a horizontal lift of $X$ corresponding to the original connection $A$. Let $D'$ be the exterior covariant derivative corresponding to $A'$. One finds its relation to $D$ in the form
\begin{equation} \label{eq_extcovprime}
D'\omega = D\omega + [\vartheta \^ \omega ]_{\g},
\end{equation}
for all $\omega \in \Omega^{p}(M,\g_{P})$. Let $F' \in \Omega^{2}(M,\g_{P})$ be the curvature $2$-form corresponding to $A'$. Then
\begin{equation} \label{eq_F'asF}
F' = F + D\vartheta + \frac{1}{2}[\vartheta \^ \vartheta]_{\g}. 
\end{equation}
But according to (\ref{eq_F'andH'0}), this is exactly the same $2$-form as $F'$ in the action $S_{0}$. 
\begin{tvrz} \label{tvrz_EOMtheta}
Let $\vartheta' = \vartheta + \epsilon \eta$, where $\eta \in \Omega^{1}(M,\g_{P})$ is any $2$-form vanishing on $\partial M$, and $\epsilon \in \R$. Then for a fixed metric $g_{0}$, $B_{0} \in \df{2}$ and $\phi_{0} \in \cif$, one has 
\begin{equation} \label{eq_varintheta}
S_{0}[g_{0},B_{0},\phi_{0},\vartheta'] = S[g_{0},B_{0},\phi_{0},\vartheta] + 2 \epsilon \int_{M} e^{-2\phi_{0}} \{ \dal \beta'_{\vartheta}, \eta \dar - \frac{1}{2} \< \beta'_{B_{0}}, \vartheta \^ \eta \>_{\g} \>_{g} \} \cdot d\vol_{g_{0}},
\end{equation}
where $\beta'_{\vartheta} \in \Omega^{1}(M,\g_{P})$ is a $\g_{P}$-valued $1$-form on $M$ defined by 
\begin{equation}
\beta'_{\vartheta} = \frac{1}{2} \{ e^{2\phi_{0}} D'^{\dagger}(e^{-2\phi_{0}} F') + \< \io_{e_{k}}H'_{0}, F'\>_{g_{0}} \cdot e^{k} \},
\end{equation}
where $\{ e_{k} \}_{k=1}^{\dim{M}}$ is any local frame on $M$. Here $D'^{\dagger}$ is a covariant codiferential corresponding to the exterior covariant derivative $D'$ defined by (\ref{eq_extcovprime}). $\beta_{B_{0}} \in \df{2}$ is defined in Proposition \ref{tvrz_KRvariation}. 
\end{tvrz}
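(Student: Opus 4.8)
The plan is a direct first-order variation of $S_0$ in $\vartheta$, using that among the terms of (\ref{eq_S0}) only $F'$ and $H'_0$ depend on $\vartheta$. First I would vary the curvature. Since $\vartheta$ and $\eta$ are both $\g_P$-valued $1$-forms, graded antisymmetry of the bracket gives $[\eta \^ \vartheta]_{\g} = [\vartheta \^ \eta]_{\g}$, so differentiating $F' = F + D\vartheta + \frac12[\vartheta\^\vartheta]_\g$ yields
\begin{equation}
F'[\vartheta'] = F'[\vartheta] + \epsilon(D\eta + [\vartheta\^\eta]_\g) + o(\epsilon^2) = F'[\vartheta] + \epsilon D'\eta + o(\epsilon^2),
\end{equation}
where $D'$ is the exterior covariant derivative (\ref{eq_extcovprime}). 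This already signals that the variation of $\frac12\dal F',F'\dar$ will be governed by the covariant codifferential $D'^{\dagger}$.

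Next I would compute $\delta H'_0$, the genuinely delicate step. The only $\vartheta$-dependent pieces of $H'_0$ are $-\frac12\tC_3(\vartheta)$ and $-\<F\^\vartheta\>_\g$. Using ad-invariance of $c$ in the form of the graded Leibniz rule $d\<\mu\^\nu\>_\g = \<D\mu\^\nu\>_\g + (-1)^{|\mu|}\<\mu\^D\nu\>_\g$ together with the graded cyclicity of $\<[\cdot\^\cdot]_\g\^\cdot\>_\g$, I expect the Chern-Simons-type variation to collapse to
\begin{equation}
\delta\tC_3(\vartheta) = \epsilon\,d\<\eta\^\vartheta\>_\g + 2\epsilon\<(D\vartheta + \tfrac12[\vartheta\^\vartheta]_\g)\^\eta\>_\g + o(\epsilon^2).
\end{equation}
Combining this with $\delta\<F\^\vartheta\>_\g = \epsilon\<F\^\eta\>_\g$ and the identity $F' = F + D\vartheta + \frac12[\vartheta\^\vartheta]_\g$ produces the clean formula
\begin{equation}
\delta H'_0 = -\epsilon\<F'\^\eta\>_\g - \tfrac{\epsilon}{2}\,d\<\eta\^\vartheta\>_\g + o(\epsilon^2).
\end{equation}

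Finally I would assemble the two contributions. For the curvature term, $\delta(\frac12\dal F',F'\dar) = \epsilon\dal F', D'\eta\dar$; writing $e^{-2\phi_0}F'$ and applying the $D'$-analogue of the adjointness (\ref{eq_Dcodadjoint}) moves $D'$ onto $F'$, and factoring out $e^{-2\phi_0}$ yields the term $\frac12 e^{2\phi_0}D'^{\dagger}(e^{-2\phi_0}F')$ of $\beta'_\vartheta$. For the Kalb-Ramond term, $\delta(-\frac12\<H'_0,H'_0\>_{g_0}) = -\<H'_0,\delta H'_0\>_{g_0}$ splits along the two pieces of $\delta H'_0$. The algebraic piece $\epsilon\<H'_0,\<F'\^\eta\>_\g\>_{g_0}$ I would rewrite, via the adjointness of contraction and wedging, as $\epsilon\dal\<\io_{e_k}H'_0,F'\>_{g_0}e^k,\eta\dar$, giving the remaining half of $\beta'_\vartheta$. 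The exact piece $\frac{\epsilon}{2}\<H'_0,d\<\eta\^\vartheta\>_\g\>_{g_0}$ I would integrate by parts using (\ref{eq_dcodadjoint}) with $\beta = e^{-2\phi_0}H'_0$; recognizing $\delta_{g_0}(e^{-2\phi_0}H'_0) = 2e^{-2\phi_0}\beta'_{B_0}$ from Proposition \ref{tvrz_KRvariation} and using $\<\eta\^\vartheta\>_\g = -\<\vartheta\^\eta\>_\g$ produces exactly the $-\frac12\<\beta'_{B_0},\<\vartheta\^\eta\>_\g\>_{g_0}$ term. Collecting everything gives (\ref{eq_varintheta}).

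The main obstacle is the Chern-Simons variation $\delta\tC_3$: the graded-sign bookkeeping needed to merge the three cubic contributions into a single multiple of $\<[\vartheta\^\vartheta]_\g\^\eta\>_\g$ and to peel off the total derivative $d\<\eta\^\vartheta\>_\g$ is where errors are easiest to make. A secondary technical point is justifying the contraction-wedge adjointness identity $\<H'_0,\<F'\^\eta\>_\g\>_{g_0} = \dal\<\io_{e_k}H'_0,F'\>_{g_0}e^k,\eta\dar$ and pinning down the overall factors, but these are routine once the signs in $\delta H'_0$ are fixed.
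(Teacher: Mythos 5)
Your proposal is correct and follows essentially the same route as the paper's proof: both compute $\delta F' = \epsilon D'\eta$, reduce $\delta H'_0$ via the Leibniz rule and ad-invariance to $\epsilon\{\tfrac{1}{2}d\<\vartheta\^\eta\>_{\g} - \<F'\^\eta\>_{\g}\}$ (your $-\tfrac{1}{2}d\<\eta\^\vartheta\>_{\g}$ is the same thing), and then integrate by parts using $D'^{\dagger}$, $\delta_{g_0}$ and the contraction--wedge identity $\<\<F'\^\eta\>_{\g},H'_0\>_{g_0} = \dal\<\io_{e_k}H'_0,F'\>_{g_0}\,e^k,\eta\dar$. Your intermediate formula for $\delta\tC_3(\vartheta)$ checks out, so the sign bookkeeping you flagged as the main risk is in fact handled correctly.
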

\begin{proof}
One can start with the variation of the kinetic term 
\begin{equation} S_{kin}[g_{0},\phi_{0},\vartheta] = \frac{1}{2} \int_{M} e^{-2 \phi_{0}}  \dal F', F' \dar \cdot d\vol_{g_{0}}. \end{equation}
This is easy using (\ref{eq_extcovprime}, \ref{eq_F'asF}) and (\ref{eq_Dcodadjoint}). Indeed, under the variation $\vartheta \mapsto \vartheta + \epsilon \eta$, we gets $F' \mapsto F' + \epsilon D'(\eta) + o(\epsilon^{2})$, and the action thus changes as 
\begin{equation} \label{eq_S0kinvar}
S_{kin}[g_{0},\phi_{0},\vartheta'] = S_{kin}[g_{0},\phi_{0},\vartheta] + 2 \epsilon \int_{M} e^{-2\phi_{0}} \dal \frac{1}{2} e^{2\phi_{0}} D'^{\dagger}( e^{-2\phi_{0}}F'), \eta \dar \cdot d\vol_{g_{0}} + o(\epsilon^{2}). 
\end{equation}
The change of $H'_{0}$ takes the explicit form
\begin{equation}
H'_{0} \mapsto H'_{0} + \frac{1}{2} \epsilon  \{  \<D'(\vartheta) \^ \eta\>_{\g} - \< \vartheta  \^ D'(\eta) \>_{\g} \} - \epsilon \< F' \^ \eta \>_{\g} + o(\epsilon^{2}).
\end{equation}
Now, note that for any $\alpha, \beta \in \Omega^{\bullet}(M,\g_{P})$, there holds an equation
\begin{equation}
d \< \alpha \^ \beta \>_{\g} = \< D(\alpha) \^ \beta \>_{\g} + (-1)^{|\alpha|} \<\alpha \^ D(\beta) \>_{\g}. 
\end{equation}
We find that $H'_{0} \mapsto H'_{0} + \epsilon \cdot \{ \frac{1}{2}  d \< \vartheta \^ \eta \>_{\g} - \<F' \^ \eta  \>_{\g} \} + o(\epsilon^{2})$. Let $S_{H'_{0}}$ be the part of the action quadratic in $H'_{0}$. One obtains a relation 
\begin{equation}
\begin{split}
S_{H'_{0}}[ g_{0},B_{0},\phi_{0},\vartheta'] = & \ S'_{H'_{0}}[g_{0},B_{0},\phi_{0},\vartheta] \\
 + \epsilon \int_{M} & e^{-2\phi_{0}} \{ \< \<F' \^ \eta\>_{\g} - \frac{1}{2} d\<\vartheta \^ \eta\>_{\g}, H'_{0} \>_{g_{0}} \} \cdot d\vol_{g_{0}} + o(\epsilon^{2}).
 \end{split}
\end{equation}
To finish the proof, observe that there holds the equation $\< \<F' \^ \eta \>_{\g}, H'_{0} \>_{g_{0}} = \dal \< \io_{e_{k}}H'_{0}, F'\>_{g_{0}} \cdot e^{k}, \eta \dar$ which can be proved by a direct calculation. Using (\ref{eq_dcodadjoint}), we find 
\begin{equation}
\begin{split}
S_{H'_{0}}[g_{0},B_{0},\phi_{0},\vartheta'] = & \  S_{H'_{0}}[g_{0},B_{0},\phi_{0},\vartheta] 
 + 2 \epsilon \int_{M} e^{-2\phi_{0}} \dal \frac{1}{2} \< \io_{e_{k}}H'_{0}, F'\>_{\g} \cdot \psi^{k}, \eta \dar \cdot d\vol_{g_{0}} \\
& -  \epsilon \int_{M} e^{-2 \phi_{0}} \< \frac{1}{2} e^{2\phi_{0}} \delta_{g}( e^{-2\phi_{0}} H'_{0}), \< \vartheta \^ \eta \>_{\g} \>_{g_{0}} \cdot d\vol_{g_{0}} + o(\epsilon^{2}) 
\end{split}
\end{equation}
Combining this with (\ref{eq_S0kinvar}) gives exactly the equation (\ref{eq_varintheta}). 
\end{proof}
\begin{rem}
The covariant codifferential $D^{\dagger}$ can be conveniently expressed using the covariant derivative $D_{X}$ corresponding to $A$, and the Levi-Civita connection $\cD^{0}$ corresponding to $g_{0}$. Indeed, for any $\omega \in \Omega^{p}(M,\g_{P})$, define a combined covariant derivative $\hcD^{0}_{X}$ as 
\begin{equation}
\begin{split}
\{ \hcD^{0}_{X}\omega \}(X_{1},\dots,X_{p}) = & \ D_{X}\{ \omega(X_{1},\dots,X_{p}) \} \\
& - \omega(\cD^{0}_{X}(X_{1}), \dots, X_{p}) \cdots - \omega(X_{1},\dots,\cD^{0}_{X}(X_{p})),
\end{split}
\end{equation}
for all $X, X_{1}, \dots, X_{p} \in \vf{}$. The covariant codifferential can be then expressed as 
\begin{equation} \label{eq_covariantdifexpl}
\{ D^{\dagger}(\omega) \}(X_{1},\dots,X_{p-1}) = - (\hcD^{0}_{e_{k}} \omega)(g_{0}^{-1}(\psi^{k}),X_{1},\dots,X_{p-1}),
\end{equation}
for all $X_{1},\dots,X_{p-1} \in \vf{}$. Here $\{ e_{k} \}_{k=1}^{\dim{M}}$ is any local frame on $M$. 
\end{rem}
\begin{rem}
The $1$-form $\< \io_{e_{k}} H'_{0}, F' \>_{g_{0}} \cdot e^{k}$ can be rewritten without the explicit appearance of the local frame, but with some additional signs:
\begin{equation}
\< \io_{e_{k}} H'_{0}, F' \>_{g_{0}} \cdot e^{k} = sgn(g_{0}) (-1)^{n+1} \ast( F' \^ \ast H'_{0}),
\end{equation}
where $n = \dim{M}$ and $sgn(g_{0})$ is the signature of the determinant of $g_{0}$. 
\end{rem}

We can now propose the final theorem of this section, summarizing the already proved statements. For explicit forms of all kinds of beta functions see the respective propositions. 
\begin{theorem}[\textbf{Equations of motion}]\label{thm_EOM}
The fields $(g,B,\phi)$ satisfy the equations of motion given by the action (\ref{eq_S}), if and only if 
\begin{equation} \label{eq_thmEOM1} 
\beta_{g} = \beta_{B} = \beta_{\phi} = 0. 
\end{equation}
Similarly, the fields $(g_{0},B_{0},\phi_{0},\vartheta)$ satisfy the equations of motion given by (\ref{eq_S0}), if and only if 
\begin{equation} \label{eq_thmEOM2}
\beta'_{g_{0}} = \beta'_{B_{0}} = \beta'_{\phi_{0}} = \beta'_{\vartheta} = 0. 
\end{equation}
Involved tensor fields are defined in the propositions \ref{tvrz_EOMg}, \ref{tvrz_EOMg0}, \ref{tvrz_KRvariation} and \ref{tvrz_EOMtheta}.
\end{theorem}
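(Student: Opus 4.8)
The plan is to read off the equations of motion as the conditions that make the first-order term of each variation computed in Propositions \ref{tvrz_EOMg}--\ref{tvrz_EOMtheta} vanish, for arbitrary boundary-vanishing variations of the respective dynamical field, and then to disentangle the couplings between these conditions using the fundamental lemma of the calculus of variations. Recall that $(g,B,\phi)$ solve the EOM for $S$ precisely when $S$ is stationary under independent variations of each of $g$, $B$, $\phi$ that vanish on $\partial P$.

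First I would treat the easy sectors. By Propositions \ref{tvrz_EOMphi} and \ref{tvrz_KRvariation}, the variations in $\phi$ and in $B$ produce, at first order, the integrals of $e^{-2\phi}\beta(\phi)\cdot\nu$ and $e^{-2\phi}\langle\beta(B),C\rangle_{g}$ paired against an arbitrary $\nu \in C^{\infty}(P)$ and $C \in \dfP{2}$ vanishing on $\partial P$. Since $\langle\cdot,\cdot\rangle_{g}$ is a non-degenerate fibre pairing and the test objects $\nu$, $C$ exhaust the smooth boundary-vanishing sections, the fundamental lemma of the calculus of variations forces $\beta(\phi)=0$ and $\beta(B)=0$.

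The only genuine subtlety is the metric sector. By Proposition \ref{tvrz_EOMg}, stationarity under $g \mapsto g+\epsilon h$ for arbitrary symmetric $h$ vanishing on $\partial P$ yields the symmetric-tensor equation $\beta(g)_{mn} - \tfrac{1}{2}\beta(\phi)\,g_{mn}=0$, rather than $\beta(g)=0$ directly. Here I would substitute the dilaton equation $\beta(\phi)=0$, already established above, to conclude $\beta(g)=0$; conversely, if both $\beta(g)=0$ and $\beta(\phi)=0$, then the combined tensor vanishes and $S$ is stationary in $g$. This proves the equivalence (\ref{eq_thmEOM1}), under the identifications $\beta_{g} := \beta(g)$, $\beta_{B} := \beta(B)$, $\beta_{\phi} := \beta(\phi)$.

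For $S_{0}$ the argument is identical in the $g_{0}$, $\phi_{0}$, $B_{0}$ sectors, and the new ingredient is the $\vartheta$-equation. By Proposition \ref{tvrz_EOMtheta} the variation $\vartheta \mapsto \vartheta+\epsilon\eta$ pairs $\eta$ not only against $\beta'_{\vartheta}$ but also against a term containing $\beta'(B_{0})$. I would therefore invoke the already-derived equation $\beta'(B_{0})=0$ coming from the $B_{0}$-variation, which kills the coupling term and leaves $\dal\beta'_{\vartheta},\eta\dar=0$ for all $\eta$, hence $\beta'_{\vartheta}=0$; conversely $\beta'(B_{0})=\beta'_{\vartheta}=0$ makes the $\vartheta$-variation vanish, giving (\ref{eq_thmEOM2}). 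The main thing to be careful about is thus the logical disentangling of the two coupled variations (metric/dilaton and gauge/Kalb-Ramond), together with checking that the fibre metrics $\langle\cdot,\cdot\rangle_{g_{0}}$ and $\dal\cdot,\cdot\dar$ remain non-degenerate even in indefinite signature, so that the fundamental lemma applies at each step.
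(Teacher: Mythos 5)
Your proposal is correct and takes essentially the same route as the paper: the paper states Theorem \ref{thm_EOM} as a direct summary of Propositions \ref{tvrz_EOMg}, \ref{tvrz_EOMg0}, \ref{tvrz_EOMphi}, \ref{tvrz_KRvariation} and \ref{tvrz_EOMtheta}, leaving implicit exactly the steps you spell out (the fundamental lemma of the calculus of variations, substituting $\beta(\phi)=0$ into the metric equation to get $\beta(g)=0$, and substituting $\beta'(B_{0})=0$ into the $\vartheta$-variation to isolate $\beta'_{\vartheta}=0$). Your explicit disentangling of the coupled variations is precisely the intended content of the paper's (omitted) argument.
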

\section{Courant algebroids and Levi-Civita connections} \label{sec_Courant}
In this section, we will very briefly recall the geometrical objects required to describe the equations of motion presented in the previous section. For definitions with full details, examples and proved propositions, see our lecture notes \cite{Jurco:2016emw}. We will present the results mostly without the detailed calculations, as they are not very enlightening, yet requiring a lot of space to be written in all details. 

\begin{definice}
Let $q: E \rightarrow M$ be a vector bundle, equipped with a fiber-wise metric $\<\cdot,\cdot\>_{E}$, a vector bundle morphism $\rho: E \rightarrow M$, and an $\R$-bilinear bracket $[\cdot,\cdot]_{E}$, such that 
\begin{align}
\label{eq_leibnizrule} [\psi,f\psi']_{E} & = f[\psi,\psi']_{E} + (\rho(\psi).f) \psi', \\
\label{eq_leibnizidentity} [\psi,[\psi',\psi'']_{E}]_{E} & = [[\psi,\psi']_{E},\psi'']_{E} + [\psi', [\psi,\psi'']_{E}]_{E}, \\
\label{eq_killingequation} \rho(\psi).\<\psi',\psi''\>_{E} & = \<[\psi,\psi']_{E},\psi''\>_{E} + \<\psi', [\psi,\psi'']_{E}\>_{E}, \\
\label{eq_sympart} [\psi,\psi']_{E} & = -[\psi',\psi]_{E} + \D\<\psi,\psi'\>_{E},
\end{align}
for all $\psi,\psi',\psi'' \in \Gamma(E)$ and $f \in \cif$. Here $\D: \cif \rightarrow \Gamma(E)$ is defined as 
\begin{equation} \label{def_Dmap}
\< \D{f}, \psi \>_{E} = \rho(\psi).f, 
\end{equation}
for all $f \in \cif$ and $\psi \in \Gamma(E)$. Then $(E,\rho,\<\cdot,\cdot\>_{E},[\cdot,\cdot]_{E})$ is called the \textbf{Courant algebroid}. Note that $\rho$ is automatically a bracket morphism, that is 
\begin{equation} \label{eq_rhohom}
\rho([\psi,\psi']_{E}) = [\rho(\psi),\rho(\psi')],
\end{equation}
for all $\psi,\psi' \in \Gamma(E)$. We sometimes write $g_{E} = \<\cdot,\cdot\>_{E}$ to avoid the bracket notation. 
\end{definice}
\begin{example} \label{ex_dorfman}
Let $E = \gTM \equiv TM \oplus T^{\ast}M$, and let $\rho = \pi_{1} \in \Hom(\gTM,TM)$ be the projection. Let $\<\cdot,\cdot\>_{E}$ be the canonical pairing of $TM$ and $T^{\ast}M$, and let 
\begin{equation}
[(X,\xi),(Y,\eta)]_{E} = ( [X,Y], \Li{X}\eta - \io_{Y}d\xi - H(X,Y,\cdot)),
\end{equation}
for all $(X,\xi),(Y,\eta) \in \Gamma(\gTM)$, where $H \in \df{3}$ is a closed $3$-form on $M$. Then the $4$-tuple $(\gTM,\rho,\<\cdot,\cdot\>_{E},[\cdot,\cdot]_{E})$ forms a Courant algebroid called the \textbf{$H$-twisted Dorfman bracket}. If $H$ is not closed, one obtains a structure of pre-Courant algebroid, see \cite{2012arXiv1205.5898L}. For applications in this paper, this is sufficient, as we never have to use the full Leibniz identity (\ref{eq_leibnizidentity}). 
\end{example}

Next, let us recall the notion of a generalized Riemannian metric. In essence, we consider certain positive definite fiber-wise metrics on $E$, compatible with the already present pairing $\<\cdot,\cdot\>_{E}$. There are many reformulations of this concept which we recall without proofs. 
\begin{definice}
Let $\gm$ be a positive definite fiber-wise metric on $(E,\<\cdot,\cdot\>_{E})$ We say that $\gm$ is a \textbf{generalized Riemannian metric on $E$} if the induced map $\gm \in \Hom(E,E^{\ast})$ is orthogonal with respect to the fiber-wise metric $\<\cdot,\cdot\>_{E}$ on $E$ and the standard induced one $\<\cdot,\cdot\>_{E^{\ast}}$ on the dual vector bundle $E^{\ast}$. Equivalently, $\gm$ defines an orthogonal and symmetric involution $\tau \in \End(E)$ related to $\gm$ via $\gm(\psi,\psi') = \<\psi,\tau(\psi')\>_{E}$, for all $\psi, \psi' \in \Gamma(E)$. Finally, $\gm$ defines a maximal positive definite subbundle $V_{+} \subset E$ with respect to $\<\cdot,\cdot\>_{E}$. In particular, one has 
\begin{equation}
E = V_{+} \oplus V_{-},
\end{equation}
where $V_{-} = V_{+}^{\perp}$ forms a maximal negative definite subbundle of $E$. $V_{\pm}$ are obtained as $\pm 1$ eigenbundles of the involution $\tau$. 
\end{definice}
\begin{example}
Let $E = \gTM$ and $\<\cdot,\cdot\>_{E}$ as above. Then the most general generalized Riemannian metric $\gm$ takes the block form 
\begin{equation} \label{eq_exgenmetric}
\gm = \bm{g - Bg^{-1}B}{Bg^{-1}}{-g^{-1}B}{g^{-1}}, 
\end{equation}
for a Riemannian metric $g$ on $M$ and $B \in \df{2}$. For applications in physics, $g$ is usually not Riemannian. However, this poses no issues, as $\gm$ defined by formula above remains a fiber-wise metric (of indefinite signature) on $\gTM$. To avoid any such discussions, we will henceforth assume that $g > 0$. 
\end{example}
Next, we can recall a concept of Courant algebroid connections. In substance, they generalize both vector bundle connections and usual manifold linear connections. In particular, we assume its compatibility with the already present fiber-wise metric $\<\cdot,\cdot\>_{E}$. Note that the definition itself does not use the Courant bracket $[\cdot,\cdot]_{E}$. 

\begin{definice}
Let $(E,\rho,\<\cdot,\cdot\>_{E},[\cdot,\cdot]_{E})$ be a Courant algebroid. An $\R$-linear map $\cD: \Gamma(E) \times \Gamma(E) \rightarrow \Gamma(E)$ is called a Courant algebroid connection, if the operator $\cD_{\psi} \equiv \cD(\psi,\cdot)$ satisfies
\begin{equation} \label{eq_conleibniz}
\cD_{f\psi} \psi' = f \cD_{\psi} \psi', \; \; \cD_{\psi}(f\psi') = f \cD_{\psi}\psi' + (\rho(\psi).f) \psi',
\end{equation}
for all $\psi,\psi' \in \Gamma(E)$, $f \in \cif$, and 
\begin{equation}
\rho(\psi).\<\psi',\psi''\>_{E} = \< \cD_{\psi}\psi', \psi''\>_{E} + \<\psi', \cD_{\psi}\psi''\>_{E}, 
\end{equation}
for all $\psi,\psi',\psi'' \in \Gamma(E)$. Equivalently, one can write this as $\cD{g_{E}} = 0$, where $\cD: \T_{p}^{q}(E) \rightarrow \T_{p+1}^{q}(E)$ is the naturally induced covariant differential on the tensor algebra of $E$. 
\end{definice}
\begin{example}
For any vector bundle $q: E \rightarrow M$ with a fiber-wise metric $\<\cdot,\cdot\>_{E}$, one can always construct an ordinary vector bundle connection $\cD': \vf{} \times \Gamma(E) \rightarrow \Gamma(E)$ which satisfies $\cD' g_{E} = 0$. Setting $\cD_{\psi} \psi' := \cD'_{\rho(\psi)} \psi'$ for all $\psi,\psi' \in \Gamma(E)$ defines a Courant algebroid connection. 
\end{example}

Having connections defined in a balanced way (both inputs are sections of $E$), it makes sense to define a torsion operator. However, this has some difficulties which have been overcome independently in \cite{2007arXiv0710.2719G} and \cite{alekseevxu}. 
\begin{definice}
Let $\cD$ be a Courant algebroid connection. Define an $\R$-trilinear map $T_{G}$ as 
\begin{equation}
T_{G}(\psi,\psi',\psi'') = \< \cD_{\psi}\psi' - \cD_{\psi'}\psi - [\psi,\psi']_{E}, \psi''\>_{E} + \< \cD_{\psi''}\psi, \psi'\>_{E},
\end{equation}
for all $\psi,\psi',\psi'' \in \Gamma(E)$. The map $T_{G}$ is $\cif$-trilinear and skew-symmetric, that is $T_{G} \in \Omega^{3}(E)$, and called a \textbf{torsion $3$-form of $\cD$}. It is related to the \textbf{torsion operator $T$ of $\cD$} as $\< T(\psi,\psi'), \psi''\>_{E} := T_{G}(\psi,\psi',\psi'')$. The connection is called \textbf{torsion-free}, if $T_{G} = 0$. 
\end{definice}
One can also attempt to define a Riemann curvature tensor of $\cD$. This is even more intriguing. Our intention was to obtain a tensorial quantity with enough symmetries to unambiguously define a Ricci tensor. To achieve this, we have taken an inspiration in physics, namely in double field theory and \cite{Hohm:2012mf}. Note that it is well defined for any Courant algebroid and connection without the requirement of any additional structures. 
\begin{definice}
Let $\cD$ be a Courant algebroid connection. Define a map $R$ as 
\begin{equation} \label{def_Rtensor}
R(\phi',\phi,\psi,\psi') = \frac{1}{2} \{ R^{(0)}(\phi',\phi,\psi,\psi') + R^{(0)}(\psi',\psi,\phi,\phi') + \< \cD_{\psi_{\lambda}} \psi, \psi'\>_{E} \cdot \< \cD_{\psi^{\lambda}_{E}} \phi, \phi'\>_{E} \},
\end{equation}
where $R^{(0)}$ is a \textit{naive} curvature operator defined by 
\begin{equation}
R^{(0)}(\phi',\phi,\psi,\psi') = \< \phi', \{ [\cD_{\psi},\cD_{\psi'}] - \cD_{[\psi,\psi']_{E}} \} \phi \>_{E},
\end{equation}
for all $\phi',\phi,\psi,\psi' \in \Gamma(E)$. Here $\{ \psi_{\lambda} \}_{\lambda=1}^{\rank(E)}$ is an arbitrary local frame on $E$ and $\psi^{\lambda}_{E} \equiv g_{E}^{-1}(\psi^{\lambda})$ is a frame satisfying $\< \psi_{\lambda}, \psi^{\kappa}_{E} \>_{E} = \delta_{\lambda}^{\kappa}$. Then $R$ is $\cif$-linear in all inputs, $R \in \T_{4}^{0}(E)$, and called a \textbf{Riemann curvature tensor corresponding to $\cD$}. The only non-trivial (up to a sign) partial trace of $R$ is called a \textbf{Ricci curvature tensor} and defined as 
\begin{equation} \label{def_Rictensor}
\Ric(\psi,\psi') = R( \psi^{\lambda}_{E}, \psi', \psi_{\lambda}, \psi), 
\end{equation}
for all $\psi,\psi' \in \Gamma(E)$. This tensor is symmetric, see \cite{Jurco:2016emw}. Finally, one can define a smooth function 
\begin{equation}
\RS_{E} = \Ric(\psi^{\lambda}_{E},\psi_{\lambda}), 
\end{equation}
called a \textbf{Courant-Ricci scalar curvature of $\cD$}. 
\end{definice}
In the written text, we will usually omit some adjectives describing $R$, $\Ric$ and $\RS_{E}$. Note that all what is said is true also for pre-Courant algebroids. However, the bracket must satisfy (\ref{eq_rhohom}) and all Courant algebroid and connection axioms with the exception of Leibniz identity in order to prove the symmetries of the tensor $R$. 
\begin{definice}
Let $\cD$ be a Courant algebroid connection. A \textbf{covariant divergence corresponding to $\cD$} is an $\R$-linear map $\Div_{\cD}: \Gamma(E) \rightarrow \cif$ defined by
\begin{equation} \label{eq_Divoperator}
\Div_{\cD}(\psi) = \< \cD_{\psi_{\lambda}} \psi, \psi^{\lambda}_{E} \>_{E},
\end{equation}
for all $\psi \in \Gamma(E)$, where $\{ \psi_{\lambda} \}_{\lambda=1}^{\rank(E)}$ is any local frame on $E$. This map satisfies a Leibniz rule:
\begin{equation}
\Div_{\cD}(f\psi) = f \Div_{\cD}(\psi) + \rho(\psi).f
\end{equation}
One can obtain a \textbf{characteristic vector field of $\cD$} as $X_{\cD} \in \vf{}$ defined for $f \in \cif$ as 
\begin{equation}
X_{\cD}.f = \Div_{\cD}(\D{f}),
\end{equation}
where $\D: \cif \rightarrow \Gamma(E)$ is the map (\ref{def_Dmap}). 
\end{definice}
\begin{definice}
Let $(E,\rho,\<\cdot,\cdot\>_{E},[\cdot,\cdot]_{E})$ be a Courant algebroid equipped with a generalized metric $\gm$. Let $\cD$ be a Courant algebroid connection on $E$. We say that $\cD$ is a \textbf{Levi-Civita connection on $E$ with respect to $\gm$}, if $\cD \gm = 0$ and $\cD$ is torsion-free. 
\end{definice}
Recalling the ordinary Riemannian geometry, one may attempt to use the same procedure to obtain a closed formula for a Levi-Civita connection. However, this is not possible as one quickly finds out that there are infinitely many Levi-Civita connections. For the proof of the existence of Levi-Civita connections see \cite{Garcia-Fernandez:2016ofz}. For exact Courant algebroids, there is a full classification in \cite{Jurco:2016emw} or \cite{Jurco:2015bfs}. There is one remarkable property of the introduced structures. All of them transform in a covariant way under Courant algebroid isomorphisms. In particular, the characteristic vector field forms an invariant. See \cite{Jurco:2016emw} for details. Having a generalized metric $\gm$, one can introduce the following notions:
\begin{definice}
Let $\cD$ be a Courant algebroid connection on $E$ equipped with a generalized metric $\gm$. One says that $\cD$ is \textbf{Ricci compatible with $\gm$} if $\Ric(V_{+},V_{-}) = 0$. Moreover, define a Ricci scalar curvature $\RS_{\gm}$ corresponding to $\gm$ as 
\begin{equation}
\RS_{\gm} = \Ric( \gm^{-1}(\psi^{\lambda}), \psi_{\lambda}),
\end{equation}
where $\{ \psi_{\lambda} \}_{\lambda=1}^{\rank(E)}$ is any local frame on vector bundle $E$. 
\end{definice}
\section{Equations of motion in terms of connections} \label{sec_EOMgeom}
It is a remarkable fact that both systems of equations of motion (\ref{eq_thmEOM1}) and (\ref{eq_thmEOM2}) can be geometrically described in terms of Levi-Civita connections on Courant algebroids. Note that similar approach was taken in \cite{2013arXiv1304.4294G} using a slightly different language. For system (\ref{eq_thmEOM1}), we have decided not to include a full calculation here, as it is explicitly calculated in \cite{Jurco:2016emw}. We will examine all details of the similar statement for the system (\ref{eq_thmEOM2}). 
\begin{theorem} \label{thm_eomSG}
Let $E = \gTP$ be equipped with the Courant algebroid structure described in Example \ref{ex_dorfman}. Let $\gm$ be a generalized metric (\ref{eq_exgenmetric}) corresponding to a pair $(g,B)$. Let $\cD$ be a Levi-Civita connection on $\gTP$ with respect to $\gm$, such that $X_{\cD} = 0$. Moreover, assume that there is a smooth function $\phi \in C^{\infty}(P)$, such that 
\begin{equation}
(d\phi)(Z) = \< \cD_{\rho^{\ast}(e^{k})} \rho^{\ast}(g(e_{k})), \rho^{\ast}(g(Z)) \>,
\end{equation}
where $\rho^{\ast} \in \Hom(T^{\ast}P,E)$ is a map defined as $\rho^{\ast} = g_{E}^{-1} \circ \rho^{T}$, and $\{ e_{k} \}_{k=1}^{\dim{M}}$ is some local frame.

Then $(g,B,\phi)$ satisfy the equations of motion (\ref{eq_thmEOM1}), if and only if $\RS_{\gm} = 2 \Lambda$ and $\cD$ is Ricci compatible with $\gm$, that is $\Ric(V_{+},V_{-}) = 0$. The Ricci compatibility is equivalent to $\beta_{g} = \beta_{B} = 0$.
Moreover, one has $\RS_{E} = 0$. 
\end{theorem}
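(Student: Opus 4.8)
The plan is to compute the generalized curvature quantities of the Levi-Civita connection $\cD$ explicitly, express them through the ordinary Riemannian data $(\Ric^{LC}, H', \phi)$ on $P$, and then read off their vanishing as the beta functions of Section \ref{sec_EOM}. First I would use the classification of Levi-Civita connections on exact Courant algebroids from \cite{Jurco:2016emw} to fix one convenient representative $\cD$ compatible with $\gm$, torsion-free, with characteristic vector field $X_\cD = 0$, and with the remaining trace (divergence) freedom fixed by the stated formula for $d\phi$. The reduction to a single representative is justified by the fact that the two contractions entering the statement, namely the mixed block $\Ric(V_+, V_-)$ and the scalar $\RS_\gm$, are invariant under the remaining affine ambiguity of Levi-Civita connections once $\gm$, $X_\cD$ and this trace datum are fixed; this invariance is established in \cite{Jurco:2016emw}.

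Next I would pass to ordinary tensors using the eigenbundle decomposition $V_\pm = \{ X + (B \pm g)(X) : X \in TP \}$ and a local frame adapted to it. Feeding the explicit connection coefficients into the definition (\ref{def_Rtensor}) of $R$ and into (\ref{def_Rictensor}), I expect the symmetric part of $\Ric(V_+, V_-)$ — viewed as a bilinear form on $TP$ via the projection $\rho$ — to reproduce the gravitational beta tensor $\beta(g)$ of Proposition \ref{tvrz_EOMg}, and its antisymmetric part to reproduce $\delta_g(e^{-2\phi} H')$, i.e. the Kalb-Ramond beta form $\beta(B)$ of Proposition \ref{tvrz_KRvariation}. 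The hypotheses $X_\cD = 0$ and the formula for $d\phi$ enter here precisely to convert the trace and divergence pieces of the connection into the Hessian terms $(\cDL_X d\phi)(Y) + (\cDL_Y d\phi)(X)$ and into the $e^{-2\phi}$ weighting of $\beta(B)$. Because these are identities between tensor fields, $\Ric(V_+, V_-) = 0$ is then equivalent to $\beta_g = \beta_B = 0$, and the biconditional follows with no extra work.

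Continuing, I would compute $\RS_\gm$ and show it equals $\RS(g) - \tfrac12 \<H',H'\>_g + 4 \Delta_g \phi - 4 \<d\phi, d\phi\>_g$, so that by (\ref{def_Bphi}) the condition $\RS_\gm = 2\Lambda$ is exactly $\beta_\phi = 0$. Combining the two paragraphs yields the main equivalence: $(g,B,\phi)$ solve (\ref{eq_thmEOM1}) if and only if $\Ric(V_+, V_-) = 0$ and $\RS_\gm = 2\Lambda$; note that $\RS_\gm = 2\Lambda$ is what reconciles the bare Einstein equation $\beta(g) = 0$ with the trace-adjusted form appearing in the variation. For the final assertion, I would compare the two scalar traces: writing them in a $g_E$-orthonormal frame adapted to $V_\pm$ gives $\RS_\gm = \mathrm{tr}_{V_+}\Ric + \mathrm{tr}_{V_-}\Ric$ while the Courant-Ricci scalar is $\RS_E = \mathrm{tr}_{V_+}\Ric - \mathrm{tr}_{V_-}\Ric$; for a torsion-free $\gm$-compatible connection the two diagonal traces coincide, forcing $\RS_E = 0$.

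The main obstacle I anticipate is the matching computation of the second paragraph. Carrying the connection coefficients through (\ref{def_Rtensor}), restricting to mixed $V_+/V_-$ arguments, and collecting the numerous $H'$- and $\phi$-dependent contributions so that they assemble into exactly $\beta(g)$ and $\beta(B)$ is where sign and normalization errors are most likely. The quadratic term in (\ref{def_Rtensor}) and the dilaton-carrying divergence terms are the delicate spots, and I would control them by working throughout in an explicit frame adapted to $V_+ \oplus V_-$ and cross-checking traces against the already-computed $\RS_\gm$.
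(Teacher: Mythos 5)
Your overall scheme is essentially the paper's own: fix one representative Levi-Civita connection whose residual freedom is pinned down by $X_{\cD}=0$ and the $d\phi$ trace condition, compute $\Ric(V_{+},V_{-})$, $\RS_{\gm}$ and $\RS_{E}$ in a frame adapted to $V_{\pm}$, identify the symmetric and antisymmetric parts of the mixed Ricci block with $\beta(g)$ and $\beta(B)$ and identify $\RS_{\gm}-2\Lambda$ with $\beta(\phi)$, and then invoke the fact that these three quantities depend only on $\gm$ and the divergence data. The paper defers exactly this computation to \cite{Jurco:2016emw}, but carries out the identical scheme for the heterotic algebroid $E'$ in Section \ref{sec_EOMgeom}, where the invariance you appeal to is precisely the Weyl-gauge-fixing remark around (\ref{eq_divKrel}). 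Up to the last step, your proposal is sound.

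The genuine gap is your justification of $\RS_{E}=0$. You assert, as a general principle, that for a torsion-free $\gm$-compatible connection the two diagonal traces coincide, $\Tr_{V_{+}}\Ric=\Tr_{V_{-}}\Ric$, which would force $\RS_{E}=0$ by pure symmetry. This principle is false, and the paper itself contains the counterexample: the heterotic connection $\hcD^{\K}$ of Section \ref{sec_EOMgeom} is torsion-free and metric compatible with $\G'$, yet by (\ref{eq_RS'2}) and (\ref{eq_RSKasjinak}) one has $\RS^{\K}_{E'}=-\tfrac{1}{6}\dim\g\neq 0$, i.e.\ its two diagonal traces differ by $\tfrac{1}{6}\dim\g$. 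Moreover, even on the exact algebroid $\gTP$ the vanishing is not automatic: formula (\ref{eq_RSEKasjinak}) shows that $\RS_{E}$ shifts by divergence-dependent terms ($2\Div_{\cD^{+}}(\K'_{+})+2\Div_{\cD^{-}}(\K'_{-})$ plus quadratic terms in the traces $\K'_{\pm}$) when the divergence is changed, so $\RS_{E}=0$ genuinely uses the hypotheses $X_{\cD}=0$ and the $d\phi$ condition --- hypotheses that your symmetry argument never touches, which is itself a warning sign. The correct way to finish is the same as for the other two quantities: compute $\RS_{E}$ explicitly for your chosen representative connection on the Dorfman-twisted $\gTP$ (with the divergence fixed as in the hypotheses it comes out zero, as in \cite{Jurco:2016emw}), and then extend to all admissible $\cD$ by the same divergence-invariance.
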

\subsection{Suitable geometry}
To find a similar statement for the system (\ref{eq_thmEOM2}), one let us first introduce a Courant algebroid structure on $E' = TM \oplus \g_{P} \oplus T^{\ast}M$. Let $\rho' = \pi_{1}$, a projection on the first factor of the direct sum. The fiber-wise metric $\<\cdot,\cdot\>_{E'}$ is defined as 
\begin{equation}
\< (X,\Phi,\xi), (Y,\Phi',\eta) \>_{E'} = \eta(X) + \xi(Y) + \< \Phi,\Phi'\>_{\g},
\end{equation}
for all $(X,\Phi,\xi)$, $(Y,\Phi',\eta) \in \Gamma(E')$. Finally, the bracket takes the form
\begin{equation} \label{eq_E'bracket}
\begin{split}
[(X,\Phi,\xi),(Y,\Phi',\eta)]_{E'} = \big( & [X,Y], D_{X}\Phi' - D_{Y}\Phi' - F(X,Y) - [\Phi,\Phi']_{\g}, \Li{X}\eta - \io_{Y}d\xi \\
& - H_{0}(X,Y,\cdot) - \<F(X),\Phi'\>_{\g} + \<F(Y),\Phi\>_{\g} + \< D \Phi, \Phi'\>_{\g} \big),
\end{split}
\end{equation}
where $D$ is the covariant exterior derivative corresponding to a fixed principal bundle connection $A \in \Omega^{1}(P,\g)$, and $F \in \Omega^{2}(M,\g_{P})$ is the corresponding curvature $2$-form. Observe that in first two components $[\cdot,\cdot]_{E'}$ coincides with the Atiyah Lie algebroid corresponding to $P$. Then $(E',\rho',\<\cdot,\cdot\>_{E'},[\cdot,\cdot]_{E'})$ forms a Courant algebroid, if and only if there holds an equation
\begin{equation} \label{eq_pontryiaginvanishes}
dH_{0} + \frac{1}{2} \< F \^ F \>_{\g} = 0.
\end{equation}
In particular, the first Pontriyagin class of the principal bundle $P$ must vanish. However, we may consider the case of a general $P$, obtaining a pre-Courant algebroid instead. Every generalized metric $\gm'$ on $E'$ can be uniquely parametrized by a triple $(g_{0},B_{0},\vartheta)$ and written as 
\begin{equation} \label{eq_G'metric}
\gm' = 
\begin{pmatrix}
1 & \vartheta^{T} & B_{0} - \frac{1}{2} \vartheta c \vartheta^{T} \\
0 & 1 & -c \vartheta \\
0 & 0 & 1 
\end{pmatrix}
\begin{pmatrix}
g_{0} & 0 & 0 \\
0 & -c & 0 \\
0 & 0 & g_{0}^{-1}
\end{pmatrix}
\begin{pmatrix}
1 & 0 & 0 \\
\vartheta & 1 & 0 \\
-B_{0} - \frac{1}{2} \vartheta^{T} c \vartheta & - \vartheta^{T}c & 1 
\end{pmatrix},
\end{equation}
where $\vartheta \in \Omega^{1}(M,\g_{P})$ is viewed as a vector bundle map $\vartheta \in \Hom(TM,\g_{P})$. This is true only for compact $\g$. For general semisimple Lie algebra, the positive definite fiber-wise metric $-c$ in the middle block can be more general. $\gm'$ can be written as $\gm' = (e^{-\C})^{T} \G' e^{-\C}$, where 
\begin{equation} \label{eq_defG'andC}
\G' = \begin{pmatrix}
g_{0} & 0 & 0 \\
0 & -c & 0 \\
0 & 0 & g_{0}^{-1}
\end{pmatrix}, \; \; 
\C = \begin{pmatrix}
0 & 0 & 0 \\
-\vartheta & 0 & 0 \\
B_{0} & \vartheta^{T}c & 0 
\end{pmatrix}.
\end{equation}
Instead of working with the above Courant algebroid and generalized metric $\gm'$, it is convenient to consider the following twisted structure. Define the bracket $[\cdot,\cdot]'_{E'}$ as 
\begin{equation}
[\psi,\psi']'_{E'} = e^{-\C}( [e^{\C}(\psi), e^{\C}(\psi') ]_{E'},
\end{equation}
for all $\psi,\psi' \in \Gamma(E')$. As $e^{\C}$ is orthogonal with respect to $\<\cdot,\cdot\>_{E'}$ and $\rho' \circ e^{\C} = \rho'$, we get that $(E',\rho',\<\cdot,\cdot\>_{E'}, [\cdot,\cdot]'_{E'})$ is again a (pre-)Courant algebroid. Let $\cD'$ be a Levi-Civita connection on $E'$ with respect to $\gm$, using the bracket (\ref{eq_E'bracket}). We define a new connection $\hcD'$ as 
\begin{equation} \label{eq_hcD'twist}
\hcD'_{\psi}\psi' = e^{-\C}( \cD'_{e^{\C}(\psi)} e^{\C}(\psi')).
\end{equation}
It is easy to see that $\hcD$ forms a Levi-Civita connection on $E'$ with respect to the block diagonal generalized metric $\G'$, using the twisted bracket $[\cdot,\cdot]'_{E'}$. In fact, this bracket takes the same form as (\ref{eq_E'bracket}), but with primes added to all of the quantities. Explicitly: 
\begin{equation} \label{eq_E'bracket'}
\begin{split}
[(X,\Phi,\xi),(Y,\Phi',\eta)]'_{E'} = \big( & [X,Y], D'_{X}\Phi' - D'_{Y}\Phi - F'(X,Y) - [\Phi,\Phi']_{\g}, \Li{X}\eta - \io_{Y}d\xi \\
& - H'_{0}(X,Y,\cdot) - \<F'(X),\Phi'\>_{\g} + \<F'(Y),\Phi\>_{\g} + \< D' \Phi, \Phi'\>_{\g} \big),
\end{split}
\end{equation}
where $H'_{0}$ and $F'$ are defined by (\ref{eq_F'andH'0}) and $D' = D + [\vartheta \^ \cdot]$. This is not surprising, as the twist using $\C$ in fact corresponds to the choice of a connection $A'$ instead of $A$, see (\ref{eq_horliftchange}), combined with the twist by the $2$-form $B_{0}$. 

\subsection{The connection and its curvatures}
We will now construct an example of Levi-Civita connection on $E'$ with respect to $\G'$ and calculate its scalar curvature and equations equivalent to its Ricci compatibility with $\G'$. Set 
\begin{align}
\label{eq_hcD'1} \hcD'_{(X,0,0)} & = \begin{pmatrix}
\cDN_{X} & \frac{1}{2} g_{0}^{-1} \<F'(X),\star\>_{\g} & -\frac{1}{3} g_{0}^{-1} H'_{0}(X,g_{0}^{-1}(\star), \cdot) \\
-\frac{1}{2} F'(X,\star) & D'_{X} & \frac{1}{2} F'(X,g_{0}^{-1}(\star)) \\
- \frac{1}{3} H'_{0}(X,\star,\cdot) & - \frac{1}{2} \< F'(X),\star\>_{\g} & \cDN_{X}
\end{pmatrix}, \\
\label{eq_hcD'2} \hcD'_{(0,\Phi,0)} & = \begin{pmatrix}
\frac{1}{2} g_{0}^{-1} \<F'(\star),\Phi\>_{\g} & 0 & 0 \\
0 & -\frac{1}{3} [\Phi,\star]_{\g} & 0 \\
0 & 0 & \frac{1}{2} \<F'(g_{0}^{-1}(\star)),\Phi\>_{\g}
\end{pmatrix}, \\
\label{eq_hcD'3} \hcD'_{(0,0,\xi)} & = \begin{pmatrix}
\frac{1}{6} g_{0}^{-1}H'_{0}(g_{0}^{-1}(\xi),\star,\cdot) & 0 & 0 \\
0 & 0 & 0 \\
0 & 0 & \frac{1}{6} H'_{0}(g_{0}^{-1}(\xi),g_{0}^{-1}(\star),\cdot),
\end{pmatrix}
\end{align}
where $\star$ always indicates the input. It is a straightforward check that $\hcD'$ is indeed a Levi-Civita connection on $E'$ with respect to $\G'$. It is now convenient to write
\begin{equation} \label{eq_hcD'decomp}
\hcD'_{\psi} \psi' = \hcDL_{\psi}\psi' + g_{E'}^{-1} \H(\psi,\psi',\cdot),
\end{equation}
where $\hcDL$ is a block diagonal induced connection $\hcDL_{(X,\Phi,\xi)} = \BlockDiag( \cDN_{X}, D'_{X}, \cDN_{X})$, and $\H \in \Omega^{1}(E') \otimes \Omega^{2}(E')$ takes the explicit form 
\begin{equation} \label{eq_Htensor}
\begin{split}
\H(\psi,\psi',\psi'') =& \ \frac{1}{6} H'_{0}(g_{0}^{-1}(\xi),Y,g_{0}^{-1}(\zeta)) + \frac{1}{6} H'_{0}(g_{0}^{-1}(\xi),g_{0}^{-1}(\eta),Z) \\
& - \frac{1}{3} H'_{0}(X,Y,Z) - \frac{1}{3} H'_{0}(X,g_{0}^{-1}(\eta),g_{0}^{-1}(\zeta))  - \frac{1}{3} \< [\Phi,\Phi']_{\g}, \Phi''\>_{\g} \\
& + \frac{1}{2} \< F'(g_{0}^{-1}(\zeta) - Z, X), \Phi'\>_{\g} - \frac{1}{2} \<F'(g_{0}^{-1}(\eta) - Y,X), \Phi''\>_{\g} \\
& + \frac{1}{2} \<F'(g_{0}^{-1}(\zeta),Y) - F'(g_{0}^{-1}(\eta),Z),\Phi \>_{\g}, \\
\end{split}
\end{equation}
where $\psi = (X,\Phi,\xi)$, $\psi' = (Y,\Phi',\eta)$ and $\psi'' = (Z,\Phi'',\zeta)$. 

Note that the connection $\hcDL$ is metric compatible with the generalized metric $\G'$, but it is not torsion free. Explicitly, its torsion $3$-form $\widehat{T}^{LC}_{G}$ reads
\begin{equation}
\begin{split}
\widehat{T}^{LC}_{G}((X,\Phi,\xi),(Y,\Phi',\eta),(Z,\Phi'',\zeta)) = & \ H'_{0}(X,Y,Z) + \<[\Phi,\Phi']_{\g},\Phi''\>_{\g} + \<F'(X,Y), \Phi''\>_{\g} \\
& + \<F'(Z,X), \Phi'\>_{\g} + \<F'(Y,Z), \Phi\>_{\g}
\end{split}
\end{equation}
However, the connection $\hcD'$ is a Courant algebroid connection on $E'$ both metric compatible with $\G'$ and torsion-free, which reflects in the following properties of $\H$:
\begin{align}
\label{eq_H1} \H(\psi,\psi',\psi'') + \H(\psi,\psi'',\psi') & = 0, \\
\label{eq_H2} \H(\psi,\psi',\tau'_{0}(\psi'')) + \H(\psi,\psi'',\tau'_{0}(\psi')) &= 0, \\
\label{eq_H3} \H(\psi,\psi',\psi'') + cyclic(\psi,\psi',\psi'') & = -\widehat{T}^{LC}_{G}(\psi,\psi',\psi''),
\end{align}
for all $\psi,\psi',\psi'' \in \Gamma(E')$. Here $\tau'_{0} \in \End(E')$ is the involution corresponding to the generalized metric $\G'$. We will now derive a formula for the Ricci tensor $\Ric'$ corresponding to $\hcD'$. Plugging (\ref{eq_hcD'decomp}) into (\ref{def_Rtensor}, \ref{def_Rictensor}), it is straightforward to arrive to the formula:
\begin{equation} \label{eq_Ric'expansioninH}
\begin{split}
\Ric'(\psi,\psi') = & \ \hRic{}^{LC}(\psi,\psi') + \frac{1}{2} \{ (\hcDL_{\psi_{\lambda}} \H)(\psi,\psi',\psi^{\lambda}_{E}) + (\hcDL_{\psi_{\lambda}} \H)(\psi',\psi,\psi^{\lambda}_{E}) \\
& - \H(\psi', g_{E}^{-1}\H(\psi_{\lambda},\psi,\cdot), \psi^{\lambda}_{E}) - \H(\psi^{\lambda}_{E}, g_{E}^{-1}\H(\psi,\psi_{\lambda},\cdot),\psi') \\
& + \H( g_{E}^{-1}\H(\cdot,\psi_{\lambda},\psi'),\psi,\psi^{\lambda}_{E}) \\
& + \H( \widehat{T}^{LC}(\psi_{\lambda},\psi'),\psi,\psi^{\lambda}_{E}) + \H(\widehat{T}^{LC}(\psi,\psi^{\lambda}_{E}),\psi_{\lambda},\psi') \},
\end{split}
\end{equation}
where $\hRic{}^{LC}$ is a Ricci tensor corresponding to the Courant algebroid connection $\hcDL$. One can now significantly simplify this expression. First, note that one has 
\begin{equation}
\frac{1}{2} \{ (\hcDL_{\psi_{\lambda}}\H)(\psi,\psi',\psi^{\lambda}_{E}) + (\hcDL_{\psi_{\lambda}}\H)(\psi',\psi,\psi^{\lambda}_{E})\} = (\hcDL_{\psi_{\lambda}} \H_{s})(\psi,\psi',\psi^{\lambda}_{E}),
\end{equation}
where $\H_{s}$ is the symmetrization of $\H$ in first two indices. Moreover, as $\hcDL$ is block diagonal and induced, one has to consider only the terms of $\H_{s}$ where the third input is of the form $(0,0,\zeta)$:
\begin{equation}
\begin{split}
\H_{s}((X,\Phi,\xi),(Y,\Phi',\eta),(0,0,\zeta)) = & \ \frac{1}{4} \{ H'_{0}(g_{0}^{-1}(\xi), Y, g_{0}^{-1}(\zeta)) + H'_{0}( g_{0}^{-1}(\eta),X,g_{0}^{-1}(\zeta)) \} \\
& + \frac{1}{2} \{ \< F'(g_{0}^{-1}(\zeta),X),\Phi'\>_{\g} + \<F'(g_{0}^{-1}(\zeta),Y),\Phi\>_{\g} \}.
\end{split}
\end{equation}
With the help of the formula (\ref{eq_covariantdifexpl}), one can now derive the expression:
\begin{equation} \label{eq_HsCDLexpl}
\begin{split}
(\hcDL_{\psi_{\lambda}} \H_{s})(\psi,\psi',\psi^{\lambda}_{E}) = & \ \frac{1}{4} \{ (\cDN_{e_{k}} H'_{0})(g_{0}^{-1}(\xi),Y,e^{k}_{0}) + (\cDN_{e_{k}}H'_{0})(g_{0}^{-1}(\eta),X,e_{0}^{k}) \} \\
& - \frac{1}{2} \{ \< (D'^{\dagger} F')(X), \Phi' \>_{\g} + \< ( D'^{\dagger} F')(Y), \Phi \>_{\g} \},
\end{split}
\end{equation}
for $\psi = (X,\Phi,\xi)$ and $\psi' = (Y,\Phi',\eta)$, where $\{e_{k}\}_{k=1}^{\dim{M}}$ is an arbitrary local frame on $M$, and $e^{k}_{0} = g_{0}^{-1}(e^{k})$. Moreover, the remaining five terms in (\ref{eq_Ric'expansioninH}) can be using (\ref{eq_H1}, \ref{eq_H3}) rewritten as:
\begin{equation}
\M(\psi,\psi') = \frac{1}{2} \H(\psi^{\lambda}_{E}, \psi^{\mu}_{E}, \psi') \{ \H(\psi_{\lambda},\psi_{\mu},\psi) - 2 \H(\psi_{\mu},\psi_{\lambda},\psi) \}. 
\end{equation}
It is easy to see that this is a symmetric tensor on $E'$. By plugging in (\ref{eq_Htensor}), and by taking $\psi = (X,\Phi,\xi)$, $\psi' = (Y,\Phi',\eta)$, one finds the explicit formula:
\begin{equation} \label{eq_Mexpl}
\begin{split}
\M(\psi,\psi') = & \ -\frac{1}{3} \<\io_{X}H'_{0},\io_{Y}H'_{0}\>_{g_{0}} + \frac{1}{6} \< \io_{g_{0}^{-1}(\xi)} H'_{0}, \io_{g_{0}^{-1}(\eta)}H'_{0}\>_{g_{0}} - \frac{1}{6} \<\Phi,\Phi'\>_{\g} \\
& - \frac{1}{4} \< \<F', \io_{X +g_{0}^{-1}(\xi)} H'_{0}\>_{g_{0}}, \Phi'\>_{\g}  - \frac{1}{4} \< \< F', \io_{Y + g_{0}^{-1}(\eta)} H'_{0}\>_{g_{0}}, \Phi \>_{\g} \\
& + \frac{1}{2} \< F'(X,e_{m}), F'(Y,e^{m}_{0}) \>_{\g} + \frac{1}{4} \< F'(e_{k},e_{m}), \Phi\>_{\g} \< F'(e^{k}_{0},e^{m}_{0}), \Phi' \>_{\g} \\
&  - \frac{1}{8} \<F'(X,e_{m}), F'(g_{0}^{-1}(\eta),e^{m}_{0}) \>_{\g} - \frac{1}{8} \<F'(Y,e_{m}), F'(g_{0}^{-1}(\xi), e^{m}_{0}) \>_{\g}. 
\end{split}
\end{equation}
Finally, it is straightforward to see that $\hRic{}^{LC}$ can be directly related to the ordinary Ricci tensor $\Ric_{0}^{LC}$ corresponding to the metric $g_{0}$ as
\begin{equation}
\hRic{}^{LC}(\psi,\psi') = \Ric_{0}^{LC}( \rho'(\psi), \rho'(\psi')),
\end{equation}
for all $\psi,\psi' \in \Gamma(E')$. Altogether, the formula (\ref{eq_Ric'expansioninH}) can be now written in the form
\begin{equation} \label{eq_Ric'ascDHsandM}
\Ric'(\psi,\psi') = \Ric^{LC}_{0}(\rho'(\psi),\rho'(\psi')) + (\hcDL_{\psi_{\lambda}} \H_{s})(\psi,\psi',\psi^{\lambda}_{E}) + \M(\psi,\psi') 
\end{equation}
Before the calculation of the scalar curvatures, note that by construction, the scalar curvatures $(\RS'_{E'},\RS'_{\gm'})$ corresponding to the original Levi-Civita connection $\cD'$ are the same as a pair $(\widehat{\RS}'_{E'}, \widehat{\RS}'_{\G'})$ corresponding to the twisted connection $\hcD$ we have worked with. This follows from (\ref{eq_hcD'twist}) and the covariance of all involved objects under Courant algebroid isomorphisms. Thus
\begin{equation}
\RS'_{\gm'} = \Ric'( \G'^{-1}(\psi^{\lambda}), \psi_{\lambda}), \; \; \RS'_{E'} = \Ric'( g_{E'}^{-1}(\psi^{\lambda}), \psi_{\lambda}).
\end{equation}
Plugging (\ref{eq_Ric'ascDHsandM}) into these and using (\ref{eq_HsCDLexpl}, \ref{eq_Mexpl}) gives the expressions:
\begin{align}
\label{eq_RS'1} \RS'_{\gm'} & = \RS(g_{0}) - \frac{1}{2} \<H'_{0},H'_{0}\>_{g_{0}} + \frac{1}{2} \dal F' , F' \dar + \frac{1}{6} \dim{\g}, \\
\label{eq_RS'2} \RS'_{E'} & = - \frac{1}{6} \dim{\g}. 
\end{align}
Finally, we have to study the Ricci compatibility of $\cD'$ with $\gm'$. It is easy to see that this is equivalent to the Ricci compatibility of $\hcD'$ with $\G'$. Let $E' = V'_{+} \oplus V'_{-}$ be the decomposition of $E'$ induced by the generalized metric $\G'$. Explicitly, one has 
\begin{align}
\label{eq_V'+iso} \Gamma(V'_{+}) & = \{ (X,0,g_{0}(X)) \; | \; X \in \vf{} \} \cong \vf{}, \\
\label{eq_V'-iso} \Gamma(V'_{-}) & = \{ (X,\Phi,-g_{0}(X)) \; | \; (X,\Phi) \in \vf{} \oplus \Gamma(\g_{P}) \} \cong \vf{} \oplus \Gamma(\g_{P}). 
\end{align}
The Ricci compatibility of $\Ric'$ with $\G'$ is thus equivalent to the vanishing of $\Ric'_{+-}$ defined by 
\begin{equation}
\Ric'_{+-}(X,(Y,\Phi)) = \Ric'\big( (X,0,g_{0}(X)), (Y,\Phi,-g_{0}(Y))\big). 
\end{equation}
Plugging in (\ref{eq_Ric'ascDHsandM}) and (\ref{eq_HsCDLexpl}, \ref{eq_Mexpl}), one finds the following expressions:
\begin{align}
\label{eq_Ric'pm1} \Ric'_{+-}(X,(Y,0)) = & \ - \frac{1}{2} (\delta_{g_{0}}H'_{0})(X,Y) - \frac{1}{2} \< \io_{X}H'_{0}, \io_{Y}H'_{0} \>_{g_{0}} + \frac{1}{2} \dal \io_{X}F', \io_{Y}F' \dar, \\
\label{eq_Ric'pm2} \Ric'_{+-}(X,(0,\Phi)) = & \ - \frac{1}{2} \< (D'^{\dagger}F')(X),\Phi \>_{\g} - \frac{1}{2} \< \<F', \io_{X}H'_{0} \>_{g_{0}}, \Phi \>_{\g}. 
\end{align}
\subsection{Introducing the dilaton}
Now, we have to think how to encode the dilaton $\phi_{0}$ into the connection. To do so, first note that having a Levi-Civita connection $\hcD'$, one can define a new Levi-Civita connection $\hcD^{\K}$ using the formula
\begin{equation} \label{eq_hcDK}
\hcD^{\K}_{\psi}\psi' =\hcD'_{\psi}\psi' + g_{E'}^{-1} \K(\psi,\psi',\cdot),
\end{equation}
where $\K \in \T_{3}^{0}(E')$ has to satisfy the conditions similar to (\ref{eq_H1} - \ref{eq_H3}) for $\H$. 
\begin{align}
\label{eq_K1} \K(\psi,\psi',\psi'') + \K(\psi,\psi'',\psi') & = 0, \\
\label{eq_K2} \K(\psi,\psi',\tau'_{0}(\psi'')) + \K(\psi,\psi'',\tau'_{0}(\psi')) &= 0, \\
\label{eq_K3} \K(\psi,\psi',\psi'') + cyclic(\psi,\psi',\psi'') & = 0. 
\end{align}
One can now express the Ricci tensor $\Ric'_{\K}$ for $\hcD^{K}$ using a formula analogous to (\ref{eq_Ric'expansioninH}). One finds 
\begin{equation} \label{eq_RicKasRic'}
\begin{split}
\Ric'_{\K}(\psi,\psi') = \Ric'(\psi,\psi') + & (\hcD'_{\psi_{\lambda}} \K_{s})(\psi,\psi',\psi^{\lambda}_{E}) + \frac{1}{2} \{ (\hcD'_{\psi} \K')(\psi') + (\hcD'_{\psi'} \K')(\psi) \} \\
& - \K'(\psi_{\lambda}) \K_{s}(\psi,\psi',\psi^{\lambda}_{E}) + \M_{\K}(\psi,\psi'), 
\end{split}
\end{equation}
where $\K_{s}$ is the symmetrization of $\K$ in first two inputs, $\K' \in \Omega^{1}(E')$ is the partial trace $\K'(\psi) = \K( \psi^{\lambda}_{E}, \psi_{\lambda},\psi)$, and $\M_{\K}$ is a symmetric tensor on $E'$ defined by 
\begin{equation}
\M_{\K}(\psi,\psi') = \frac{1}{2} \K(\psi^{\lambda}_{E},\psi^{\mu}_{E},\psi') \{ \K(\psi_{\lambda},\psi_{\mu},\psi) - 2 \K( \psi_{\mu}, \psi_{\lambda}, \psi) \}.
\end{equation}
To find the expression for the scalar curvatures $\RS_{\G'}^{\K}$ and $\RS_{E'}^{\K}$, it is convenient to use the splitting $E' = V'_{+} \oplus V'_{-}$ induced by the generalized metric $\G'$. By definition, $\hcD'$ preserves the subbundles $V'_{\pm}$, and thus defines a pair of connections $\cD^{\pm}: \Gamma(V'_{\pm}) \times \Gamma(V'_{\pm}) \rightarrow \Gamma(V'_{\pm})$. Moreover, (\ref{eq_K1} - \ref{eq_K2}) imply that $\K$ is non-trivial only if all its inputs come from the same subbundle. Let $\K_{\pm}$ denote its restriction to $V'_{\pm}$. Clearly $(\K_{s})_{\pm} = (\K_{\pm})_{s}$. Finally, the generalized metric $\G'$ restricts to a pair $\G'_{\pm}$ of fiber-wise metrics on $V'_{\pm}$. The partial trace $\K'$ decomposes as 
\begin{equation}
\K'(\psi) = \K'_{+}(\psi_{+}) - \K'_{-}(\psi_{-}),  
\end{equation}
where $\psi = \psi_{+} + \psi_{-}$ is a decomposition with respect to $E' = V'_{+} \oplus V'_{-}$, and $\K'_{\pm}$ are the partial traces of $\K_{\pm}$ taken over $V'_{\pm}$ using the fiber-wise metric $\G'_{\pm}$. One can now prove the formulae
\begin{align}
\label{eq_RSGKasjinak} \RS_{\G'}^{\K} &= \RS'_{\G'} + 2 \Div_{\cD^{+}} (\K'_{+}) - 2 \Div_{\cD^{-}} (\K'_{-}) - \rVert \K'_{+} \rVert^{2}_{\G'_{+}} - \rVert \K'_{-} \rVert^{2}_{\G'_{-}}, \\
\label{eq_RSEKasjinak} \RS_{E'}^{\K} &= \RS'_{E'} + 2 \Div_{\cD^{+}} (\K'_{+}) + 2 \Div_{\cD^{-}}( \K'_{-}) - \rVert \K'_{-} \rVert^{2}_{\G'_{-}} + \rVert \K'_{+} \rVert^{2}_{\G'_{-}}, 
\end{align}
where $\Div_{\cD^{\pm}}$ is a covariant divergence on $V'_{\pm}$ defined using the fiber-wise metrics $\G'_{\pm}$. This expressions are useful, having the isomorphisms (\ref{eq_V'+iso}, \ref{eq_V'-iso}). Instead of working on $V'_{\pm}$, we may use the induced objects on $TM$ and $TM \oplus \g_{P}$, respectively. We will denote the induced objects by the same symbols. First, the induced connection $\cD^{+}$ takes the form
\begin{equation}
\cD^{+}_{X}Y = \cDN_{X}Y - \frac{1}{6} g_{0}^{-1}H'_{0}(X,Y,\cdot),
\end{equation}
for all $X,Y \in \vf{}$. The connection $\cD^{-}$ is more complicated and it can be written as
\begin{equation}
\cD^{-}_{(X,\Phi)} = \bm{\cDN_{X} + \frac{1}{6} g_{0}^{-1}H'_{0}(X,\star,\cdot) + \frac{1}{2} g_{0}^{-1} \<\Phi,F'(\star)\>_{\g}}{ \frac{1}{2} g_{0}^{-1} \<F'(X),\star \>_{\g}}{ - F'(X,\star)}{D'_{X} - \frac{1}{3} [\Phi,\star]_{\g}}. 
\end{equation}
The induced fiber-wise metric $\G'_{+}$ on $TM$ is $\G'_{+} = 2 g_{0}$, whereas $\G'_{-}$ takes the form 
\begin{equation}
\G'_{-}((X,\Phi),(Y,\Phi)) = 2 g_{0}(X,Y) - \< \Phi,\Phi'\>_{\g}. 
\end{equation} 
To determine the tensor $\K$, it suffices to define the pair of induced tensors $\K_{+} \in \T_{3}^{0}(M)$ and $\K_{-} \in \T_{3}^{0}(TM \oplus \g_{P})$. Let $\phi_{0} \in \cif$ be any scalar function. Define 
\begin{align}
\label{eq_Kdef1} \K_{+}(X,Y,Z) = (4 / (\dim{M} - 1)) \{ g_{0}(X,Y) \< d\phi_{0},Z\> - g_{0}(X,Z) \< d\phi_{0},Y\> \}, \\
\label{eq_Kdef2} \K_{-}((X,\Phi),(Y,\Phi'),(Z,\Phi'')) = -\K_{+}(X,Y,Z),
\end{align}
for all $X,Y,Z \in \vf{}$ and $\Phi,\Phi',\Phi'' \in \Gamma(\g_{P})$. It is now a straightforward calculation to prove that by plugging into (\ref{eq_RSGKasjinak}, \ref{eq_RSEKasjinak}), one obtains the final expression
\begin{equation} \label{eq_RSKasjinak}
\RS^{\K}_{\G'} = \RS'_{\G'} + 4 \Delta_{g_{0}}(\phi_{0}) - 4 \< d\phi_{0}, d\phi_{0} \>_{g_{0}}, \; \; \RS^{\K}_{E'} = \RS'_{E'}. 
\end{equation}
We see that the addition of $\K$ adds the correct kinetic terms for dilaton $\phi_{0}$ into the first scalar curvature, without changing the Courant-Ricci scalar. We only have to check how this choice of $\K$ modifies the Ricci compatibility conditions. Note that $\K'(Z,\Phi'',\zeta) = 2 \< d\phi_{0}, Z\>$.  Hence 
\begin{equation}
\begin{split}
\frac{1}{2} \{ (\hcD'_{\psi}\K')(\psi') + (\hcD'_{\psi'}\K')(\psi) \} = & \  ( \cDN_{X} d\phi_{0})(Y) + (\cDN_{Y} d\phi_{0})(X) \\
& - H'_{0}(X,Y,g_{0}^{-1}(d\phi_{0})) - \< F'(g_{0}^{-1}(d\phi_{0}),X), \Phi \>_{\g},
\end{split}
\end{equation}
for $\psi = (X,0,g_{0}(X))$ and $\psi' = (Y,\Phi,-g_{0}(Y))$. All other terms in (\ref{eq_RicKasRic'}) are easily seen to vanish for this combination of $(\psi,\psi')$. We thus conclude that 
\begin{align}
\label{eq_RicKpm1} \Ric^{\K}_{+-}(X,(Y,0)) = & \ \Ric'_{+-}(X,(Y,0)) + (\cDN_{X} d\phi_{0})(Y) + (\cDN_{Y} d\phi_{0})(X) \\
& - H'_{0}(X,Y,g_{0}^{-1}(d\phi_{0})), \nonumber \\
\label{eq_RicKpm2} \Ric^{\K}_{+-}(X,(0,\Phi)) =& \ \Ric'_{+-}(X,(0,\Phi)) - \< F'(g_{0}^{-1}(d\phi_{0}),X), \Phi\>_{\g}. 
\end{align}
This indeed adds the correct dilaton contribution to the Ricci compatibility condition $\Ric_{+-}^{\K} = 0$. 
\begin{rem}
The choice of $\K$ above does not come out of the blue. In fact, it corresponds to the deformation of Levi-Civita connections via so called \textit{Weyl endomorphisms}, see \cite{Garcia-Fernandez:2016ofz}. In more detail, let $\tcD'$ and $\hcD'$ be two Levi-Civita connections, such that their covariant divergencies are related using a given element $\xi \in \Gamma(E'^{\ast})$ as
\begin{equation} \label{eq_divKrel}
\Div_{\tcD'}(\psi) = \Div_{\hcD'}(\psi) - \xi(\psi),
\end{equation}
for all $\psi \in \Gamma(E')$. Then there is $\K \in \Omega^{1}(E') \otimes \Omega^{2}(E')$ satisfying (\ref{eq_K1} - \ref{eq_K3}), such that $\tcD' = \hcD^{\K}$ as in (\ref{eq_hcDK}). Importantly, it can be shown that neither of the quantities $\RS^{\K}_{\G'}$, $\RS^{\K}_{E'}$ and $\Ric^{\K}_{+-}$ depends on the choice of such $\K$. This can be seen directly from (\ref{eq_RicKasRic'}) and it is (stated slightly differently) proved in \cite{Garcia-Fernandez:2016ofz}. In our case, we define $\K$ so that the $\xi \in \Gamma(E'^{\ast})$ defined by (\ref{eq_divKrel}) takes the form
\begin{equation}
\xi(\psi) = 2 \cdot \< d\phi_{0}, \rho'(\psi) \>,
\end{equation}
for all $\psi \in \Gamma(E')$. See \cite{Garcia-Fernandez:2016ofz} for so called \emph{Weyl gauge fixing}. 
\end{rem}
\subsection{Main statement}
Combining the previous two sections, we now arrive to the main result of this section. Its proof is a summarization of the above calculations. 
\begin{theorem}\label{thm_eomheterotic}
Let $\cD'$ be the Courant algebroid connection on $(E',\rho',\<\cdot,\cdot\>_{E'},[\cdot,\cdot]_{E'})$ defined as 
\begin{equation}
\cD'_{\psi} \psi' = e^{\C}( \hcD^{\K}_{e^{-\C}(\psi)} e^{-\C}(\psi') ), 
\end{equation}
for all $\psi,\psi' \in \Gamma(E')$, where $\C \in \End(E')$ was defined in (\ref{eq_defG'andC}), and $\hcD^{\K}$ has the form 
\begin{equation}
\hcD^{\K}_{\psi}\psi' = \hcD'_{\psi}\psi' + g_{E'}^{-1} \K(\psi,\psi',\cdot),
\end{equation}
for all $\psi,\psi' \in \Gamma(E')$, and $\hcD'$ is defined by (\ref{eq_hcD'1} - \ref{eq_hcD'3}). The tensor $\K$ is defined by (\ref{eq_Kdef1},\ref{eq_Kdef2}). Note that such $\cD'$ is a Levi-Civita connection on $E'$ with respect to the generalized metric $\gm'$. 

Then $(g_{0},B_{0},\phi_{0},\vartheta)$ satisfy the equations of motion (\ref{eq_thmEOM2}), if and only if $\RS'_{\gm'} = 2 \Lambda_{0} + \frac{1}{6} \dim{\g}$ and $\cD'$ is Ricci compatible with $\gm'$, that is $\Ric'(V'_{+},V'_{-}) = 0$. Moreover, one has $\RS'_{E'} = -\frac{1}{6} \dim{\g}$. 
\end{theorem}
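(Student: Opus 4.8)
The statement is by design a bookkeeping exercise that assembles the formulas of the two preceding subsections, so the plan is to match the three curvature invariants attached to $\cD'$ against the four beta functions of Theorem \ref{thm_EOM}. First I would dispose of the twist by $e^{\C}$: the connection $\cD'$ is the conjugate of $\hcD^{\K}$ by the fibre-wise orthogonal map $e^{\C}$, which in addition satisfies $\rho' \circ e^{\C} = \rho'$, so it is a Courant algebroid isomorphism and every curvature invariant transforms covariantly. Consequently the scalar curvatures of $\cD'$ coincide with $\RS^{\K}_{\G'}$ and $\RS^{\K}_{E'}$, and the Ricci compatibility of $\cD'$ with $\gm'$ is equivalent to that of $\hcD^{\K}$ with the block-diagonal metric $\G'$, i.e.\ to $\Ric^{\K}_{+-} = 0$. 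This reduces the entire proof to the block-diagonal picture already analysed.

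Next I would settle the scalar equations. Reading \eqref{eq_RS'1} as the scalar curvature of the dilaton-free connection $\hcD'$ and adding the effect of $\K$ recorded in the first identity of \eqref{eq_RSKasjinak}, the scalar curvature of $\cD'$ becomes
\begin{equation*}
\RS^{\K}_{\G'} = \RS(g_0) + \tfrac{1}{2}\dal F',F'\dar - \tfrac{1}{2}\<H'_0,H'_0\>_{g_0} + 4\Delta_{g_0}(\phi_0) - 4\<d\phi_0,d\phi_0\>_{g_0} + \tfrac{1}{6}\dim{\g}.
\end{equation*}
Cancelling the constant $\tfrac{1}{6}\dim{\g}$ and comparing with \eqref{def_B'phi0}, the condition $\RS'_{\gm'} = 2\Lambda_0 + \tfrac{1}{6}\dim{\g}$ is exactly the dilaton equation $\beta'(\phi_0) = 0$. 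For the final clause, the second identity in \eqref{eq_RSKasjinak} together with \eqref{eq_RS'2} gives $\RS^{\K}_{E'} = \RS'_{E'} = -\tfrac{1}{6}\dim{\g}$, an unconditional identity rather than an equation of motion.

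The heart of the argument is the Ricci compatibility. Combining \eqref{eq_RicKpm1} with \eqref{eq_Ric'pm1}, I would split $\Ric^{\K}_{+-}(X,(Y,0))$ into symmetric and antisymmetric parts in $(X,Y)$. Once the $\hRic{}^{LC}$ contribution $\Ric^{LC}_0(X,Y)$ from \eqref{eq_Ric'ascDHsandM} is restored, the symmetric part assembles to $\Ric^{LC}_0(X,Y) - \tfrac{1}{2}\<\io_X H'_0,\io_Y H'_0\>_{g_0} + \tfrac{1}{2}\dal\io_X F',\io_Y F'\dar + (\cDN_X d\phi_0)(Y) + (\cDN_Y d\phi_0)(X)$, which is precisely $\beta'(g_0)(X,Y)$. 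The antisymmetric part equals $-\tfrac{1}{2}(\delta_{g_0}H'_0)(X,Y) - H'_0(X,Y,g_0^{-1}(d\phi_0))$, and distributing the codifferential over the weight in $\beta'(B_0) = \tfrac{1}{2}e^{2\phi_0}\delta_{g_0}(e^{-2\phi_0}H'_0)$ shows this is $-\beta'(B_0)(X,Y)$. Finally, \eqref{eq_RicKpm2} with \eqref{eq_Ric'pm2}, after the analogous manipulation $e^{2\phi_0}D'^{\dagger}(e^{-2\phi_0}F') = D'^{\dagger}F' + 2\io_{g_0^{-1}(d\phi_0)}F'$, yields $\Ric^{\K}_{+-}(X,(0,\Phi)) = -\<\beta'_{\vartheta}(X),\Phi\>_{\g}$. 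Hence $\Ric'(V'_+,V'_-)=0$ is equivalent to $\beta'(g_0) = \beta'(B_0) = \beta'_{\vartheta} = 0$.

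Collecting these equivalences, the system \eqref{eq_thmEOM2} holds if and only if $\RS'_{\gm'} = 2\Lambda_0 + \tfrac{1}{6}\dim{\g}$ together with $\Ric'(V'_+,V'_-)=0$, which is the assertion. I expect the only genuine friction to lie in the last computation: correctly distributing the dilaton-weighted operators $\delta_{g_0}(e^{-2\phi_0}\,\cdot\,)$ and $D'^{\dagger}(e^{-2\phi_0}\,\cdot\,)$ so that the weight terms land with the right coefficients, and verifying that the symmetric off-diagonal block reproduces the full $\beta'(g_0)$ with its trace intact rather than a trace-reversed version --- this being exactly where the separate scalar condition on $\RS'_{\gm'}$ is needed to recover $\beta'(g_0)=0$ from the combined metric and dilaton equations of motion.
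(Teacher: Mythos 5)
Your proof is correct and follows essentially the same route as the paper's own: remove the $e^{\C}$-twist by covariance of all curvature quantities under Courant algebroid isomorphisms, identify $\RS^{\K}_{\G'} - (2\Lambda_{0} + \frac{1}{6}\dim{\g})$ with $\beta'_{\phi_{0}}$ via (\ref{eq_RS'1}) and (\ref{eq_RSKasjinak}), and match the two off-diagonal Ricci blocks with $\beta'(g_{0}) - \beta'(B_{0})$ and $-\<\beta'_{\vartheta},\Phi\>_{\g}$, the symmetric/antisymmetric split in $(X,Y)$ then separating the metric and Kalb-Ramond equations. You also correctly restored the $\Ric^{LC}_{0}(X,Y)$ term that is missing (a typo) from (\ref{eq_Ric'pm1}) as printed, and correctly distributed the dilaton weights in $\delta_{g_{0}}(e^{-2\phi_{0}}H'_{0})$ and $D'^{\dagger}(e^{-2\phi_{0}}F')$; without these the paper's own identities $\Ric^{\K}_{+-}(X,(Y,0)) = \beta'(g_{0})(X,Y) - \beta'(B_{0})(X,Y)$ and $\Ric^{\K}_{+-}(X,(0,\Phi)) = -\<\beta'_{\vartheta}(X),\Phi\>_{\g}$ could not hold.
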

\begin{proof}
For scalar curvatures, we have $\RS'_{\gm'} = \RS^{\K}_{\G'}$ and $\RS'_{E'} = \RS^{K}_{E'}$, where the functions on the right-hand side are given by (\ref{eq_RSKasjinak}) and (\ref{eq_RS'1}, \ref{eq_RS'2}). We thus obtain 
\begin{equation}
\RS'_{\gm'} = \beta'_{\phi_{0}} + ( 2 \Lambda_{0} + \frac{1}{6} \dim{\g}), \; \; \RS'_{E'} = - \frac{1}{6} \dim{\g}. 
\end{equation}
Next, $\cD'$ is Ricci compatible with $\gm'$, if and only if $\Ric^{\K}_{+-} = 0$. Plugging (\ref{eq_Ric'pm1}, \ref{eq_Ric'pm2}) into (\ref{eq_RicKpm1}, \ref{eq_RicKpm2}), one finds
\begin{align}
\Ric^{\K}_{+-}(X,(Y,0)) & = \beta'(g_{0})(X,Y) - \beta'(B_{0})(X,Y), \\
\Ric^{\K}_{+-}(X,(0,\Phi)) & = - \< \beta'(\vartheta), \Phi \>_{\g},
\end{align}
for all $X,Y \in \vf{}$ and $\Phi \in \Gamma(\g_{P})$. The statement of the theorem now follows easily. 
\end{proof}
\section{Reduction procedure} \label{sec_reduction}
In the previous section, we have established a geometrical description of the equations of motion for both the actions (\ref{eq_S}) and (\ref{eq_S0}). Courant algebroids over a principal $G$-bundle $\pi: P \rightarrow M$ can be under some conditions reduced to Courant algebroids over the base manifold $M$. This was in great detail described in \cite{Bursztyn2007726} and in present context also in \cite{Baraglia:2013wua} or \cite{2015LMaPh.tmp...53S}. Moreover, one can discover necessary and sufficient conditions for a reduction of generalized metrics and Levi-Civita connections. We will now briefly recall all necessary notions. 
\subsection{Equivariant Courant algebroids}
Let $(E,\rho,\<\cdot,\cdot\>_{E},[\cdot,\cdot]_{E})$ be a Courant algebroid over a principal $G$-bundle $\pi: P \rightarrow M$. Let $\Re: \g \rightarrow \Gamma(E)$ be a $\R$-linear map preserving the bracket:
\begin{equation}
\Re( [x,y]_{\g}) = [\Re(x),\Re(y)]_{E}, 
\end{equation}
for all $x,y \in \g$, and covering the infinitesimal action $\#: \g \rightarrow \vfP{}$, that is the diagram
\begin{equation}
\begin{tikzcd}
\g \arrow[r,"\Re"] \arrow[rd,"\#"'] & \Gamma(E) \arrow[d,"\rho"] \\
& \vfP{}
\end{tikzcd}
\end{equation}
commutes. It follows that $x \tr \psi \equiv [\Re(x),\psi]_{E}$ defines an infinitesimal action of $\g$ on $\Gamma(E)$. One can show that it also defines an infinitesimal Lie algebra action on a total space manifold $E$. It always integrates \emph{locally} to an action of Lie group $G$ acting via vector bundle isomorphisms over the principal bundle action $R: P \times G \rightarrow P$. This leads one to the following definition:

\begin{definice}
Let $\Re: \g \rightarrow \Gamma(E)$ be a map defined as above. $(E,\rho,\<\cdot,\cdot\>_{E},[\cdot,\cdot]_{E})$ is called an \textbf{equivariant Courant algebroid} if $\tr$ integrates globally to a Lie group action $\fR: E \times G \rightarrow E$ covering the principal bundle action $R$, that is 
\begin{enumerate}
\item For each $g \in G$, $(\fR_{g},R_{g})$ is a vector bundle isomorphism. 
\item For each $x \in \g$ and $\psi \in \Gamma(E)$, one has $x \tr \psi = \ddt \hspace{5mm} \fR_{e^{-tx}}(\psi)$, 
\end{enumerate}
where $\{ \fR_{g}(\psi)\}(m) := \fR_{g}\{ \psi( R_{g^{-1}}(m)) \}$, for all $g \in G$ and $m \in M$. 
\end{definice}
There are several remarks in order. First, observe that one has 
\begin{equation}
\D \<\Re(x),\Re(y)\>_{E} = [\Re(x),\Re(y)]_{E} + [\Re(y),\Re(x)]_{E} = \Re( [x,y]_{\g} + [y,x]_{\g}) = 0. 
\end{equation}
Recall that $\D = \rho^{\ast} \circ d$. Whenever $\rho$ is fiber-wise surjective, the above condition implies that $\<\Re(x),\Re(y)\>_{E}$ is locally constant and thus defines a symmetric bilinear form $(\cdot,\cdot)_{\g}$ on $\g$ via 
\begin{equation} \label{eq_(.,.)}
(x,y)_{\g} = -\<\Re(x),\Re(y) \>_{\g},
\end{equation}
for all $x,y \in \g$, where we for simplicity assume that $M$ is connected. Moreover, it follows that $(\cdot,\cdot)_{\g}$ is ad-invariant with respect to $[\cdot,\cdot]_{\g}$. It is now straightforward to see that $(\fR_{g},R_{g})$ forms a Courant algebroid isomorphism of $(E,\rho,\<\cdot,\cdot\>_{E},[\cdot,\cdot]_{E})$ for each $g \in G$.  We can now make use of the simple observation making things very easy in our case. For the original idea see \cite{2015LMaPh.tmp...53S}. We state it here without the proof.
\begin{lemma} \label{lem_equispecial}
Let $E = \gTP$ be equipped with the standard Courant algebroid structure given by the $H$-twisted Dorfman bracket as in Example \ref{ex_dorfman}. Assume that $\Re: \g \rightarrow \Gamma(\gTP)$ makes $E$ into an equivariant Courant algebroid. Let $A \in \Omega^{1}(P,\g)$ be any principal bundle connection. 

Then there exists a $2$-form $B \in \Omega^{2}(P)$, such that $\Re' = e^{-B} \circ \Re$ makes $(H+dB)$-twisted Dorfman bracket on $\gTP$ into an equivariant Courant algebroid, and $\Re'$ has the form
\begin{equation}
\Re'(x) = (\#{x}, - \frac{1}{2} (A,x)_{\g}). 
\end{equation}
Moreover, one has $(\cdot,\cdot)'_{\g} = (\cdot,\cdot)_{\g}$, and the $3$-form $H' \equiv H + dB$ has to take the form 
\begin{equation}
H' = \pi^{\ast}(H_{0}) + \frac{1}{2} CS_{3}(A),
\end{equation}
where $H_{0} \in \Omega^{3}(M)$ satisfies the equation $dH_{0} + \frac{1}{2} (F \^ F)_{\g} = 0$, and $CS_{3}(A)$ is defined as 
\begin{equation}
CS_{3}(A) = ( dA \^ A )_{\g} + \frac{1}{3} ( [A \^ A]_{\g} \^ A )_{\g}. 
\end{equation}
\end{lemma}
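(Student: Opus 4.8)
The plan is to lean on the elementary fact that a $B$-field transform $e^{-B}$ is a Courant algebroid isomorphism carrying the $H$-twisted Dorfman bracket to the $(H+dB)$-twisted one, and that it is \emph{orthogonal} for the pairing $\<\cdot,\cdot\>$ on $\gTP$. Since $\rho\circ\Re=\#$, I would first write $\Re(x)=(\#x,\xi_x)$ for a unique family $\xi_x\in\dfP{1}$ linear in $x\in\g$, so that, reading off the definition of the bilinear form, $(x,y)_{\g}=-(\xi_x(\#y)+\xi_y(\#x))$. Because \emph{every} admissible $e^{-B}$ preserves $\<\cdot,\cdot\>$, one gets $\<\Re'(x),\Re'(y)\>=\<\Re(x),\Re(y)\>$ and hence $(\cdot,\cdot)'_{\g}=(\cdot,\cdot)_{\g}$ for free; this disposes of the invariance claim immediately and reduces the lemma to two tasks: (i) exhibiting a $B$ that normalises the cotangent part of $\Re$ to $-\tfrac12(A,x)_{\g}$, and (ii) identifying the resulting twist $H'=H+dB$.

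For task (i), the target $\xi_x-\io_{\#x}B=-\tfrac12(A,x)_{\g}$ is equivalent to $\io_{\#x}B=\mu_x$ with $\mu_x:=\xi_x+\tfrac12(A,x)_{\g}$. Using $A(\#y)=y$ together with the symmetry of $(\cdot,\cdot)_{\g}$ and the formula above for it, I would verify the compatibility identity $\mu_x(\#y)+\mu_y(\#x)=0$. This antisymmetry is exactly what allows one to build an honest $2$-form with the prescribed vertical contractions: splitting $TP$ into the horizontal distribution of $A$ and the vertical one and fixing a basis $\{T_a\}$ of $\g$, I would set $B=A^{a}\wedge\mu_{a}+\tfrac12\,\mu_{b}(\#T_a)\,A^{a}\wedge A^{b}$ (with $\mu_a:=\mu_{T_a}$) and check directly that $\io_{\#x}B=\mu_x$; the antisymmetry guarantees the quadratic correction is well defined and cancels the unwanted term. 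This realises $\Re'(x)=(\#x,-\tfrac12(A,x)_{\g})$.

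For task (ii), I would show that $K:=H'-\tfrac12\,CS_3(A)$ is basic, hence equals $\pi^{\ast}(H_0)$ for some $H_0\in\df{3}$. Invariance $\Li_{\#x}K=0$ follows from $G$-invariance of the equivariant structure (so $\Li_{\#x}H'=0$) and the standard $\mathrm{Ad}$-invariance of the Chern--Simons form ($\Li_{\#x}CS_3(A)=0$). For horizontality I would compute contractions. The two-vertical component is pinned down by the bracket-homomorphism $\Re'([x,y]_{\g})=[\Re'(x),\Re'(y)]_{H'}$: its cotangent slot gives $\io_{\#y}\io_{\#x}H'=\Li_{\#x}\xi'_y-\io_{\#y}d\xi'_x+\tfrac12(A,[x,y]_{\g})_{\g}$, and inserting $\xi'_z=-\tfrac12(A,z)_{\g}$ together with $\Li_{\#x}A=-[x,A]_{\g}$, $\io_{\#y}dA=-[y,A]_{\g}$ and $\mathrm{Ad}$-invariance collapses the right-hand side to $-\tfrac12(A,[x,y]_{\g})_{\g}$. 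This I would match against $\tfrac12\io_{\#y}\io_{\#x}CS_3(A)$, computed from the convenient form $CS_3(A)=(F\wedge A)_{\g}-\tfrac16([A\wedge A]_{\g}\wedge A)_{\g}$ using $\io_{\#x}F=0$; the two expressions agree precisely by the same $\mathrm{Ad}$-invariance identities.

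The main obstacle is the remaining mixed component $\io_{\#x}H'$ restricted to the horizontal distribution, which the bracket-homomorphism cannot see. Here I would use the residual freedom of adding a horizontal $2$-form $\hat B$ to $B$ — this leaves $\Re'$ untouched but shifts $\io_{\#x}H'$ by $\Li_{\#x}\hat B$ — together with \emph{compactness of $G$} to average and gauge-fix the induced $G$-action on $\gTP$ to the purely geometric lift, equivalently to arrange $\io_{\#x}H'=-d\xi'_x$. A short computation resting on the identity $([A\wedge A]_{\g},x)_{\g}=([x,A]_{\g}\wedge A)_{\g}$ then gives $-d\xi'_x=\tfrac12\io_{\#x}CS_3(A)$, so $\io_{\#x}K=0$ and $K=\pi^{\ast}(H_0)$ descends to $M$. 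Finally, closedness of the twist, $dH'=0$, combined with the transgression identity $d\,CS_3(A)=(F\wedge F)_{\g}$, yields $dH_0+\tfrac12(F\wedge F)_{\g}=0$. I expect essentially all the effort to sit in this gauge-fixing step and in the careful bookkeeping of the $\mathrm{Ad}$-invariant contraction identities; the rest is formal.
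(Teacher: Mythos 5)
The paper never proves this lemma: it is stated ``without the proof,'' with a pointer to \cite{2015LMaPh.tmp...53S} for the original idea, so your attempt can only be judged on its own merits rather than against a reference argument. Its architecture is the right one, and the computational claims I can verify all check out: orthogonality of $e^{-B}$ does give $(\cdot,\cdot)'_{\g}=(\cdot,\cdot)_{\g}$ for free; the antisymmetry $\mu_{x}(\#y)+\mu_{y}(\#x)=0$ follows exactly as you say from $A(\#y)=y$ and the constancy of $\<\Re(x),\Re(y)\>_{E}$; your explicit $2$-form $B=A^{a}\wedge\mu_{a}+\tfrac12\mu_{b}(\#T_{a})A^{a}\wedge A^{b}$ does satisfy $\io_{\#x}B=\mu_{x}$; the identities $\io_{\#x}CS_{3}(A)=(dA,x)_{\g}$ and $([A\wedge A]_{\g},x)_{\g}=([x,A]_{\g}\wedge A)_{\g}$ are correct; and the bracket-homomorphism condition for the normalized $\Re'$ is, as you found, precisely the statement that $\alpha_{x}:=d\xi'_{x}+\io_{\#x}H'$ is a \emph{horizontal} $2$-form. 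The closing step ($dH'=0$ plus $d\,CS_{3}(A)=(F\wedge F)_{\g}$ forcing $dH_{0}+\tfrac12(F\wedge F)_{\g}=0$) is also fine.

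There are, however, two defects. First, your claim that $\Li{\#x}H'=0$ ``follows from $G$-invariance of the equivariant structure'' is false at the point where you invoke it: equivariance only says the global action is by Courant automorphisms covering $R_{g}$, i.e.\ compositions $e^{\beta_{g}}\circ (R_{g})_{\sharp}$ of the geometric lift with $B$-transforms, which forces $R_{g}^{\ast}H'-H'=d\beta_{g}$ to be \emph{exact} --- automatic anyway since $H'$ is closed --- not zero. Invariance of $H'$ is an output of your gauge-fixing step, not an input: once $\io_{\#x}H'=-d\xi'_{x}$ is arranged, $\Li{\#x}H'=d\,\io_{\#x}H'=0$. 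As written, the invariance of $K$ is asserted before the step that makes it true. Second, the gauge-fixing step itself --- the crux, as you say --- is missing its one non-obvious ingredient. Averaging the cocycle $g\mapsto\beta_{g}$ over the compact group produces a $2$-form $C$ with $\beta_{g}=R_{g}^{\ast}C-C$ (up to sign conventions) and hence $\Li{\#x}C=\alpha_{x}$, but this $C$ has \emph{no reason to be horizontal}, and correcting by a non-horizontal $2$-form destroys the normalization $\Re'(x)=(\#x,-\tfrac12(A,x)_{\g})$ achieved in your step (i). The repair is to replace $C$ by its purely horizontal bidegree component $C_{\mathrm{hor}}$ with respect to the splitting of $TP$ defined by $A$: since $R_{g}$ preserves both the vertical and the horizontal distribution of a principal connection, $\Li{\#x}$ commutes with the bidegree projections, and since $\alpha_{x}$ is purely horizontal (your bracket computation), one still has $\Li{\#x}C_{\mathrm{hor}}=\alpha_{x}$ while now $\io_{\#x}C_{\mathrm{hor}}=0$; this $C_{\mathrm{hor}}$ is the $\hat B$ whose existence you assert. (Equivalently one can reverse the order: average first, note that then $\Li{\#x}\mu_{y}=\mu_{[x,y]_{\g}}$ so that your explicit $B$ becomes $G$-invariant, and normalize second without spoiling $\alpha=0$.) With these two repairs your argument is complete.
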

This lemma has important consequences. First, note that the action $\tr'$ induced by $\Re'$ is
\begin{equation}
\begin{split}
x \tr' (Y,\eta) = & \ [(\#{x}, -\frac{1}{2}(A,x)_{\g}), (Y,\eta)]'_{E} = \big( [\#{x},Y], \Li{\#{x}}\eta + \frac{1}{2}\io_{Y}d(A,x)_{\g} - H'( \#{x},Y,\cdot) \big) \\
= & \ \big( [\#{x},Y], \Li{\#{x}}\eta + \frac{1}{2} \{ \io_{Y}d(A,x)_{\g} - \io_{Y}\io_{\#{x}} CS_{3}(A) \} \big) \\
= & \ \big( [\#{x},Y], \Li{\#{x}}\eta \big),
\end{split}
\end{equation}
where we have used the fact that $\io_{\#{x}}CS_{3}(A) = (dA,x)_{\g}$. In other words, the action $\fR'$ integrating $\tr'$ is an ordinary right translation on the $\gTP$ induced by the principal bundle action $R$ on $P$. Next, we see that $\gTP$ with the $H$-twisted Dorfman bracket is equivariant with given $(\cdot,\cdot)_{\g}$, if and only if the \textbf{Pontriyagin class} $[(F \^ F)_{\g}]_{dR}$ vanishes. Note that $(\cdot,\cdot)_{\g}$ can be zero. 

Finally, we may consider choosing a different principal bundle connection $A' \in \Omega^{1}(P,\g)$. There thus exists a unique $\varrho \in \Omega^{1}(M,\g_{P})$, such that $X'^{h} = X^{h} - j(\varrho(X))$, see (\ref{eq_horliftchange}). Let 
\begin{equation}
\Re(x) = (\#{x}, - \frac{1}{2}(A,x)_{\g}), \; \; \Re'(x) = (\#{x}, - \frac{1}{2}(A',x)_{\g}). 
\end{equation}
By previous lemma, there exists a (non-unique) $2$-form $B \in \Omega^{2}(P)$, such that $\Re'(x) = e^{-B}( \Re(x))$. We thus have to find a solution $B$ to the equation $\frac{1}{2} (A'-A,x)_{\g} = -\io_{\#{x}}B$. It follows that $B$ must be a $G$-invariant $2$-form, that is $\Li{\#{x}}B = 0$. In fact, it is easy to find the general solution:
\begin{equation} \label{eq_BasB0plusproduct}
B = \pi^{\ast}(B_{0}) + \frac{1}{2} (A' \^ A)_{\g}
\end{equation}
As $B$ is $G$-invariant, it can be written with respect to the decomposition $\mathfrak{X}_{G}(P) \cong \vf{} \oplus \Gamma(\g_{P})$ induced by the connection $A$. One can write $B$ in the block form as 
\begin{equation}
B = \bm{B_{0}}{\frac{1}{2} \varrho^{T} s}{ -\frac{1}{2} s \varrho}{0},
\end{equation}
where $s \in \Hom(\g_{P},\g^{\ast}_{P})$ is the map induced by a symmetric ad-invariant bilinear form $(\cdot,\cdot)_{\g}$. Note that for $(\cdot,\cdot)_{\g} = \<\cdot,\cdot\>_{\g}$ and $\varrho = \vartheta$, we obtain exactly the same form of $B$ as in (\ref{eq_gBrelevant}). 
\subsection{Reduction of Courant algebroids}
Let us assume that $E = \gTP$ is equipped with the $H$-twisted Dorfman bracket and $A \in \Omega^{1}(P,\g)$ is a principal bundle connection. Let $\Re$ be of the special form given by Lemma \ref{lem_equispecial}, that is
\begin{equation} \label{eq_ReHrelevant}
\Re(x) = (\#{x}, - \frac{1}{2}(A,x)_{\g}), \; \; H = \pi^{\ast}(H_{0}) + \frac{1}{2} CS_{3}(A),
\end{equation}
for all $x \in \g$, where $(\cdot,\cdot)_{\g}$ is any invariant symmetric bilinear form on $\g$. The reduced Courant algebroid $E_{red}$ over $M$ is constructed as follows. Note that $\Re$ can be viewed as a vector bundle map from a trivial vector bundle $P \times \g$ to $E$ which coincides with the original map on constant sections. It is fiber-wise injective and we can thus define a subbundle $K \subseteq \gTP$ as $K = \Re(P \times \g)$. It is easy to see that $K$ is invariant with respect to the action $\fR$ integrating $\Re$. Indeed, let $\Phi \in \Gamma(P \times \g) \cong C^{\infty}(P,\g)$. Then one has 
\begin{equation}
\begin{split}
x \tr \Re(\Phi) = & \ [\Re(x), \Phi^{\alpha} \Re(t_{\alpha}) ]_{E} = \Phi^{\alpha} [ \Re(x), \Re(t_{\alpha})]_{E} + (\rho(\Re(x)).\Phi^{\alpha}) \Re(t_{\alpha}) \\
= & \ \Phi^{\alpha} \Re( [x,t_{\alpha}]_{\g}) + (\#{x}.\Phi^{\alpha}) \Re(t_{\alpha}) \\
= & \ \Re( [x,\Phi]_{\g} + \#{x}.\Phi ).
\end{split}
\end{equation}
This proves that $x \tr \Re(\Phi) \in \Gamma(K)$. Moreover, we see that $\Re(\Phi) \in \Gamma_{G}(K)$, if and only if $\Phi \in C^{\infty}_{Ad}(P,\g) \cong \Gamma(\g_{P})$. In other words, $\Re$ induces a $\cif$-module isomorphism $\Gamma(\g_{P}) \cong \Gamma_{G}(K)$. Explicitly
\begin{equation}
\Gamma_{G}(K) = \{ (j(\Phi), -\frac{1}{2}(A,\Phi)_{\g}) \; | \; \Phi \in \Gamma(\g_{P}) \},
\end{equation}
where $j: \Gamma(\g_{P}) \rightarrow \mathfrak{X}_{G}(P)$ is the inclusion of $\g_{P}$ into $G$-invariant vector fields on $P$. Next, one considers the orthogonal complement $K^{\perp}$. This is again a $G$-invariant subbundle, as one has 
\begin{equation}
\< x \tr \psi, \Re(y) \>_{E} = \< [\Re(x),\psi]_{E}, \Re(y) \>_{E} = \#{x}.\<\psi,\Re(y)\>_{E} - \< \psi, \Re([x,y]_{\g})\>_{E} = 0,
\end{equation}
for all $x,y \in \g$, whenever $\psi \in \Gamma(K^{\perp})$. Using the explicit form of $\Re$, one can see that 
\begin{equation}
\Gamma_{G}(K^{\perp}) = \{ (X^{h} + j(\Phi), \pi^{\ast}(\xi) + \frac{1}{2} (A, \Phi)_{\g}) \; | \; (X,\Phi,\xi) \in \Gamma(TM \oplus \g_{P} \oplus T^{\ast}M) \}. 
\end{equation}
Finally, for general $(\cdot,\cdot)_{\g}$, the intersection $K \cap K^{\perp}$ is not trivial. Instead, one finds 
\begin{equation}
\Gamma_{G}(K \cap K^{\perp}) = \{ (j(\Phi), 0) \; | \; \Phi \in \Gamma(\mathfrak{n}_{P}) \},
\end{equation}
where $\n_{P} \subseteq \g_{P}$ is a subbundle of $\g_{P}$, such that $\Gamma(\n_{P}) \cong C^{\infty}_{Ad}(P,\n)$, and $\n \equiv \ker(s)$. The reduced Courant algebroid $E_{red}$ is defined on the quotient vector bundle:
\begin{equation}
E_{red} = \frac{K^{\perp} / G}{ (K^{\perp} \cap K) / G}.
\end{equation}
On the level of sections, we have $\Gamma(E_{red}) = \Gamma_{G}(K^{\perp}) / \Gamma_{G}(K \cap K^{\perp})$. The Courant algebroid structure on $E_{red}$ is naturally induced by the original structure on $E$. 

Let $\fPsi': \Gamma(TM \oplus \g_{P} \oplus T^{\ast}M) \rightarrow \Gamma_{G}(K^{\perp})$ be the map defined as 
\begin{equation}
\fPsi'( X, \Phi,\xi) = (X^{h} + j(\Phi), \pi^{\ast}(\xi) + \frac{1}{2} (A,\Phi)_{\g}), 
\end{equation}
for all $(X,\Phi,\xi) \in \Gamma(TM \oplus \g_{P} \oplus T^{\ast}M)$. It follows from the above discussion that the map $\fPsi \in \Hom(E',E_{red})$ defined as $\fPsi(X,[\Phi],\xi) = [\fPsi'(X,\Phi,\xi)]$ forms a vector bundle isomorphism, where $E'$ is defined as $E' = TM \oplus (\g_{P}/ \n_{P}) \oplus T^{\ast}M$. Square brackets denote equivalence classes in $\Gamma(\g_{P}/\n_{P})$ and $\Gamma(E_{red})$, respectively. Let us now examine the resulting induced Courant algebroid structure on $E'$. For the anchor $\rho'$, we get
\begin{equation}
\rho'(X,[\Phi],\xi) \circ \pi = T(\pi) \circ \rho( \fPsi'(X,\Phi,\xi)) = T(\pi)(X^{h}) = X \circ \pi,
\end{equation}
and thus $\rho' \in \Hom(E',TM)$ is just a projection onto $TM$. For the pairing $\<\cdot,\cdot\>_{E'}$, one obtains
\begin{equation}
\begin{split}
\< (X,[\Phi],\xi), (Y,[\Phi'],\eta) \>_{E'} \circ \pi = & \ \< \fPsi'(X,\Phi,\xi), \fPsi'(Y,\Phi',\eta) \>_{E} \\
= & \ \{ \< \eta,X\> + \<\xi,Y\> + (\Phi,\Phi')_{\g} \} \circ \pi
\end{split}
\end{equation}
We find that $\<(X,[\Phi],\xi), (Y,[\Phi'],\eta) \>_{E'} = \<\eta,X\> + \<\xi,Y\> + (\Phi,\Phi')_{\g}$. This is a well defined non-degenerate $\cif$-linear symmetric form, hence a fiber-wise metric on $E'$. Next, one gets
\begin{equation} \label{eq_bracketE'gen}
\begin{split}
[(X,[\Phi],\xi),(Y,[\Phi'],\eta)]_{E'} = &\big( [X,Y], [D_{X}\Phi' - D_{Y}\Phi - [\Phi,\Phi']_{\g} -F(X,Y) ], \Li{X}\eta - \io_{Y}d\xi \\
& -H_{0}(X,Y,\cdot) - (F(X),\Phi')_{\g} + (F(Y),\Phi)_{\g} + (D\Phi,\Phi')_{\g}
\big).
\end{split}
\end{equation}
To see that this bracket is well defined, notice that the covariant derivative $D_{X}$ preserves the subbundle $\n_{P}$. This follows from the metric compatibility of $D$ and $(\cdot,\cdot)_{\g}$: 
\begin{equation}
X.(\Phi,\Phi')_{\g} = ( D_{X}\Phi, \Phi')_{\g} + (\Phi, D_{X}\Phi')_{\g},
\end{equation}
for all $X \in \vf{}$ and $\Phi,\Phi' \in \Gamma(\g_{P})$. This equation can be derived from the axiom (\ref{eq_killingequation}) for $E$. Moreover, the ad-invariance of $(\cdot,\cdot)_{\g}$ implies that $\Gamma(\n_{P}) \subseteq \Gamma(\g_{P})$ is an ideal with respect to $[\cdot,\cdot]_{\g}$. These two properties ensure that the right-hand side of (\ref{eq_bracketE'gen}) does not depend on the choice of the representatives $\Phi$ and $\Phi'$. It is a straightforward verification that $(E',\rho',\<\cdot,\cdot\>_{E'},[\cdot,\cdot]_{E'})$ forms a Courant algebroid. 

To conclude this subsection, note that we can choose $(\cdot,\cdot)_{\g} = \<\cdot,\cdot\>_{\g}$. In this case $\n_{P} = 0$, and thus $E' = TM \oplus \g_{P} \oplus T^{\ast}M$. The bracket $[\cdot,\cdot]_{E'}$ is precisely the one (\ref{eq_E'bracket}) we have used in the previous section. Naturally, this is exactly how we have originally obtained such a Courant algebroid. Moreover, in this case $K \cap K^{\perp} = 0$, which implies that we have a decomposition 
\begin{equation}
\Gamma_{G}(E) = \Gamma_{G}(K^{\perp}) \oplus \Gamma_{G}(K) \cong \Gamma(E') \oplus \Gamma(\g_{P}). \end{equation}
As $\Gamma_{G}(E)$ is involutive with respect to $[\cdot,\cdot]_{E}$, it is convenient to write this bracket with respect to the above decomposition. For $(\psi,\Phi) \in \Gamma(E' \oplus \g_{P})$ and $(\psi',\Phi') \in \Gamma(E' \oplus \g_{P})$, one finds
\begin{equation}
[(\psi,\Phi), (\psi',\Phi')]_{E} = ( [\psi,\psi']_{E'} - [ \mathbf{i}(\Phi), \mathbf{i}(\Phi')]_{E'}, \mathbf{p}\{ [\psi, \mathbf{i}(\Phi')]_{E'} - [\psi', \mathbf{i}(\Phi)]_{E'} \} - 2 [\Phi,\Phi']_{\g} ),
\end{equation}
where $\mathbf{i} \in \Hom(\g_{P},E')$ is the inclusion and $\mathbf{p} \in \Hom(E', \g_{P})$ is the projection. 

\subsection{Generalized metric and connections relevant for reduction} \label{sec_genmetconrel}
Having a reduction procedure from the Courant algebroid on $E = \gTP$ to $E' = TM \oplus \g_{P} \oplus T^{\ast}M$, we can try to reduce the other ingredients as well. Any generalized metric $\gm$ on $E$ defines an involution $\tau \in \Aut(E)$. Let $\Re: \g \rightarrow \Gamma(E)$ be the map making $E$ into an equivariant Courant algebroid. We assume that $\Re$ and $H$ take the form (\ref{eq_ReHrelevant}). We impose the conditions
\begin{equation}
\tau( [\Re(x), (Y,\eta)]_{E}) = [ \Re(x), \tau(Y,\eta) ]_{E}, \; \; \tau(K^{\perp}) \subseteq K^{\perp}. 
\end{equation}
First condition ensures that $\tau( \Gamma_{G}(E)) \subseteq \Gamma_{G}(E)$. In terms of $(g,B)$, it is equivalent to $g$ and $B$ being the $G$-invariant tensor fields on $P$. To examine the second condition, note that we can restrict $\tau$ to $\Gamma_{G}(E)$ and use the decomposition $\Gamma_{G}(E) \cong \Gamma(E') \oplus \Gamma(\g_{P})$. With respect to this, we have
\begin{equation} \label{eq_taugEblocks}
\tau = \bm{\tau'}{\tau_{1}}{0}{\tau_{\g}}, \; \; g_{E} = \bm{g_{E'}}{0}{0}{-c}. 
\end{equation}
The orthogonality of $\tau$ immediately implies $\tau_{1} = 0$. Moreover, the positive definite fiber-wise metric $\h = -c \tau_{\g}$ on $\g_{P}$ must satisfy $\h c^{-1} \h = c$. For a compact Lie group $G$, there is only one such $\h$, namely $\h = -c$, and consequently $\tau_{\g} = 1$. This observation proves that $\tau( \Re(x)) = \Re(x)$, and consequently $\Re(x) \in \Gamma(V_{+})$. This implies that $g$ and $B$ written as formal two by two block matrices with respect to the decomposition $\mathfrak{X}_{G}(P) \cong TM \oplus \g_{P}$ have the form
\begin{equation}
g = \bm{1}{\vartheta^{T}}{0}{1} \bm{g_{0}}{0}{0}{-\frac{1}{2}c} \bm{1}{0}{\vartheta}{1}, \; \; B =  \bm{B_{0}}{ \frac{1}{2} \vartheta^{T}c}{- \frac{1}{2} c \vartheta}{0},
\end{equation}
for a Riemannian metric $g_{0} > 0$, $B_{0} \in \df{2}$ and $\vartheta \in \Omega^{1}(M,\g_{P})$. But this is precisely the form (\ref{eq_gBrelevant}) required for Kaluza-Klein reduction in Theorem \ref{thm_main}. Moreover, $\tau' \in \Aut(E')$ defines a generalized metric $\gm'$ on $E'$. Plugging in the above form of $g$ and $B$, one finds, not very surprisingly, exactly the generalized metric (\ref{eq_G'metric}). In particular, every generalized metric $\gm'$ can be obtained by the reduction of the unique generalized metric $\gm$. Note that one can write $\gm$ in the block form with respect to the decomposition $\Gamma_{G}(E) \cong \Gamma(E') \oplus \Gamma(\g_{P})$ as
\begin{equation} \label{eq_Gblocks}
\gm = \bm{\gm'}{0}{0}{-c}. 
\end{equation}
Let us now consider a Courant algebroid connection $\cD$ on $E$. First, we impose the condition:
\begin{equation}
[\Re(x), \cD_{\Psi}\Psi']_{E} = \cD_{[\Re(x),\Psi]_{E}} \Psi' + \cD_{\Psi} [\Re(x),\Psi']_{E}, 
\end{equation}
for all $\Psi,\Psi' \in \Gamma(E)$, which ensures that $\cD$ preserves invariant sections. Next, we want $\cD_{\Psi} \Psi' \in \Gamma(K^{\perp})$ whenever $\Psi,\Psi' \in \Gamma(K^{\perp})$. Furthermore, we want $\cD$ to be a Levi-Civita connection on $E$ with respect to $\gm$. It turns out that the most general such $\cD$ has the form
\begin{equation}
\cD_{(\psi,\Phi)} = \bm{\cD'_{\psi}}{V_{\Phi}}{c^{-1} V_{\Phi}^{T} g_{E'}}{D_{\rho'(\psi)} + c^{-1} U(\psi,\star,\cdot) - \frac{2}{3}[\Phi,\star]_{\g}  + c^{-1} \mathfrak{a}(\Phi,\star,\cdot)} ,
\end{equation}
where $\cD' \in LC(E',\gm')$ is a Levi-Civita connection on $E'$ with respect to $\gm'$, $V \in \Omega^{1}(\g_{P}) \otimes \End(\g_{P},V'_{+})$ and $\mathfrak{a} \in \Omega^{1}(\g_{P}) \otimes \Omega^{2}(\g_{P})$ satisfies $\mathfrak{a}(\Phi,\Phi',\Phi'') + cyclic(\Phi,\Phi',\Phi'') = 0$. Finally, $U \in \Omega^{1}(E') \otimes \Omega^{2}(\g_{P})$ is determined by $V$ as 
\begin{equation}
U(\psi,\Phi,\Phi') = \< \psi, V_{\Phi}(\Phi') - V_{\Phi'}(\Phi)\>_{E'} - \< [\Phi,\Phi']_{\g}, \mathbf{p}(\psi) \>_{\g}.
\end{equation}
For the purposes of this paper, we will assume the simplest scenario, where $V = \mathfrak{a} = 0$. The resulting connection reads
\begin{equation} \label{eq_conrelevant}
\cD_{(\psi,\Phi)} = \bm{ \cD'_{\psi}}{0}{0}{ D_{\rho'(\psi)} - [\mathbf{p}(\psi), \star]_{\g} - \frac{2}{3}[\Phi,\star]_{\g}}
\end{equation}
This connection does not contain any additional data except for the Levi-Civita connection $\cD'$ on $E'$. We say that (\ref{eq_conrelevant}) is a Levi-Civita connection on $E$ which is \textbf{relevant for reduction}. 
\subsection{Comparison of Ricci tensors and of scalar curvatures} \label{subsec_comparison}
Let $\cD \in LC(E,\gm)$ be a Levi-Civita connection on $E$ relevant for reduction. The main goal of this subsection to compare its Ricci tensor $\Ric$ and consequently also scalar curvatures to the Ricci tensor $\Ric'$ of the reduced connection $\cD'$. As all quantities are tensorial, we may work with invariant sections and use the decomposition $\Gamma_{G}(E) \cong \Gamma(E') \oplus \Gamma(\g_{P})$. Using (\ref{eq_conrelevant}), it is then a straightforward calculation to prove the following relations: 
\begin{align}
\label{eq_RicasRic'1} \Ric((\psi,0),(\psi',0)) & = \Ric'(\psi,\psi'), \\
\label{eq_RicasRic'2} \Ric((0,\Phi),(0,\Phi')) &= - \frac{1}{6} \<\Phi,\Phi'\>_{\g}, \\
\label{eq_RicasRic'3} \Ric((\psi,0),(0,\Phi')) &= 0.
\end{align}
We slightly abuse the notation, as $\Ric$ is a tensor on $\gTP$. However, evaluated on invariant sections, it defines a function on $M$ and the above relations make sense. Note that one has to use the fact that $\<\cdot,\cdot\>_{\g}$ comes from the Killing form of $\g$, that is 
\begin{equation}
\< \Phi,\Phi'\>_{\g} = \< \Phi^{\alpha}, [\Phi,[\Phi',\Phi_{\alpha}]_{\g}]_{\g} \>, 
\end{equation}
where $\{ \Phi_{\alpha} \}_{\alpha=1}^{\dim{\g}}$ is any local basis for $\Gamma(\g_{P})$.  From (\ref{eq_taugEblocks}, \ref{eq_Gblocks}) it is now easy to find the corresponding relation of the two scalar curvatures. One gets 
\begin{equation} \label{eq_RSrel}
\RS_{\gm} = \RS'_{\gm'} \circ \pi + \frac{1}{6} \dim{\g}, \; \; \RS_{E} = \RS'_{E'} \circ \pi + \frac{1}{6} \dim{\g}. 
\end{equation}
Finally, one can see from (\ref{eq_taugEblocks}) that under the decomposition $\Gamma_{G}(E) \cong \Gamma(E') \oplus \Gamma(\g_{P})$, one has 
\begin{equation}
\Gamma_{G}(V_{+}) = \Gamma(V'_{+}) \oplus \Gamma(\g_{P}), \; \; \Gamma_{G}(V_{-}) = \Gamma(V'_{-}) \oplus \{ 0 \}.
\end{equation}
Together with (\ref{eq_RicasRic'1} - \ref{eq_RicasRic'3}), this proves the following proposition right away: 
\begin{tvrz} \label{tvrz_Riccomp}
Let $\cD \in \LC(E,\gm)$ be a Levi-Civita connection on $E$ relevant for reduction, inducing a Levi-Civita connection $\cD' \in LC(E',\gm')$ on $E'$.

Then $\cD$ is Ricci compatible with $\gm$, if and only if $\cD'$ is Ricci compatible with $\gm'$. 
\end{tvrz}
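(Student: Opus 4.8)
The plan is to reduce the statement to a short bilinear computation using the three identities (\ref{eq_RicasRic'1})--(\ref{eq_RicasRic'3}) together with the eigenbundle decomposition of $V_{\pm}$ under $\Gamma_{G}(E) \cong \Gamma(E') \oplus \Gamma(\g_{P})$. Recall that Ricci compatibility of $\cD$ with $\gm$ is by definition the condition $\Ric(V_{+},V_{-}) = 0$, and likewise $\cD'$ is Ricci compatible with $\gm'$ precisely when $\Ric'(V'_{+},V'_{-}) = 0$. Since $\Ric$ and $\Ric'$ are tensorial, it suffices to test these conditions on invariant sections, which is exactly where the relations (\ref{eq_RicasRic'1})--(\ref{eq_RicasRic'3}) apply.

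First I would fix the parametrization coming from the block form (\ref{eq_taugEblocks}): because $\tau_{1} = 0$ and $\tau_{\g} = 1$, a general invariant section of $V_{+}$ has the form $(\psi_{+},\Phi)$ with $\psi_{+} \in \Gamma(V'_{+})$ and $\Phi \in \Gamma(\g_{P})$ arbitrary, whereas a general invariant section of $V_{-}$ has the form $(\psi_{-},0)$ with $\psi_{-} \in \Gamma(V'_{-})$. Then I would expand $\Ric\big((\psi_{+},\Phi),(\psi_{-},0)\big)$ by bilinearity into $\Ric\big((\psi_{+},0),(\psi_{-},0)\big) + \Ric\big((0,\Phi),(\psi_{-},0)\big)$. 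The first summand equals $\Ric'(\psi_{+},\psi_{-})$ by (\ref{eq_RicasRic'1}); the second vanishes by (\ref{eq_RicasRic'3}) together with the symmetry of the Ricci tensor, since $(\psi_{-},0)$ is purely of the $E'$ type and $(0,\Phi)$ purely of the $\g_{P}$ type.

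Consequently $\Ric\big((\psi_{+},\Phi),(\psi_{-},0)\big) = \Ric'(\psi_{+},\psi_{-})$ for every choice of $\psi_{+},\Phi,\psi_{-}$. As the data vary, the left-hand side exhausts $\Ric(V_{+},V_{-})$ and the right-hand side exhausts $\Ric'(V'_{+},V'_{-})$; in particular the free $\g_{P}$-component $\Phi$ decouples completely and obstructs nothing. Hence $\Ric(V_{+},V_{-}) = 0$ holds if and only if $\Ric'(V'_{+},V'_{-}) = 0$, which is the claimed equivalence. I do not expect a genuine obstacle here, as the real work is already packaged in (\ref{eq_RicasRic'1})--(\ref{eq_RicasRic'3}); the only points needing care are invoking the symmetry of $\Ric$ to kill the mixed term $\Ric\big((0,\Phi),(\psi_{-},0)\big)$ and checking that these identities match the eigenbundle decomposition correctly. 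Note that (\ref{eq_RicasRic'2}) plays no role in the compatibility statement itself -- it feeds only the scalar-curvature relation (\ref{eq_RSrel}) -- so I would record but not use it.
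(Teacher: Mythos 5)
Your proof is correct and takes essentially the same route as the paper: the paper likewise combines the decomposition $\Gamma_{G}(V_{+}) = \Gamma(V'_{+}) \oplus \Gamma(\g_{P})$, $\Gamma_{G}(V_{-}) = \Gamma(V'_{-}) \oplus \{0\}$ with tensoriality and the relations (\ref{eq_RicasRic'1})--(\ref{eq_RicasRic'3}) to conclude ``right away.'' Your additional care in invoking the symmetry of $\Ric$ to handle the argument order in (\ref{eq_RicasRic'3}), and your remark that (\ref{eq_RicasRic'2}) is only needed for the scalar-curvature relation (\ref{eq_RSrel}), are both accurate and consistent with the paper's intended argument.
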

\section{Proof of the main theorem} \label{sec_proof}
We have now finished the necessary preparations to prove the main theorem of this paper. It is a neat combination of Sections \ref{sec_EOMgeom} and \ref{sec_reduction}. 

\begin{proof}[Proof of Theorem \ref{thm_main}]

Let $\gm$ be a generalized metric (\ref{eq_exgenmetric}) on $E = \gTP$ corresponding to a pair $(g,B)$, and let $\gm'$ be a generalized metric (\ref{eq_G'metric}) on $E' = TM \oplus \g_{P} \oplus T^{\ast}M$ corresponding to a pair $(g_{0},B_{0},\vartheta)$. The assumption on $3$-form $H$ allows one to reduce the Courant algebroid $E$ onto $E'$ as described in the previous section, see the relations (\ref{eq_ReHrelevant}). 

Under the assumptions (\ref{eq_gBrelevant}) of the theorem, $\gm$ reduces to $\gm'$ in the sense of Subsection \ref{sec_genmetconrel}. Let $\cD' \in LC(E',\gm')$ be the connection defined in Theorem \ref{thm_eomheterotic}. 

Now, \emph{define} the connection $\cD \in LC(E,\gm)$ using the block form (\ref{eq_conrelevant}). This determines it uniquely on invariant sections of $E$. As such sections generate $\Gamma(E)$, we extend $\cD$ to all sections via the rules (\ref{eq_conleibniz}). We claim that $\cD$ satisfies the assumptions of Theorem \ref{thm_eomSG} for $\phi = \phi_{0} \circ \pi$. As this involves a slightly non-trivial discussion using the twists of the both Courant brackets, we refer to Section $5$ of our previous paper \cite{Jurco:2015bfs} where we proved this in detail.

Theorem \ref{thm_eomSG} says that $\beta_{g} = \beta_{B} = 0$, if and only if $\cD$ is Ricci compatible with $\gm$. By Proposition \ref{tvrz_Riccomp}, this is equivalent to the Ricci compatibility of $\cD'$ with $\gm'$. Finally, by Theorem \ref{thm_eomheterotic}, this is equivalent to $\beta'_{g_{0}} = \beta'_{B_{0}} = \beta_{\vartheta} = 0$. Next, the combination of Theorem \ref{thm_eomSG}, Theorem \ref{thm_eomheterotic} and the relation (\ref{eq_RSrel}) gives 
\begin{equation}
\begin{split}
\beta_{\phi} = & \ \RS_{\gm} - 2 \Lambda = \RS'_{\gm'} \circ \pi + \frac{1}{6} \dim{\g} - 2 \Lambda \\
= & \ \beta'_{\phi_{0}} \circ \pi  + (2 \Lambda_{0} + \frac{1}{6} \dim{\g}) + \frac{1}{6} \dim{\g} - 2 \Lambda \\
= & \ \beta'_{\phi_{0}} \circ \pi + (\frac{1}{3} \dim{\g} - 2 (\Lambda - \Lambda_{0}) ).
\end{split}
\end{equation}
This is where the last assumption $\Lambda = \Lambda_{0} + \frac{1}{6} \dim{\g}$ becomes important, and we get
\begin{equation} \beta_{\phi} = \beta'_{\phi_{0}} \circ \pi. \end{equation}
We have thus argued that the system (\ref{eq_thmEOM1}) is equivalent to (\ref{eq_thmEOM2}) which by Theorem \ref{thm_EOM} implies the main statement of the theorem. 
\end{proof}
\begin{rem}
We have swept several technical details under the carpet. 

First, Theorem \ref{thm_main} does not assume anything about the signature of the metric $g_{0}$, whereas in the proof, we have worked with the Riemannian metric $g_{0} > 0$. However, the fiber-wise metrics $\gm$ and $\gm'$ remain well-defined and non-degenerate (they are not positive definite), and they still induce the decompositions $E = V_{+} \oplus V_{-}$ and $E' = V'_{+} \oplus V'_{-}$ required to define the Ricci compatibility. All conclusions drawn from the calculations involving Levi-Civita connections thus remain valid. 

Next, if we want to talk about Courant algebroids, the $3$-form $H$ twisting the Dorfman bracket must be closed. However, the form $H = \pi^{\ast}(H_{0}) + \frac{1}{2} CS_{3}(A)$ can be closed only if the first Pontriyagin class $[\< F \^ F\>_{\g}]_{dR}$ vanishes. See \cite{2005math......9563B} for elaborate discussion in the context of Courant algebroids. However, $H$ must be closed only for $E$ and $E'$ to satisfy the full Leibniz identities (\ref{eq_leibnizidentity}). But the only relevant property for all calculations done in this paper is (\ref{eq_rhohom}), which remains valid also for $dH \neq 0$. As we have already noted in Example \ref{ex_dorfman}, we can without issues work with pre-Courant algebroids instead of Courant algebroids. 

Finally, we have assumed compactness of Lie group $G$. This had three implications. First, the Killing form $\<\cdot,\cdot\>_{\g}$ is negative definite. This is necessary when working with Riemannian $g$ due to (\ref{eq_gBrelevant}). If we allow any signature of $g$, this is not needed. Next, the compactness was used in Lemma \ref{lem_equispecial} to find the special form of the map $\Re$. However, we can simply define $\Re$ by (\ref{lem_equispecial}). Finally, for non-compact $G$, the generalized metric $\gm'$ given by (\ref{eq_G'metric}) can be more general, and similarly with the generalized metric $\gm$ relevant for reduction. This poses no problems, as we can always assume $\gm'$ and $\gm$ to be of the required form. Altogether, the compactness assumption on $G$ can be dropped. 
\end{rem}
\section{Analysis of the action $S_{0}$} \label{sec_analysis}
\subsection{Rewriting in terms of local fields}
As a last bit of this paper, let us examine the action (\ref{eq_S0}). We have already argued that $F'$ is actually a curvature $2$-form of a principal bundle connection $A' \in \Omega^{1}(P,\g)$ which is related to the assumed fixed connection $A$ by (\ref{eq_horliftchange}). In the usual treatment of Yang-Mills theory, one uses local sections (equivalent to the choice of gauge) to obtain locally defined objects on $M$, ensuring that the resulting action functional does not depend on the choice of the gauge. 

Let $\sigma: U \rightarrow P$ be a smooth local section of $\pi: P \rightarrow M$, that is in particular $\pi \circ \sigma = 1_{U}$. Let $\Omega \in \Omega^{2}(P,\g)$ and $\Omega' \in \Omega^{2}(P,\g)$ be the curvature $2$-forms of the connections $A$ and $A'$, respectively. Define their local counterparts on $U$ as
\begin{equation}
\A = \sigma^{\ast}(A), \; \; \F = \sigma^{\ast}(\Omega), \; \; \A' = \sigma^{\ast}(A'), \; \; \F' = \sigma^{\ast}(\Omega'). 
\end{equation}
The first term in $S_{0}$ can be on $U$ rewritten using the local form $\F' \in \Omega^{2}(U,\g)$:
\begin{equation}
\dal F' , F' \dar = \< \F' \^ \ast_{g_{0}} \F' \>_{\g} \equiv \dal \F', \F' \dar. 
\end{equation}
This is a standard kinetic term for the non-Abelian gauge field $\A'$ in the Yang-Mills action functional. It is more interesting to examine the $3$-form $H'_{0}$. Note that its restriction onto $U$ is a priori gauge invariant. To find the local expression in terms of the local connection forms, note that for all $X \in \mathfrak{X}(U)$, one has
\begin{equation}
(\A' - \A)(X) = \vartheta(X) \circ \sigma.
\end{equation}
Using this expression, one can replace all occurrences of $\vartheta$ in $H'_{0}$ with the difference $\A' - \A$. The resulting restriction of $H'_{0}$ onto $U$ gives 
\begin{equation} \label{eq_H'0jinak}
H'_{0} = d(B_{0} + \frac{1}{2} \< \A' \^ \A \>_{\g}) + \{ H_{0} + \frac{1}{2} \sC_{3}(\A) \} - \frac{1}{2} \sC_{3}(\A'),
\end{equation}
where $\sC_{3}(\A) \in \Omega^{3}(U)$ is a pullback of the Chern-Simons $3$-form for $A$, that is 
\begin{equation}
\sC_{3}(\A) \equiv \sigma^{\ast}(CS_{3}(A)) = \< \F \^ \A \>_{\g} - \frac{1}{3!} \<[\A \^ \A]_{\g} \^ \A\>_{\g}. 
\end{equation}
The definition of $\sC_{3}(\A')$ is analogous. One can now check explicitly that the expression for $H'_{0}$ is invariant with respect to the usual form of local gauge transformations, that is if one considers instead the section $\~\sigma(m) = \sigma(m) \cdot g(m)$ for all $m \in U$, where $g: U \rightarrow G$ an arbitrary smooth function. One gets
\begin{equation} \label{eq_Agauge}
\~\A = Ad_{g^{-1}}(\A) + g^{\ast}(\theta^{L}),
\end{equation}
where $\theta^{L} \in \Omega^{1}(G,\g)$ is a left Maurer-Cartan form on $G$. A similar rule holds for $\A'$. With some effort, one can show that the local Chern-Simons form changes under this transformation as 
\begin{equation} \label{eq_CSgauge}
\sC_{3}(\~\A) = \sC_{3}(\A) + d ( \< \A \^ g^{\ast}(\theta^{R}) \>_{\g}) - 2 g^{\ast}(\eta),
\end{equation}
where $\theta^{R}$ is the right Maurer-Cartan form on $G$ and $\eta \in \Omega^{3}(G)$ is the canonical biinvariant Cartan $3$-form $\eta = \frac{1}{12} \< [\theta^{L} \^ \theta^{L}]_{\g} \^ \theta^{L} \>_{\g}$. The similar rule applies for $\sC_{3}(\A')$. It is now straightforward to see that the contributions from the gauge transformation cancel and $H'_{0}$ remains unchanged. Note that one often considers only the infinitesimal gauge transformations, where for $\lambda: U \rightarrow \g$, one obtains the following rules,
\begin{equation}
\~A = \A + [\A,\lambda]_{\g} + d\lambda, \; \; \~\F = \F + [\F,\lambda]_{\g}, \; \; \sC_{3}(\~\A) = \sC_{3}(\A) + d( \< d\A, \lambda\>_{\g}),
\end{equation}
obtained from (\ref{eq_Agauge}) and (\ref{eq_CSgauge}) by taking $g = \exp(t\lambda)$ and collecting at most linear terms in the expansion with respect to the infinitesimal parameter $t$. 

To relate this to the usual notation in physics literature, recall that $H_{0} + \frac{1}{2} \sC_{3}(\A)$ is a closed $3$-form on $U$. Assuming that $U$ is contractible, there exists $\B_{0} \in \Omega^{2}(U)$ with $H_{0} + \frac{1}{2} \sC_{3}(\A) = d\B_{0}$. Define a $2$-form $\B \in \Omega^{2}(U)$ as $\B = B_{0} + \frac{1}{2} \<\A' \^ \A\>_{\g} + \B_{0}$ to write $H'_{0}$ (\ref{eq_H'0jinak}) in the form
\begin{equation} \label{eq_H'0usingB}
H'_{0} = d\B - \frac{1}{2} \sC_{3}(\A'). 
\end{equation}
We see that the new field $\B$ must transform non-trivially under the gauge transformations:
\begin{equation}
d\~\B = d\B + \frac{1}{2} d \< \A' \^ g^{\ast}(\theta^{R}) \>_{\g} + g^{\ast}(\eta).
\end{equation}
As $U$ is assumed contractible, for each $g: U \rightarrow G$, we can solve the equation $g^{\ast}(\eta) = d(\xi_{g})$ for some $\xi_{g} \in \Omega^{2}(U)$, which allows us to write the transformation rule for the field $\B$ itself:
\begin{equation} \label{eq_Bgauge}
\~\B = \B + \frac{1}{2} \< \A' \^ g^{\ast}(\theta^{R}) \>_{\g} + \xi_{g}. 
\end{equation}
For the infinitesimal gauge transformation, we obtain a little bit simpler rule $\~\B = \B + \frac{1}{2} \< \A', \lambda\>_{\g}$.

We can now consider the following action functional:
\begin{equation}
S'_{0}[g_{0},\B,\phi_{0},\A'] = \int_{M} e^{-2\phi_{0}} \{ \RS(g_{0}) + \frac{1}{2} \dal \F', \F' \dar - \frac{1}{2} \<\H ,\H\>_{g_{0}} + 4 \< d\phi_{0},d\phi_{0}\>_{g_{0}} - 2 \Lambda_{0} \} \cdot d \vol_{g_{0}},
\end{equation}
where $\H \in \Omega^{3}(M)$ is \textit{defined by} (\ref{eq_H'0usingB}): $\H = d\B - \frac{1}{2} \sC_{3}(\A')$, and $\B$ is any local $2$-form transforming as (\ref{eq_Bgauge}) under the gauge transformations. Note that the only difference between $S_{0}$ and $S'_{0}$ is the fact that our original $\B = B_{0} + \frac{1}{2}\< \A' \^ \A\>_{\g} + \B_{0}$ explicitly depends on $\A'$, and $\B$ in $S'_{0}$ is assumed to be an independent dynamical variable. 

However, this is not an issue, as we will now demonstrate. Consider a variation $\A' \mapsto \A' + \epsilon \cdot \C'$, where $\C'$ vanishes on $\partial M$. Then $H'_{0}$ given by (\ref{eq_H'0jinak}) changes as $H'_{0} \mapsto H'_{0} + \epsilon \cdot \{ \delta \H + \frac{1}{2}  d \< \C' \^ \A\>_{\g} \}$, where $\delta \H$ denotes the first order variation of $-\frac{1}{2} \sC_{3}(\A')$. But the corresponding term in action then changes as
\begin{equation}
e^{-2\phi_{0}} \< H'_{0},H'_{0}\>_{g_{0}} \mapsto e^{-2\phi_{0}} \big( \<H'_{0},H'_{0}\>_{g_{0}} + 2 \epsilon \cdot \<H'_{0}, \delta \H \>_{g_{0}} + \epsilon \cdot \< H'_{0}, d \< \C' \^ \A\>_{\g} \>_{g_{0}} \big).
\end{equation}
The only term arising from the additional dependence of $\B$ on $\A'$ is the last one. But
\begin{equation}
e^{-2 \phi_{0}} \< H'_{0}, d \< \C' \^ \A\>_{\g} \>_{g_{0}} = \< \delta ( e^{-2 \phi_{0}} H'_{0}), \< \C' \^ \A\>_{\g} \>_{g_{0}} = 0,
\end{equation}
where we have used the equation of motion for $B_{0}$ (or $\B$) in the last step. We conclude that it is safe to forget about the variation of $\B$ with $\A'$ and declare its independence. This shows that the actions $S'_{0}$ and $S_{0}$ are classically equivalent. 
\subsection{Relation to heterotic supergravity}
Now, for a particular choice of the principal bundle $P$, one can relate $S'_{0}$ to so called \textbf{heterotic supergravity}. This is a theory obtained as a low-energy limit of the heterotic string. See e.g. \cite{Bergshoeff:1989de} and \cite{2010arXiv1010.4031M}. Without the spin part of the principal bundle, the corresponding effective action is sometimes called Einstein-Yang-Mills gravity, see \cite{Bergshoeff:1981um,Chapline:1982ww}.

Assume that $M$ is a ten-dimensional spin manifold with the Lorentzian metric $g_{0}$ of signature $(9,1)$. Let $\pi_{1}: P_{\textbf{YM}} \rightarrow M$ be any principal bundle with a compact structure group $K$, equal to either $\gSO(32)$ or $\gE(8) \times \gE(8)$. Next, let $\pi_{2}: P_{\textbf{Spin}} \rightarrow M$ be the spinor principal bundle corresponding to the spin structure on $M$, with the structure group $\text{Spin}(9,1)$. We can now consider the fibered product principal bundle $\pi: P \rightarrow M$, where $P = P_{\textbf{YM}} \times_{M} P_{\textbf{Spin}}$ is a principal $(K \times \text{Spin}(9,1))$-bundle. That is $G = K \times \text{Spin}(9,1)$. 

The corresponding Lie algebra $\g$ can be written as a direct product $\g = \ak \oplus \aso(9,1)$. It follows that in this case every connection $A \in \Omega^{1}(P,\g)$ decomposes uniquely as 
\begin{equation}
A = ( p_{1}^{\ast}(A_{\textbf{YM}}), p_{2}^{\ast}(A_{\textbf{Spin}})),
\end{equation}
where $p_{1}: P \rightarrow P_{\mathbf{YM}}$, $p_{2}: P \rightarrow P_{\mathbf{Spin}}$ are the canonical projections, and $A_{\mathbf{YM}} \in \Omega^{1}(P_{\mathbf{YM}}, \ak)$ and $A_{\mathbf{Spin}} \in \Omega^{1}(P_{\mathbf{Spin}}, \aso(9,1))$ are the connections on the respective principal bundles. 

Now, consider the question of rescaling of the invariant form $c = \<\cdot,\cdot\>_{\g}$. One has to carefully keep track of changes in the proof of Theorem \ref{thm_main}. Let $c' = \lambda c$ for non-zero real scalar $\lambda \in \R$. The expressions (\ref{eq_RS'1}, \ref{eq_RS'2}) change almost as expected, that is $c$ is everywhere just replaced by $c'$. Except for the scalars $\frac{1}{6} \dim{\g}$ which change to $\frac{1}{6 \lambda} \dim{\g}$. The same thing happens in the relations (\ref{eq_RSrel}). For $\g = \ak \oplus \aso(9,1)$, we have $\<\cdot,\cdot\>_{\g} = \<\cdot,\cdot\>_{\ak} + \<\cdot,\cdot\>_{\aso}$, where $\<\cdot,\cdot\>_{\ak}$ and $\<x,y\>_{\aso} = 8 \Tr(xy)$ are the Killing forms on $\ak$ and $\aso(9,1)$, respectively. We can then take a more adventurous linear combination $\<\cdot,\cdot\>_{\g} = \lambda_{1} \<\cdot,\cdot\>_{\ak} + \lambda_{2} \<\cdot,\cdot\>_{\aso}$. The resulting scalar is then $\frac{1}{6}( \frac{1}{\lambda_{1}} \dim{\ak} + \frac{1}{\lambda_{2}} \dim{\aso(9,1)})$. To obtain the heterotic supergravity, one chooses $(\lambda_{1},\lambda_{2}) = (-\alpha',\alpha')$, where $\alpha'$ is the usual string parameter corresponding to its tension.  The relation of cosmological constants changes to 
\begin{equation}
\Lambda = \Lambda_{0} + \frac{1}{6 \alpha'}( \dim{\aso(9,1)} - \dim{\ak}) = \Lambda_{0} + \frac{1}{6 \alpha'}( 45 - 496 ).
\end{equation}
The adjoint bundle $\g_{P}$ of $P$ splits as $\g_{P} = \ak_{P_{\mathbf{YM}}} \oplus \aso(9,1)_{P_{\mathbf{Spin}}}$, and the curvature $2$-form $F' \in \Omega^{2}(P,\g_{P})$ thus decomposes accordingly as $F' = (F'_{\mathbf{YM}}, F'_{\mathbf{Spin}})$, and $\F' = \F'_{\mathbf{YM}} + \F'_{\mathbf{Spin}}$ for the local connection $2$-forms. Clearly, the connection $A_{\mathbf{Spin}}$ corresponds to some \textit{spin connection} $\cD^{T}$ on $M$. The kinetic term then decomposes as 
\begin{equation}
\frac{1}{2} \dal \F', \F' \dar = \frac{\alpha'}{2} \{ \dal \F'_{\mathbf{Spin}}, \F'_{\mathbf{Spin}} \dar_{\aso} - \dal \F'_{\mathbf{YM}}, \F'_{\mathbf{YM}} \dar_{\ak} \}.
\end{equation}
Note that in terms of $\cD^{T}$, one finds $\dal \F'_{\mathbf{Spin}}, \F'_{\mathbf{Spin}} \dar_{\aso} = 4 K(\cD^{T})$, where $K(\cD^{T})$ denotes the Kretschmann scalar of the metric connection $\cD^{T}$. The $3$-form $\H$ in $S'_{0}$ can be now written as 
\begin{equation}
\H = d\B + \frac{\alpha'}{2}\{ \sC_{3}(\A'_{\mathbf{YM}}) - \sC_{3}(\A'_{\mathbf{Spin}}) \},
\end{equation}
and thanks to (\ref{eq_pontryiaginvanishes}) it is subject to the so called \textbf{anomaly cancellation condition}:
\begin{equation}
d\H = \frac{\alpha'}{2} \{ \< \F'_{\mathbf{YM}} \^ \F'_{\mathbf{YM}} \>_{\ak} - \< \F'_{\mathbf{Spin}} \^ \F'_{\mathbf{Spin}} \>_{\aso} \}.
\end{equation}
In other words, we find the relation $p_{1}( P_{\mathbf{Spin}}) = p_{1}( P_{\mathbf{YM}})$ of the Pontriyagin classes.
\section*{Acknowledgment}
First and foremost, I would like to thank Braňo Jurčo for his long-time support and ideas allowing me to write this paper. It is a pleasure to thank also Vít Tuček, Urs Schreiber and Pavol Ševera for helpful discussions.
This research was supported by RVO: 67985840, and I would like to thank the Max Planck Institute for Mathematics in Bonn for hospitality. 
\newpage
\bibliography{bib} 
\end{document}